\xpretocmd{\algorithm}{\hsize=\linewidth}{}{}
\newcommand{\later}[1]{{}}
\newcommand{\old}[1]{{}}
\long\def\ignore#1{}
\newcommand{\curly}{\mathrel{\leadsto}}
\newtheorem{theorem}{Theorem}
\newtheorem{lemma}{Lemma}
\newtheorem{corollary}{Corollary}
\providecommand{\keywords}[1]{\textbf{{Keywords ---}} #1}
\begin{document}
\title{Engineering an algorithm for constructing low-stretch geometric graphs with near-greedy average-degrees\thanks{Research was supported by the NSF Award CCF-1947887.}}

\author{%
	FNU Shariful\\ 
	\small School of Computing\\
	\footnotesize University of North Florida, USA\\
	\footnotesize\tt \faicon{envelope}\texttt{ n01479223@unf.edu}\\
	\and
	Justin Weathers\\   
	\footnotesize School of Computing\\
	\footnotesize University of North Florida, USA\\
	\footnotesize\tt \faicon{envelope}\texttt{ n01501509@unf.edu}
	\and
	Anirban Ghosh\\      
	\footnotesize School of Computing\\
	\footnotesize University of North Florida, USA\\
	\footnotesize\tt \faicon{envelope}\texttt{ anirban.ghosh@unf.edu}\\
	\and
	Giri Narasimhan\\      
	\footnotesize  Knight Foundation School of Computing and Information Sciences\\
	\footnotesize Florida International University, USA\\
	\footnotesize\tt \faicon{envelope}{ giri@fiu.edu}
		\and
}

\date{}
\maketitle

\vspace{-35pt}

	\begin{abstract}
{\footnotesize We design and engineer \textsc{Fast-Sparse-Spanner}, a simple and practical (fast and memory-efficient) algorithm  for constructing
sparse low stretch-factor geometric graphs on large pointsets in the plane. To our knowledge, this is the
first practical algorithm to construct fast low stretch-factor graphs on large pointsets with average-degrees (hence, the number of edges) competitive with that of 
greedy-spanners, the  sparsest known class of Euclidean geometric spanners.

		To evaluate our implementation in terms of computation
		speed, memory usage,
		and quality of output, we performed extensive experiments with 
		synthetic and real-world pointsets, and by comparing it to our
		closest competitor \textsc{Bucketing}, the fastest known greedy-spanner algorithm for pointsets in the plane,
		devised by Alewijnse et al. (Algorithmica, 2017). 
	We always found that \textsc{Fast-Sparse-Spanner} generated near-greedy $t$-spanners while being fast and memory-efficient. 
	Our experiment with constructing a $1.1$-spanner on a large
	synthetic pointset with $128K$ points uniformly distributed within a
	square shows more than a $41$-fold speedup with roughly a third
	of the memory usage of that of \textsc{Bucketing}, but with only a 3$\%$
	increase in the average-degree of the resulting graph. When ran on a pointset with a million points drawn from the same distribution, we observed a $130$-fold speedup, with roughly a fourth
	of the memory usage of that of \textsc{Bucketing}, and just a $6\%$ increase in the average-degree. 
		In terms of diameter, the graphs generated by \textsc{Fast-Sparse-Spanner} beat greedy-spanners in most cases (have substantially lower diameter) while maintaining near-greedy average-degree. Further, our algorithm can be easily parallelized to take advantage of parallel  environments.
		
		As a byproduct of our research, we design and engineer \textsc{Fast-Stretch-Factor}, a practical parallelizable algorithm that can
		measure the stretch-factor of any graph generated by \textsc{Fast-Sparse-Spanner}.
		Our experiments show 
		that it is much faster than the naive Dijkstra-based stretch-factor measurement algorithm. 

We share the implementations via \textsf{GitHub} for broader uses and future research.

\textbf{GitHub repository.} \url{https://github.com/ghoshanirban/FSS}

			\keywords{geometric graph, sparse graph, geometric spanner, stretch-factor, algorithm engineering, experimental algorithmics}
		}
	\end{abstract}
	
	\vspace{-20pt}
	
	\section{Introduction}\label{sec:intro}
	
	Let $G$ be the complete Euclidean graph on a given set $P$ of $n$ points embedded in the plane. A
	\emph{geometric $t$-spanner}, or simply a \emph{$t$-spanner}, on $P$ is a geometric graph $H := (P, E)$, a subgraph of $G$ such that for every pair of
	points $p,q \in P$, the Euclidean length of a shortest path  between them in $H$
	is at most $t$ times their Euclidean distance $|pq|$, for some $t \geq 1$. The parameter $t$ is referred to as the \emph{stretch-factor} of $H$. 
	Thus, the subgraph $H$ approximately preserves pairwise shortest path distances for all point pairs in $P$. 
	The complete graph $G$ itself is a $1$-spanner
	with $\Theta(n^2)$ edges. Clearly, for large values of $n$, complete graphs are unsuitable for practical purposes because of their sheer size. So, the main focus in the research of geometric spanners is to construct subgraphs having $o(n^2)$ edges and guarantee various structural properties. Refer to \cite{narasimhan2007geometric,bose2013plane} for an introduction to geometric spanners and~\cite{gao2001geometric,czumaj2003fault,marble2013asymptotically,alzoubi2003geometric,russel2005exploring,rao1998approximating} for their uses. In this work, our objective is to engineer a practical algorithm that can construct low stretch-factor geometric spanners with low average-degrees\footnote{The \emph{average-degree} of a graph $H:=(V,E)$ is defined as $2|E|/|V|$.}.  
	
	Despite intensive theoretical studies on geometric spanners, how to construct sparse geometric spanners fast in practice on large pointsets with low average-degrees and low stretch-factors (less than $2$) remains unknown. Such spanners are used in real-world applications, whereas spanners with large stretch-factors are mainly of theoretical interest in computational geometry. In theoretical studies of spanners, bounds on the number of edges are typically expressed using asymptotic notations. As a result, owing to large hiding constants, many well-known algorithms are found to produce poor-quality spanners in practice, although novel asymptotic bounds have been obtained to prove them efficient. Farshi and Gudmundsson~\cite{farshi2010experimental} were the first to perform a meticulous experimental study on geometric spanners. They found that the popular spanner algorithms suffer from at least one of the following limitations: (i) slow on large pointsets, (ii) memory-intensive, and (iii) places a high number of edges. Their study revealed a serious gap between the theory and practice of geometric spanners. 
	
	Xia showed that $L_2$-Delaunay triangulations are $1.998$-spanners~\cite{xia2013stretch} with at most $3n$ edges. Practical algorithms for constructing such triangulations exist; see~\cite{toth2017handbook}.  
	There is a separate family of fast spanner construction algorithms that can create sparse bounded-degree plane spanners by using Delaunay triangulation (not necessarily using the $L_2$-norm) as the starting point; see~\cite{anderson2022bounded, anderson2021interactive, bose2005constructing, dumitrescu2016lower, dumitrescu2016lattice,kanj2017degree}. Unfortunately, in those algorithms, $t$ cannot be set arbitrarily.

		Experiments and theoretical
	studies~\cite{le2022truly,le2022greedy,sigurd2004construction,soares1994approximating,smid2009weak,farshi2010experimental,evans2023path,eppstein2021edge,borradaile2019greedy,alewijnse2015computing,filtser2020greedy}
	have shown that greedy-spanners (a class of spanners whose
	construction uses a greedy strategy), originally proposed by
	Alth{\"o}fer et al.~\cite{althofer1993sparse} and Chandra et
	al.~\cite{chandra5new},  are unbeatable, especially when one desires to minimize
	the number of edges  (alternatively, average-degree), for any desired stretch-factor. See
	Algorithm~\ref{alg:greedy} for a description of the
	original greedy algorithm, popularly known as {\sc Path-Greedy} 
	and Fig.~\ref{fig:greedy} for greedy-spanner samples for various values of $t$. {\sc Path-Greedy}  is a generalization of the folkore Kruskal’s minimum spanning tree algorithm and produces sparse spanners. A demonstration applet for the algorithm can be found in~\cite{farshi2016visualization}.
	
		\begin{algorithm}[H]
		Sort and store the $\binom{n}{2}$ pairs of distinct points in non-decreasing order of their distances  in  a list $L$\;
		Let $H$ be an empty graph on $P$\; 
		\For{\upshape  each edge $\{p_i,p_j\} \in L$}{
			\If{\upshape the length of a shortest path in $H$ between $p_i,p_j$ is greater than $t \cdot |p_ip_j|$}{
				Place the edge $\{p_i,p_j\}$ in $H$\;
			}
		}
		\Return $H$;
		\caption{\textsc{Path-Greedy}$(P,t> 1)$}
		\label{alg:greedy}
	\end{algorithm}
	
	\begin{figure}[h]
		
		\centering
		\includegraphics[scale=0.35]{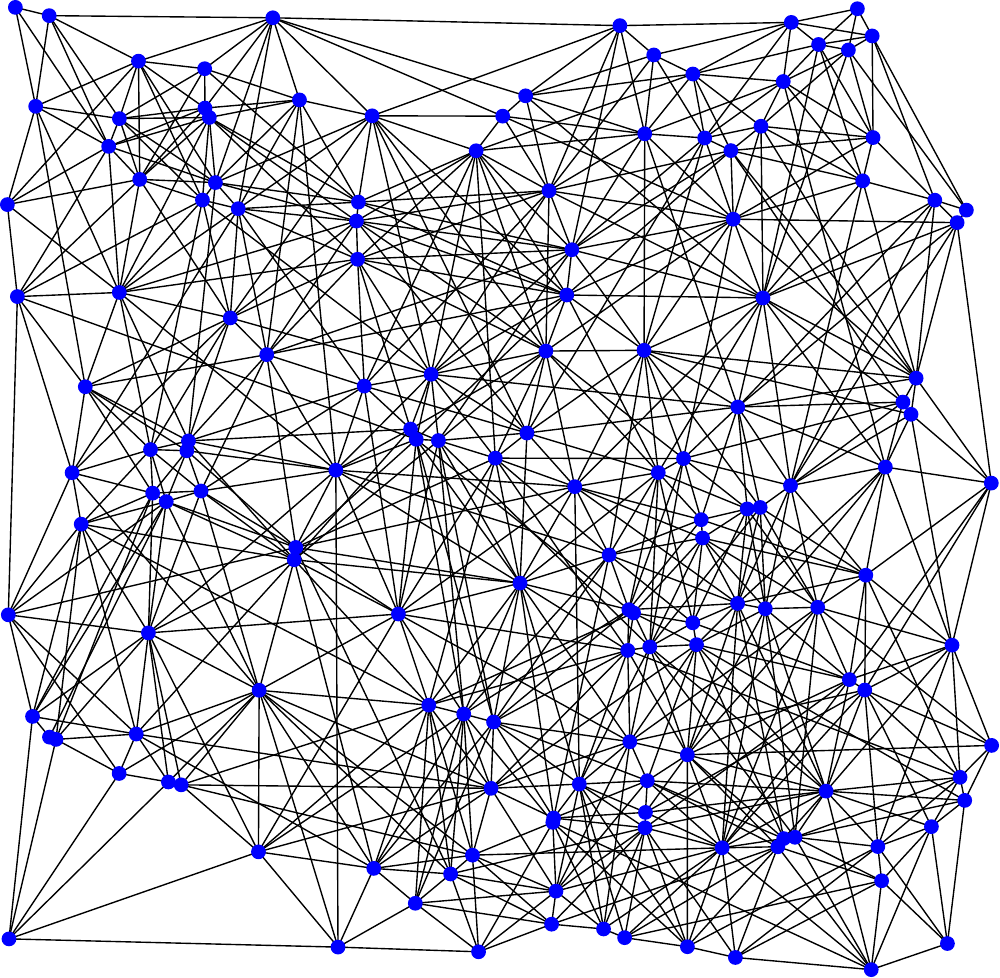} \hspace{5mm}
		\includegraphics[scale=0.35]{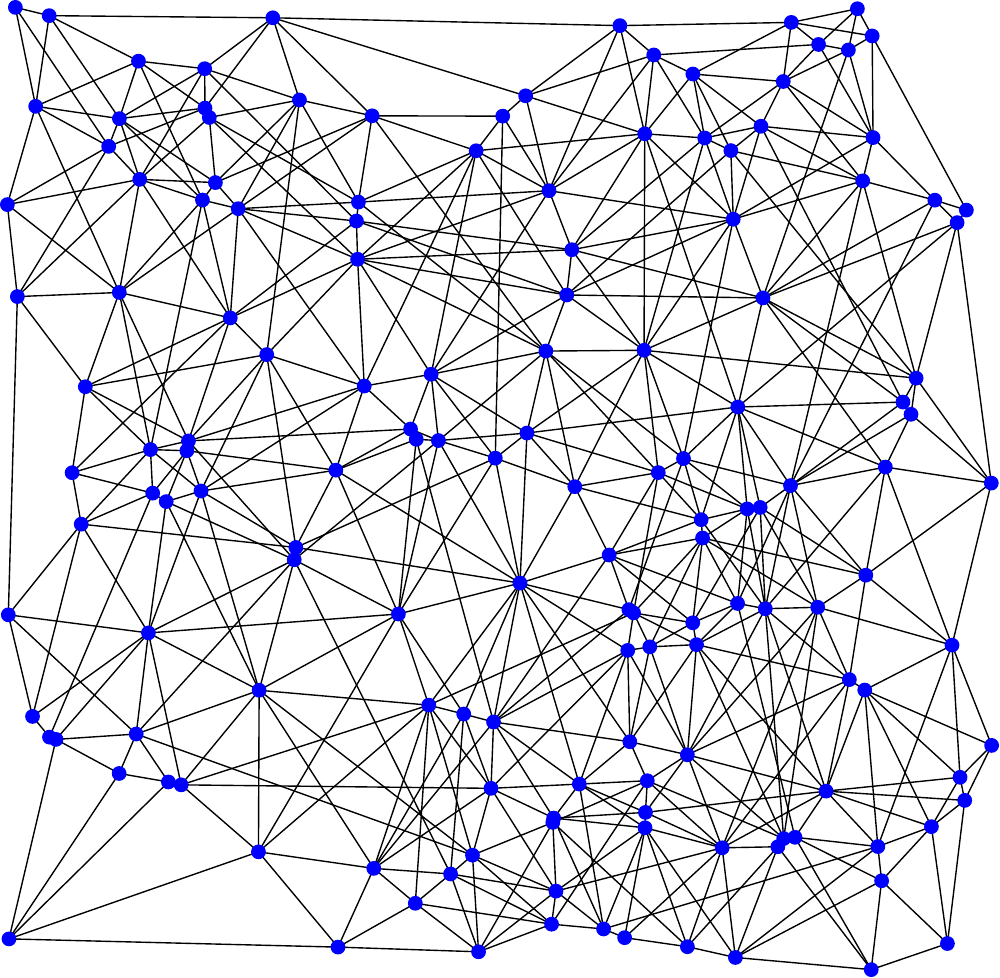} \hspace{5mm}
		\includegraphics[scale=0.35]{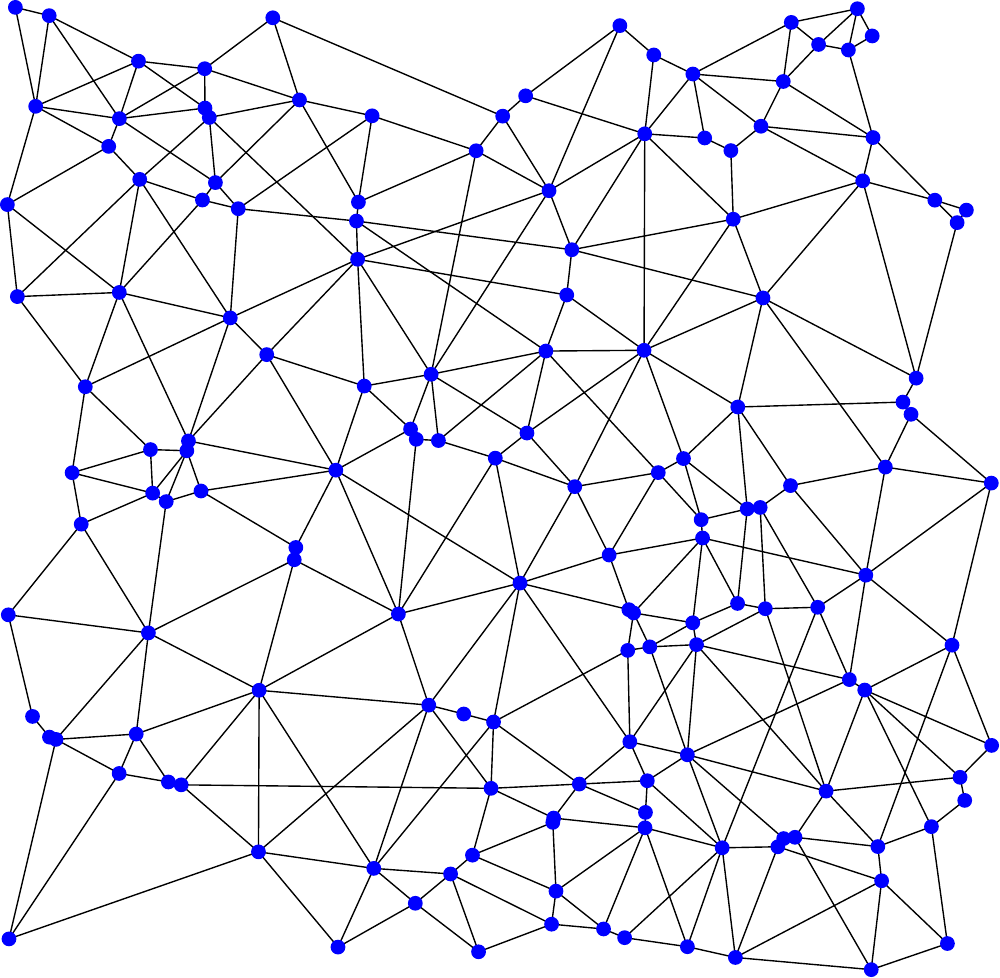} \hspace{5mm}
		\includegraphics[scale=0.35]{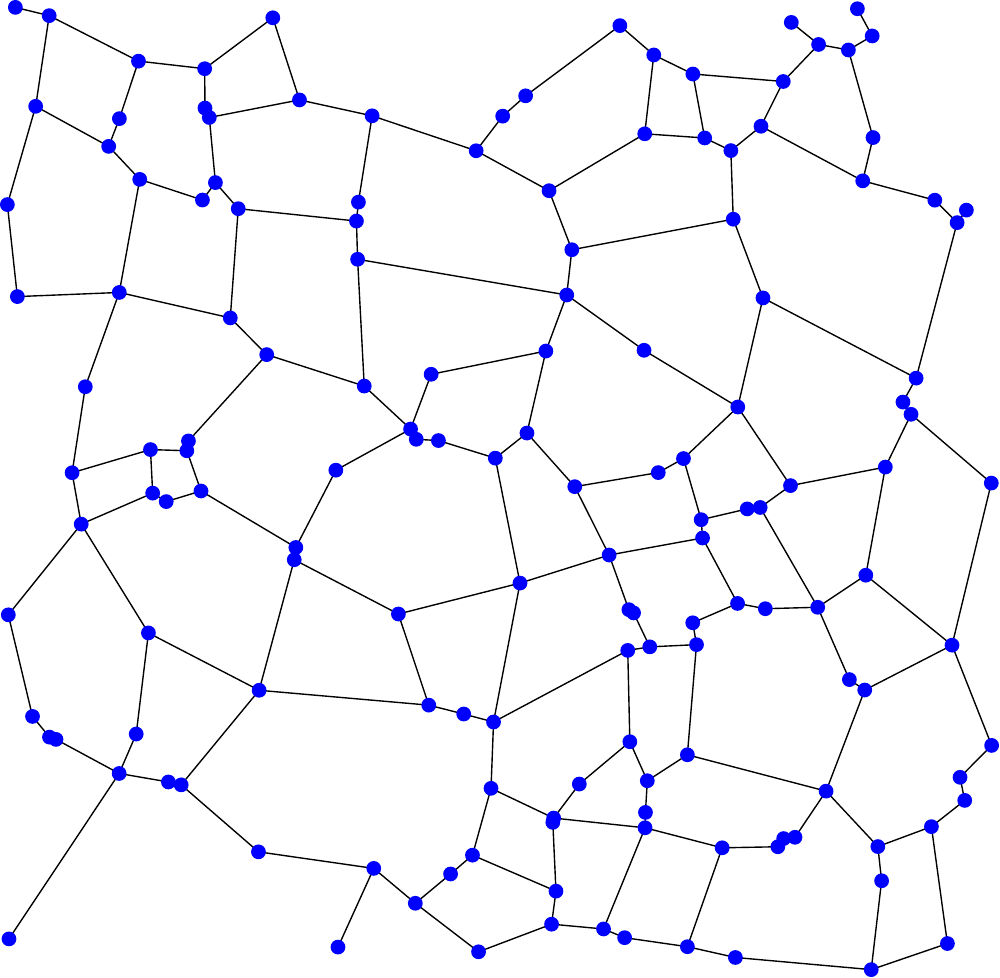}

		\caption{Four greedy-spanners on a set of $150$ points in the plane are shown, having stretch-factors of $1.05, 1.1, 1.25, 2$, respectively. In this case, the complete graph $G$ has $\binom{150}{2}  = 11,475$ edges.}
		\label{fig:greedy}
	\end{figure}

	The first attempt to construct near-greedy sparse spanners fast
	was made by Das and Narasimhan back in $1997$~\cite{das1997fast}. Their algorithm \textsc{Approximate-Greedy} runs in $O(n\log^2 n)$-time and is thus substantially faster than  \textsc{Path-Greedy}, that runs in $O(n^3\log n)$ time. Unfortunately, it was observed in~\cite{farshi2010experimental} that, in practice, the average-degrees of the graphs produced by \textsc{Approximate-Greedy} are far from that of the actual greedy-spanners. A slightly faster $O(n\log n)$-time algorithm was designed in~\cite{gudmundsson2002fast} by Gudmundsson, Levcopoulos, and Narasimhan to approximate greedy-spanners. Since the algorithm is quite involved, we did not implement it for comparison purposes. We observe that the algorithm is designed along the lines of \textsc{Approximate-Greedy}, and therefore, we believe that it will  unlikely generate  spanners sparser than the ones generated by   \textsc{Approximate-Greedy}. 

	Farshi and Gudmundsson~\cite{farshi2010experimental} presented a simple modification of the original   \textsc{Path-Greedy} algorithm to
	make it faster in practice by reducing the number of single-source
	shortest path computations using an additional matrix; see
	Algorithm~\ref{alg:fg-greedy}. It is sometimes referred to as
	\textsc{FG-Greedy}. 	But in theory, both the \textsc{Path-Greedy} and \textsc{FG-Greedy} algorithms run in $O(n^3\log n)$ time.  Further, due to the use of $\binom{n}{2}$ edges in the
	{\sc Path-Greedy} and {\sc FG-Greedy} algorithms (see Step 1), 
	$\Theta(n^2)$ extra space is needed for their execution. Consequently,
	they are unusable for large pointsets.

		\begin{algorithm}[H]
		Sort and store the $\binom{n}{2}$ pairs of distinct points in non-decreasing order of their distances  in  a list $L$\;
		Let $H$ be an empty graph on $P$\; 
		\For{\upshape $(p_i,p_j) \in P \times P$} {
			$\texttt{weight}(p_i,p_j) = \infty$;
		}
		
		\For{\upshape each edge $\{p_i,p_j\} \in L$}{
			\If{\upshape \texttt{weight}$(p_i,p_j) > t \cdot |p_ip_j|$}{
				Compute single-source shortest path with source $p_i$ in $H$\; 
				
				\For{\upshape $q \in P$} {
					Update \texttt{weight}$(p_i,q)$ and \texttt{weight}$(q,p_i)$ to the weight of the shortest path found between $p_i$ and $q$ in the previous step\;
				}

				\If{\upshape\texttt{weight}$(p_i,p_j) > t \cdot |p_ip_j|$} {
					Place the edge $\{p_i,p_j\}$ in $H$\;
				}

			}
		}
		\Return $H$;
		\caption{\textsc{FG-Greedy}$(P,t> 1)$}
		\label{alg:fg-greedy}
	\end{algorithm}

		In the same paper~\cite{farshi2010experimental}, Farshi and Gudmundsson experimentally
	found that for low values of $t$
	(for example $t \le 1.25$), 
	greedy-spanners always tend to be considerably sparser than other
	popular kinds of spanners such as the
	$\Theta$-\textsc{Graphs}~\cite{bose2004ordered,
		clarkson1987approximation,keil1988approximating},
	WSPD-spanners~\cite{callahan1993faster},
	\textsc{Sink}-spanners~\cite{arya1999dynamic}, \textsc{Skip-list}-spanners~\cite{arya1999dynamic}, and \textsc{Approximate-Greedy}-spanners~\cite{das1997fast}.

	  	Bose et al.~\cite{bose2010computing} devised a faster algorithm for
	constructing exact greedy-spanners in $O(n^2\log n)$ time, but
	unfortunately, like \textsc{Path-Greedy} and \textsc{FG-Greedy}, their
	algorithm uses $\Theta(n^2)$ space as well. They also showed that
	\textsc{FG-Greedy} 
	runs in $\Theta(n^3\log n)$ time since previously it was suspected
	that it runs in $o(n^3\log n)$ time.  
			
				Alewijnse et al.~\cite{alewijnse2015computing} presented an
	$O(n^2\log^2 n)$-time algorithm for constructing greedy-spanners.
	Although the algorithm is asymptotically slower than the algorithm
	designed by Bose et al.~\cite{bose2010computing}, it uses linear
	space. It was a significant breakthrough in this line of research. A
	faster algorithm, named \textsc{Bucketing}, was presented by Alewijnse et al.
	in \cite{alewijnse2017distribution} that runs in $O(n\log^2 n \log^2
	\log n)$ expected time. They observe that greedy-spanners are made up
	of `short' and `long' edges.
	Short edges are identified using a bucketing scheme, while long edges
	are computed using a WSPD (well-separated pair decomposition).
	They also experimentally showed that their algorithm is considerably
	faster than the one proposed in~\cite{alewijnse2015computing} and uses
	a reasonable amount of memory, making it the best greedy-spanner
	algorithm so far. For this reason, we have excluded the algorithm proposed in~\cite{alewijnse2015computing} in our experiments.
	Although being capable of producing arguably the best quality class of geometric spanners, we found that \textsc{Bucketing}
	becomes very compute-intensive when $t$ approaches $1$ (such $t$-values are more useful in practice), 
	making it slow on large pointsets.  
	For instance, it takes around $100$ minutes on a fast modern-day
	computer equipped with an \textsf{i9-12900K} processor, to compute a $1.1$-spanner on a $128K$-element pointset uniformly
	distributed inside a 
	square. For a million points drawn from the same distribution, it takes around five days on the same machine to run to completion, making it impractical for large pointsets.

	Recently, Abu-Affash et al.~\cite{abu2022delta} devised a new
	algorithm named $\delta$-\textsc{Greedy} that can produce greedy-like spanners
	in $O(n^2\log n)$ time. When $\delta$, a parameter used in the algorithm, is set to $t$, it
	outputs spanners identical to that of \textsc{Path-Greedy}. 
	Although experimentally shown to be
	speedy in practice,
	it remains impractical 
	for large values of $n$ 
	since it uses $\Theta(n^2)$ extra space. In fact, Step 1 in their algorithm is precisely the same as the one in \textsc{Path-Greedy} and \textsc{FG-Greedy}, contributing to quadratic space complexity. We found that $\delta$-\textsc{Greedy}, \textsc{Path-Greedy},  \textsc{FG-Greedy}, and the $O(n^2\log n)$-time algorithm presented by Bose et al.~\cite{bose2010computing} are impractical and run out of memory (on a machine equipped with $32$ GB of main memory) when $n$ is large. For instance, when $n=100K$, to store the complete graph edges in a list, we need $2\cdot \binom{100K}{2} \cdot 4 \text{ bytes} = 40~\text{GB}$ of main memory, assuming every edge is represented using a pair of $4$-byte integers. For this reason, we have excluded the above four algorithms from our experiments.
	\subsection{Our contributions}
	\begin{enumerate}\itemsep0pt
	
			\item  Motivated by the
	lack of a fast near-greedy-spanner construction algorithm for handling large pointsets,
	we have designed and engineered a fast and memory-efficient  construction
	algorithm that we have named \textsc{Fast-Sparse-Spanner} (see
	Section~\ref{sec:fss}). The algorithm is simple and intuitive
        and has been designed by observing how sparse spanners look in
        practice as opposed to the traditional approaches where
        obtaining theoretical bounds is the main focus. Of course,
        existing theory on spanners have been leveraged to design the
        algorithm.
        While, theoretically, the algorithm is not guaranteed to
        produce $t$-spanners, we conjecture in Section \ref{sec:argue}
        that for uniform distributions, its output is a $t$-spanner
        with high probability. 
In our experiments with more than $25K$ trials using synthetic and real-world pointsets of varied sizes, we
	found that it always produced fast, near-greedy sparse 
	$t$-spanners, even for $t$ as low as $1.05$. This inspired us to name our algorithm \textsc{Fast-Sparse-Spanner}. In real-world applications, sometimes fast and memory-efficient constructions of sparse geometric graphs is a necessity even if their actual stretch-factors are close the desired stretch-factors. Thus, our algorithm is still useful even if it misses stretch-factors in some cases (in our experiments, we never found so). We believe that if it ever misses the target stretch-factor, the actual stretch-factor of the generated geometric graph will not be far away from the target. \textsc{Fast-Sparse-Spanner} can  easily  leverage modern-day parallel environments (multi-core CPUs, for instance) for scalability.
	
	Using a preliminary experiment by comparing it with $\Theta$-\textsc{Graphs}, WSPD-spanners, \textsc{Sink}-spanners, \textsc{Skip-list}-spanners, \textsc{Approximate-Greedy}-spanners, and the spanners produced by \textsc{Bucketing}, we show that \textsc{Bucketing} is our closest competitor when average-degree is the main concern. Thereafter, we  compare
	\textsc{Fast-Sparse-Spanner} with \textsc{Bucketing} and show real-world efficacy of our algorithm by running it on large synthetic and real-world pointsets.
	\textsc{Fast-Sparse-Spanner} can construct near-greedy sized geometric graphs and is considerably 
	faster  than \textsc{Bucketing}, especially for low stretch-factors. For instance, on our machines, it could construct a $1.1$-spanner on a $128K$-element uniformly distributed pointset inside a square within $2.5$ minutes and used $65\%$ less extra memory compared to \textsc{Bucketing} while placing only $3\%$ extra edges. For a million points drawn from the same distribution, it finished its execution in around $55$ minutes using around $1.3$ GB of main memory, making it remarkably faster than \textsc{Bucketing} which took around 5 days and used $5.8$ GB of main memory. In this case, we observed just a $6\%$ increase in the number of edges compared to the graph produced \textsc{Bucketing}. 
	
	In most cases, we found that the spanners generated by our algorithm have substantially lower diameters than the greedy-spanners while maintaining near-greedy average-degrees. For instance,  $1.1$-spanners generated by \textsc{Fast-Sparse-Spanner} on $128K$-element uniformly distributed pointsets have diameters around $25$, whereas greedy-spanners have around $200$. 

	We compare the spanners generated by our algorithm to the ones produced by \textsc{Bucketing} in terms of  
	average-degree and diameter\footnote{The \textit{diameter} of a graph $G$ is the length (number of edges) of the longest shortest path among all vertex pairs in $G$.} (see Section~\ref{sec:exp}).   For broader uses and future research in this direction, we share our engineered \textsf{C}\texttt{++} code via \textsf{GitHub}.  

	\item  To our knowledge, there are no practical algorithms for measuring the
	stretch-factors of large spanner networks. However, an attempt has been made by Narasimhan and Smid to approximate stretch-factors of geometric spanners in~\cite{narasimhan2000approximating}. Chen et al.~\cite{cheng2012approximating} devised algorithms for approximating average stretch-factors of geometric spanners. Stretch-factor measurement algorithms for special types of geometric graphs such as paths, cycles,  trees, and plane graphs, can be found in~\cite{agarwal2008computing,wulff2010computing,anderson2022bounded}. Klein et al.~\cite{klein2009dilation} devised algorithms for reporting all pairs of vertices
	whose stretch factor is at least some given value $t$, when the input graph is a geometric path, tree, or cycle.
	Distance oracles can be used to estimate stretch-factors of geometric graphs. In this regard, Gudmundsson et al.~\cite{gudmundsson2008approximate} presented distance oracles for geometric spanners. For a discussion on shortest path queries for static weighted graphs including distance oracles, refer to the survey~\cite{sommer2014shortest} by Sommer.
	Surprisingly, the only known easy-to-engineer stretch-factor measurement  algorithm that works for any type of Euclidean graph, runs in $O(n^2\log n + n|E|)$ time by running the folklore Dijkstra's algorithm from every node~\cite{narasimhan2000approximating}, making it very slow for large spanners. When the input graph is sparse, this approach runs in $O(n^2\log n)$ time. Use of the Floyd-Warshall algorithm is even more impractical since it uses $\Theta(n^2)$ extra space. Sub-cubic all-pairs shortest path (APSP) algorithms exist, but those are involved and are seemingly challenging to engineer; see the paper~\cite{chan2010more} by Chan and the references therein. A $O(n^{2.922})$-time stretch-factor measurement algorithm follows directly from the $O(n^{2.922})$-time APSP algorithm presented by Chan in the same paper for Euclidean graphs.

	In this work, by exploiting the construction of the spanners generated by \textsc{Fast-Sparse-Spanner}, we have designed a new practical algorithm named \textsc{Fast-Stretch-Factor} that can compute stretch-factor of any spanner generated by \textsc{Fast-Sparse-Spanner} (see Section~\ref{sec:sf}). Further, it is easily parallelizable. Our experiments found that it is substantially faster than the naive Dijkstra-based algorithm (see Section~\ref{sec:exp}). For instance, for a spanner constructed on a $1M$-element pointset uniformly distributed  inside a square and $t=1.1$, \textsc{Fast-Stretch-Factor} ran to completion in around $47$ minutes, whereas the Dijkstra-based algorithm took around $22$ hours when $4$ threads were used in both the cases. 

\end{enumerate}

	\subsection{Preliminaries and notations}
	
	Let $P$ and $Q$ be two finite pointsets and $s$ be a positive real number. We say that $P$ and $Q$ are \emph{well-separated} with respect to $s$, if there exist two congruent disjoint disks $D_P$ and $D_Q$, such that $D_P$ contains the  bounding-box of $P$, $D_Q$ contains the  bounding-box of $Q$, and the distance between $D_P$ and $D_Q$ is at least $s$ times the common radius of $D_P$ and $D_Q$. The quantity $s$ is referred to as the \emph{separation ratio} of the decomposition. 
	
	Using the above idea of well-separability, one can define a well-separated decomposition (WSPD) \cite{callahan1995decomposition}  of a pointset in the following way. Let $P$ be a set of $n$ points and $s$ be a positive real number. A WSPD for $P$ with respect to $s$ is a collection of pairs of non-empty subsets of $P$, $$\{A_1,B_1\},\{A_2,B_2\},\ldots,\{A_m,B_m\}$$ for some integer $m$ (referred to as the size of the {WSPD}) such that 
	for each $i$ with $1 \leq i \leq m$, $A_i$ and $B_i$ are well-separated with respect to $s$, and
	for any two distinct points $p,q \in P$, there is exactly one index $i$ with $1 \leq  i \leq m$, such that  $p \in A_i, q \in B_i$, or $p \in B_i, q \in A_i$. 
	
	Given a pointset $P$ and $t>1$, a WSPD $t$-spanner~\cite{callahan1993faster,narasimhan2007geometric,ghosh2022visualizing} on $P$ is constructed in the following way. Start with an empty graph $H$ on $P$. Let $s = 4(t+1)/(t-1)$. Construct a {WSPD} of $P$ with separation ratio $s$.
	For every pair $\{A_i,B_i\}$ of the decomposition, include the edge $\{a_i,b_i\}$ in $H$, where $a_i$ is an arbitrary point in $A_i$ and $b_i$ is an arbitrary point in $B_i$. 
	
	Given a pointset $P$ and a positive integer $k$, a \emph{region quad-tree}~\cite{finkel1974quad} (or simply, a \emph{quad-tree}) on $P$ is a $4$-ary tree in which every node has either four children or none and is constructed by partitioning the bounding-box $B$ of $P$ into four equal-sized quadrants and creating four corresponding children attached to the root of the tree. 
	For each child with more than $k$ points from $P$ inside its quadrant, recursively create a quad-tree rooted at the child by dividing its quadrant into four equal-sized sub-quadrants. With every leaf,  store the points in $P$ located inside it.
	
	A leaf of a quad-tree is said to be \emph{empty} if it does not contain any point in $P$.
	
	Let $H$ be an Euclidean graph on $P$ and  $u,v \in P$. A path $\rho(u,v)$ between $u,v$ in $H$ is a $t$-path for $u,v$ if the Euclidean length of the path $\rho(u,v)$ is at most $t$ times $|uv|$. It can be observed that $H$ is a $t$-spanner if and only if at least one $t$-path exists in $H$ for every pair of points in $P$.

	\section{A  new practical hybrid  algorithm }
	\label{sec:fss}
	
	It is well known 
	that greedy algorithms are found to be slow in
	practice on large pointsets, especially for low stretch-factors  but are
	best for generating extremely sparse spanners. 	This motivated us to  construct local greedy-spanners on small disjoint chunks of $P$ to reduce the overall runtime. 
	Then, we carefully and efficiently ``stitch'' (merge) 
	the local greedy-spanners 
	into one final graph using long and short edges. 
	In practice, we found that the number of edges needed to merge them is
	not too many. This gives us a practical geometric graph construction algorithm
	that is competitive with the greedy-spanner algorithms with respect to 
	the number of edges placed, while being fast and memory efficient. Our algorithm, 
	\textsc{Fast-Sparse-Spanner}  uses \textsc{FG-Greedy}, WSPD-spanner,
	and a layering strategy to incrementally compute $H$, 
	a subgraph of $G$ (the complete Euclidean graph on $P$). Refer to Algorithm~\ref{alg:fss}.
	
	\begin{algorithm}[H]
		Let $H$ be an empty graph on $P$\;
		
		Create a quad-tree $T$ on $P$ such that the size of every leaf is at most $k$\;

		$S = \emptyset$;
		
		\For{\upshape \textbf{each} non-empty leaf $\sigma \in T$}{
			$P_\sigma = P \cap \sigma$\;
			Construct a greedy-spanner $H_\sigma$ on $P_\sigma$ using \textsc{FG-Greedy$(P_\sigma,t)$} (Algorithm~\ref{alg:fg-greedy})\; 
			$H = H \cup H_\sigma$\;
			$S = S \cup \ell_{\sigma}$, where $\ell_\sigma$ is the leader point of $\sigma$\;
		}
		
		Create a WSPD $t'$-spanner $W$ on $S$ and add 
		the edges of $W$ in $H$\;

		Let $G_T$ be the dual graph on the leaves of $T$, where the edge $\{\sigma_i,\sigma_j\} \in G_T$ if and only if the leaves $\sigma_i,\sigma_j$ are neighbors\;

		\For{\upshape every non-empty leaf $\sigma \in T$ }{
			
			\For{\upshape every non-empty neighbor $\sigma'$ of $\sigma$ in $G_T$}{
				\If{\upshape$\sigma,\sigma'$ are both non-empty and have  not been merged yet}{
					$\textsc{Greedy-Merge}(\sigma,\sigma',t,H)$ (Algorithm~\ref{alg:greedymerge})\;
				}
			}
			
		}
		\For{\upshape $i=2$ to $h$}{ 		
			\For{\upshape \textbf{each} non-empty leaf $\sigma \in T$}{
				
				\For{\upshape \textbf{each} non-empty leaf $\sigma'$  that is $i$ hops away from $\sigma$ in $G_T$}{  
					\If{\upshape$\sigma,\sigma'$ are both non-empty and have not been merged yet}{
						$\textsc{Greedy-Merge-Light}(\sigma,\sigma',t,H)$ (Algorithm~\ref{alg:greedymergelight})\;
					}
				}
			}
		}
		
		\Return $H$;
		\caption{\textsc{Fast-Sparse-Spanner}$(P,t > 1,t'\geq t,k\in \mathbf{Z}^+,h \in \mathbf{Z}^+)$}
		\label{alg:fss}
	\end{algorithm}
	
	In Section~\ref{sec:exp}, we present precise values for the three parameters $t',k,h$ used in our experiments. 
	In what follows, we provide details on the steps. 
	
	\begin{itemize}\itemsep0pt
	\item \textit{Step 1.}  Create a region quad-tree $T$ on $P$ where the size of every leaf is at most $k$. Later, in our experiments, we fix a suitable value for $k$. The region quad-tree aids in partitioning $P$ into almost same-sized disjoint chunks. 
		
		\item \textit{Step 2.}  For every non-empty leaf $\sigma$ of $T$, we create a local greedy $t$-spanner $H_\sigma$ on $P_\sigma$ using \textsc{FG-Greedy} (Algorithm~\ref{alg:fg-greedy}), where $P_\sigma = P \cap \sigma$. All the edges of the local greedy-spanners are placed in $H$.   Further, for every leaf $\sigma \in T$, we find its \textit{leader} point $\ell_\sigma$, a point in $P_\sigma$ that is closest to the center of the bounding-box of $P_\sigma$. In the cases of ties, choose arbitrarily.
		
		\item \textit{Step 3.}  Let $S := \cup_{\sigma \in T}
		\ell_\sigma$. Create a WSPD-spanner $W$ on $S$
		with stretch-factor $t'\geq t$ using the $O(n\log n)$ time
		algorithm presented in~\cite{narasimhan2007geometric}; an 
		efficient implementation can also be found in~\cite{nz-geomst-01}.  Include the edges
		of $W$ in $H$. 
		This creates a strong network on the leaders of the
		non-empty leaves. Consequently, the non-empty leaves of $T$
		are now connected using the edges of $W$. See Fig.~\ref{fig:qt} (left) for an illustration.
		
		\begin{figure}[ht]
			\centering
			\includegraphics[scale=0.35]{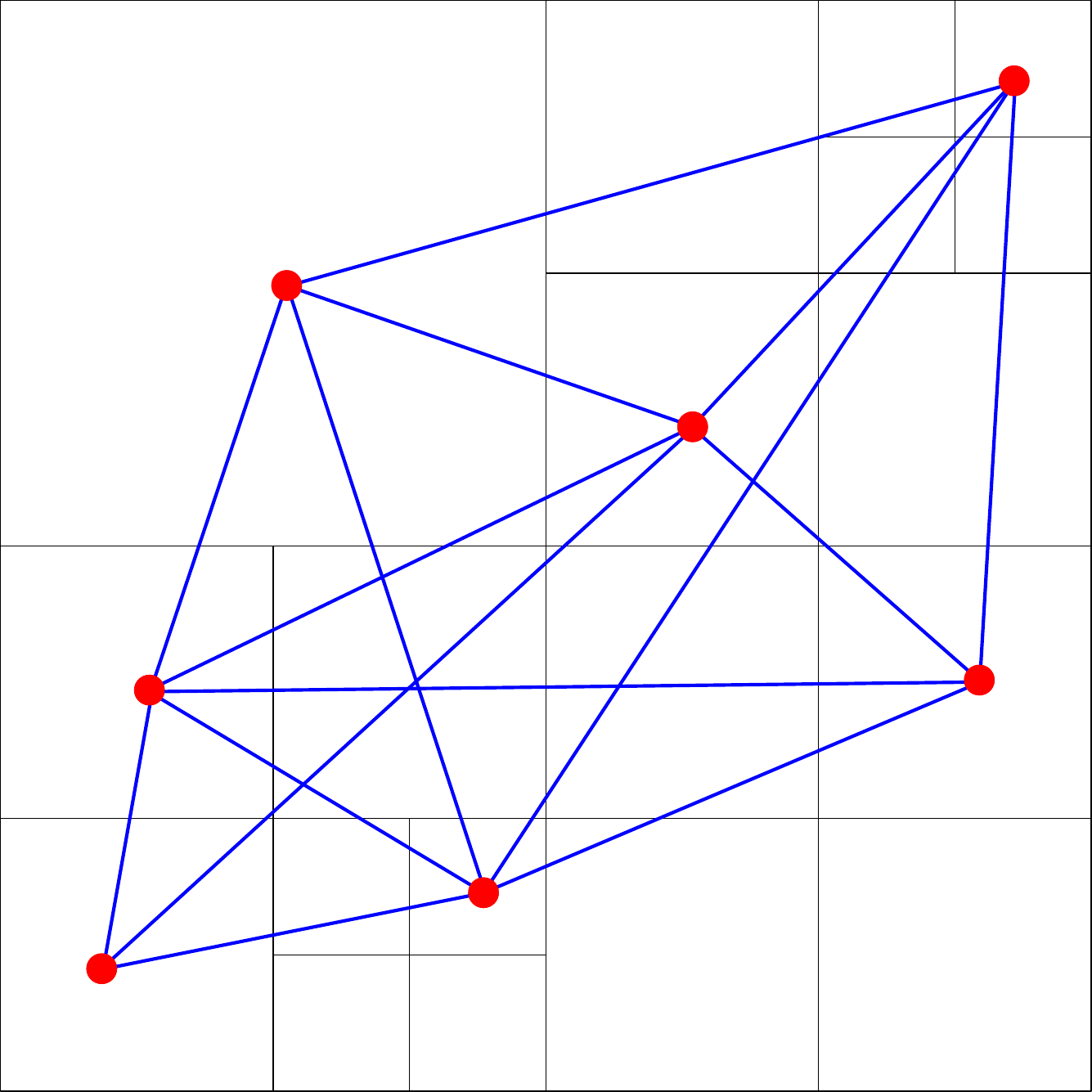}		\hspace{20mm}	\includegraphics[scale=0.35]{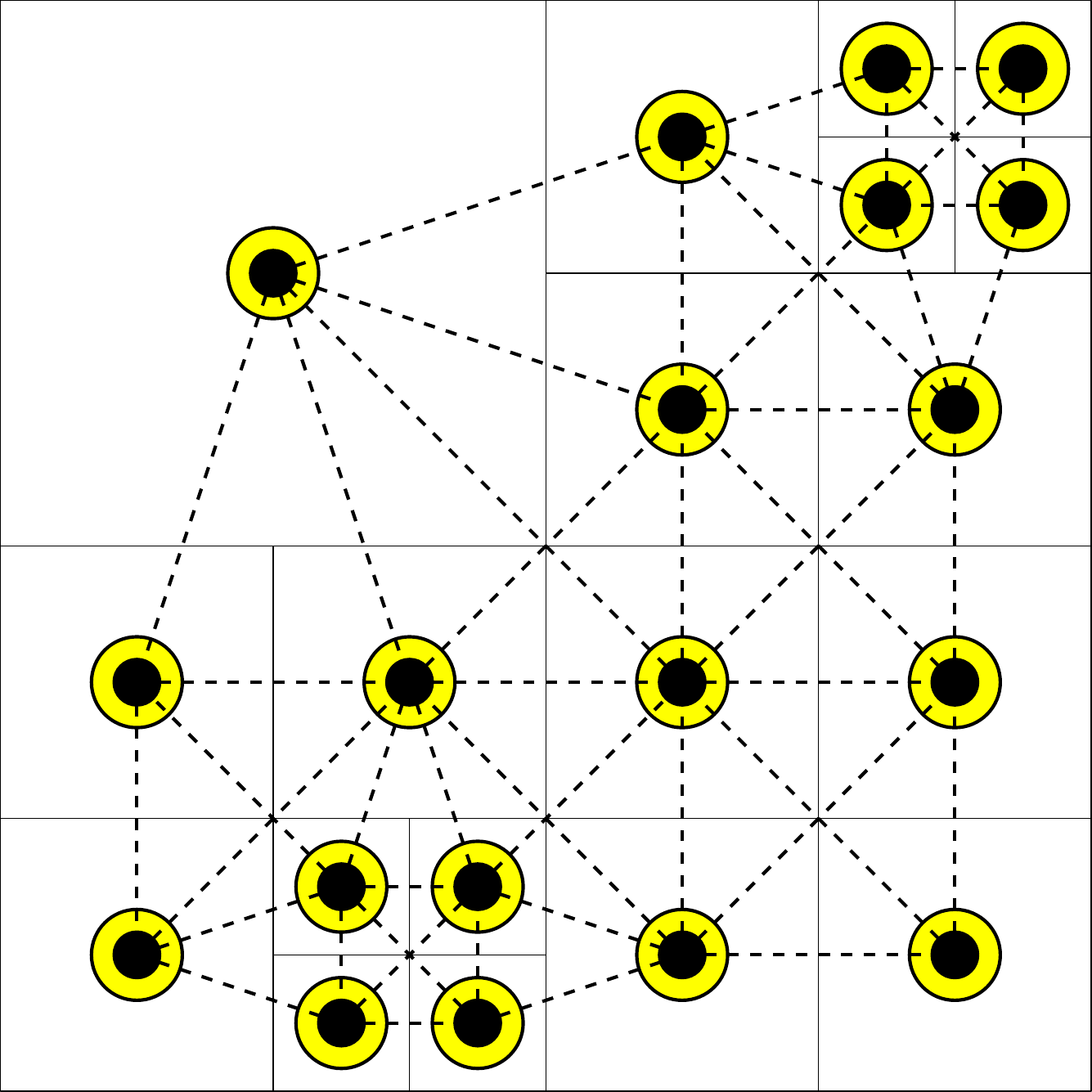}
			\caption{Left: A WSPD-spanner on the leaders of the non-empty leaves. Right: The dual graph $G_T$ on the leaves of the quad-tree.}
			\label{fig:qt}
		\end{figure}
		
		Unlike greedy-spanners, WSPD-spanners tend to have long edges. In our case, this is beneficial since such long WSPD edges also help reduce the diameter of the final output spanner. For faraway point pairs, shortest paths use the long WSPD edges placed in this step. As a result, path-finding algorithms (described next) terminate fast in most cases.
		Note that after this step, $H$ becomes a connected graph on $P$ since the intra-leaf point pairs are connected via the local greedy-spanners and the inter-leaf point pairs are connected via local greedy spanners and the WSPD-spanner edges. 
		
		\item \textit{Step 4.}  We create a dual graph $G_T$ on the  leaves of $T$ where the edge $\{\sigma_i,\sigma_j\} \in G_T$ if and only if the leaves $\sigma_i,\sigma_j$ are neighbors in $T$ (their bounding-boxes intersect). See Fig.~\ref{fig:qt} (right). To find the neighboring leaves of a leaf $\sigma$,  we dilate the box corresponding to  $\sigma$ by a small quantity, and find the intersecting leaves by spawning a search at the root of $T$. 
		It is easy to check that $G_T$ is connected.
Using \textsc{Greedy-Merge} (Algorithm~\ref{alg:greedymerge}), we ``stitch'' (merge) the local spanners inside every pair of non-empty neighbors $\sigma_i,\sigma_j$ in $G_T$ and ensure $t$-paths between all point pairs in $P_i \times P_j$, where $P_i:=P\cap \sigma_i,P_j:=P\cap \sigma_j$. See Fig.~\ref{fig:merge} for an illustration. However, in our algorithm, to minimize the number of edges placed during the mergings, we consider the whole spanner constructed so far when we look for $t$-paths. From now on, by \textit{merging} two non-empty leaves in $G_T$, we mean merging the two local spanners inside them, as explained above. 
	
	\begin{figure}[ht]
		\centering
		\includegraphics[scale=0.6]{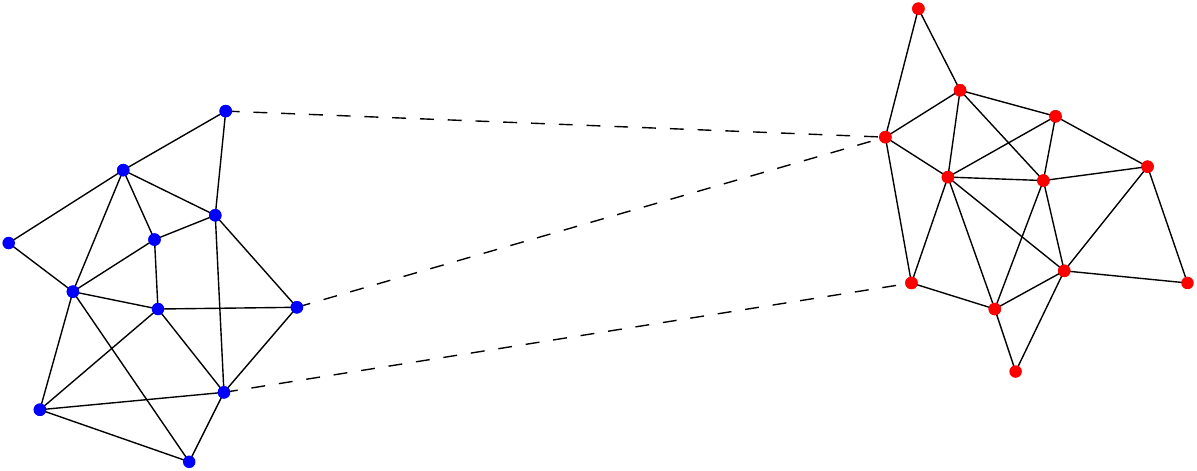}
		\caption{Two separate greedy $1.25$-spanners are constructed on the red and blue points. Then, the two spanners are merged by placing the dashed edges and ensuring $1.25$-paths for all point pairs $u,v$ where $u$ is red and $v$ is blue. The closer the two spanners are, the more edges may be needed to merge them.}
		\label{fig:merge}
	\end{figure}

		After completion of this step, the number of point pairs not having $t$-paths reduces drastically. Now we give a description of \textsc{Greedy-Merge}.
			Let $L := P_i \times P_j$.
		Check for the existence of the edge $e:=\{\ell_{\sigma_i}, \ell_{\sigma_j}\}$ in $H$. If $e$ exists, then it was placed in Step 3.
		If $e \in H$, prune pairs from $L$ in the following way.  For every pair $(u,v) \in L$,  find the lengths of the following three paths: a shortest path from $u$ to $\ell_{\sigma_i}$ in $H$, the length of the edge  $e$, and a  shortest path from $\ell_{\sigma_j}$ to $v$ in $H$. Let the lengths of these three paths be $a,b,c$, respectively. If $(a+b+c) / |uv| \leq t$ (a $t$-path between $u,v$ already exists in $H$),  remove $(u,v)$ from $L$.  Note that the two paths $u$ to $\ell_{\sigma_i}$ and $\ell_{\sigma_j}$ to $v$ always exist because of the local greedy $t$-spanners already created inside every non-empty leaf.  In our experiments, we found that owing to careful edge placements  by \textsc{FG-Greedy}, the lengths of the shortest paths between any two points $u,v$ from the same leaf is very close to and at most $t \cdot |uv|$. So, instead of storing the path lengths obtained from the \textsc{FG-Greedy} executions or recomputing them, we use the quantities $t\cdot |u\ell_{\sigma_i}|$ and $t\cdot |\ell_{\sigma_j}v|$ as a substitute of their shortest path lengths. This adjustment saves us time and memory without any visible increase in the average-degree of $H$. Next, we sort $L$ (after possible pruning) based on the
		Euclidean distance of the point pairs. Akin to
		\textsc{FG-Greedy}, the sorting step helps to reduce the new
		edges placed to merge two local greedy-spanners inside
		$\sigma_i,\sigma_j$.

			\begin{algorithm}[H]

			Let $L = P_i \times P_j$, where $P_i, P_j$ are the pointsets inside the leaves $\sigma_i,\sigma_j$, respectively\;

			\If{\upshape the edge $\{\ell_{\sigma_i},\ell_{\sigma_j}\} \in H$ }{
				
				\For{\upshape\textbf{each} $(u,v) \in L$}{
				
					\If{\upshape $\dfrac{t\cdot|u\ell_{\sigma_i}| + |\ell_{\sigma_i}\ell_{\sigma_j}| + t\cdot |\ell_{\sigma_j}v|}{ |uv|}\leq t$}
					{
						$L = L \setminus \{(u,v)\}$\;
					}	
				}
			}
			
			Sort $L$ according to the Euclidean distance of the point pairs\;
			$\texttt{bridges} = \emptyset$; 
			
			\For{\upshape\textbf{each} $\{u,v\} \in L$}{
				\If{\upshape $\nexists$ $x\curly y \in \texttt{bridges}$ such that $\dfrac{t\cdot |ux| + |x\curly y| + t \cdot |yv|  } {|uv| }\leq t$ } {
					\If{\upshape the path $u \curly v$ returned by \textsc{Greedy-Path}$(H,u,v)$ is a $t$-path between $u,v$}{
						$\texttt{bridges}  = \texttt{bridges} \cup \{u \curly v\}$\;
					}
					\ElseIf{\upshape the shortest path $\pi(u,v)$ between $u,v$ returned by $A^*$ is a $t$-path}{
							$\texttt{bridges}  = \texttt{bridges} \cup \{\pi(u,v)\}$\;
						}
						\Else{
							Place the edge $\{u,v\}$ in $H$\;
							$\texttt{bridges}  = \texttt{bridges} \cup \{\{{u,v}\}\}$\; 
						}			
					}	
				}
				\caption{\textsc{Greedy-Merge}$(\sigma_i,\sigma_j,t,H)$}
				\label{alg:greedymerge}
			\end{algorithm}

					A \emph{bridge} is a path that connects two
		vertices $u \in P_i,v\in P_j$. We maintain a set of \texttt{bridges} found so far. 
		For every pair $(u,v) \in L$, first we check if there is a bridge $x\curly y \in \texttt{bridges}$  such that the path from $u$ to $v$ via the bridge $x\curly y$ is a $t$-path for $u,v$. If there is none, we check if there is a $t$-path between them in $H$. The main objective behind caching all the bridges found so far is to reuse them for the future point pairs since we found that \textsc{Greedy-Merge} does not put many bridges while trying to ensure $t$-paths among all point pairs. Consequently, we are able reduce the number of expensive  $t$-path computations.

			\begin{algorithm}[H]
			$y=u$\;
			\texttt{path} $=[u]$\;
			Mark $u$ \texttt{visited} and  all vertices in $P\setminus \{u\}$ \texttt{unvisited}\;
			
			\While{\upshape  $y\neq v$ }{
				Let $X$ be the set of neighbors of $y$ in $H$ currently marked  \texttt{unvisited}\;
				\If{\upshape$X$ is empty}{Report path cannot be found\;}
				Find the neighbor $x \in X$ that minimizes $|yx| + |xv|$\;
				Append $x$ to \texttt{path}\;
				\If{$x == v$}{\textbf{break}\;}
				$y=x$\;
			}
			\Return \texttt{path}\;
			\caption{\textsc{Greedy-Path}$(H,u,v)$}
			\label{alg:greedypath}
		\end{algorithm}
		
		For finding a $t$-path between a vertex pair $u,v$, we first find a path using \textsc{Greedy-Path} (Algorithm~\ref{alg:greedypath}) and check if the path returned by it is a $t$-path between $u,v$ in $H$. It iteratively constructs a path (not necessarily a shortest one) starting at $u$ by including the next neighbor $x$ of the current vertex $y$ that minimizes the Euclidean distance between $y$ and $x$ plus the Euclidean distance between $x$ and $v$.   If the path returned is not a $t$-path for $u,v$, we run $A^*$,  a popular Dijkstra-based shortest path algorithm that runs fast on geometric networks~\cite{hart1968formal}, to find a shortest path $\pi(u,v)$ between $u,v$ in $H$. Then, we verify if $\pi(u,v)$  is a $t$-path between $u,v$.
		We note that \textsc{Greedy-Path} does not use a priority-queue like $A^*$, and as a result, it tends to be faster and uses less memory. For instance, for a $512K$-element pointset uniformly distributed inside a square, using \textsc{Greedy-Path} made our algorithm $\approx 23.5\%$ faster. As explained in Section~\ref{sec:exp}, \textsc{Greedy-Path} could find $t$-paths in most cases. As a result, the number of $A^*$ calls was much less than the number of \textsc{Greedy-Path} calls.

	If a $t$-path is found, we put the $t$-path between $u,v$ in \texttt{bridges}. Otherwise, we place the edge $\{u,v\}$ both in $H$ and \texttt{bridges}. 
	
	In our experiments, we found that after this step, the number of point pairs that do not have $t$-paths between them in $H$ is very low and sometimes is even zero.

			\item \textit{Step 5.} 
		For every non-empty leaf $\sigma \in T$, we merge $\sigma$  with the  leaves in $T$ that are at least $2$ and at most $h$ hops away from it in $G_T$ (constructed in Step 4). 		
		Note that in Step 4, we already merged $\sigma$ with the leaves that are exactly one hop away from it in $G_T$. However, we do this incrementally. First, we merge every leaf with its $2$-hop neighbors, then with its $3$-hop neighbors, and so on. In this step, we use a lighter version of the merge algorithm used in the previous step. The sorting step in \textsc{Greedy-Merge} takes a substantial amount of time since it sorts $|P_i| \cdot |P_j|$ point pairs and for merging the pointsets inside distant leaves ($2$ hops or more away from each other in $G_T$), we replace the sorting step with the following faster alternative.  Sort the points in $P_i$($P_j$) based on their distances from the leader point of the points inside the other leaf $\sigma_j$($\sigma_i$). In this case, we execute sorting twice; once on $|P_i|$ pairs and another time on $|P_j|$ pairs. Then we use a pair of nested loops to work with the point pairs, as shown in Algorithm~\ref{alg:greedymergelight}.
		We call this slightly modified merging algorithm
		\textsc{Greedy-Merge-Light}.
		
							\begin{algorithm}[H]
			Let $P_i, P_j$ be the two pointsets inside the leaves $\sigma_i,\sigma_j$, respectively\;
			
			Sort $P_i$ based on their distances from the leader point   $\ell_{\sigma_j}$\;
			
			Sort $P_j$ based on their distances from the leader point $\ell_{\sigma_i}$\;
			
			$\texttt{bridges} = \emptyset$\; 	
			\If{\upshape the edge $\{\ell_{\sigma_i},\ell_{\sigma_j}\} \in H$ }{
				$\texttt{bridges} = \texttt{bridges} \cup \{\{\ell_{\sigma_i},\ell_{\sigma_j}\}\}$\;
			}

			\For{\upshape\textbf{each} $u \in P_i$}{
				\For{\upshape\textbf{each} $v \in P_j$}{
					\If{\upshape $\nexists$ $x\curly y \in \texttt{bridges}$ such that $\dfrac{t\cdot |ux| + |x\curly y| + t \cdot |yv|  } {|uv| }\leq t$ } {
						\If{\upshape the path $u \curly v$ returned by \textsc{Greedy-Path}$(H,u,v)$ is a $t$-path between $u,v$}{
							$\texttt{bridges}  = \texttt{bridges} \cup \{u \curly v\}$\;
						}
						\ElseIf{\upshape the shortest path $\pi(u,v)$ between $u,v$ returned by $A^*$ is a $t$-path}{
								$\texttt{bridges}  = \texttt{bridges} \cup \{\pi(u,v)\}$\;
							}
							\Else{
								Place the edge $\{u,v\}$ in $H$\;
								$\texttt{bridges}  = \texttt{bridges} \cup \{\{{u,v}\}\}$\; 
							}			
						}	
								%
								%
					}	
				}
				
				\caption{\textsc{Greedy-Merge-Light}$(\sigma_i,\sigma_j,t,H)$}
				\label{alg:greedymergelight}
			\end{algorithm}
		\end{itemize}

			Since the merging of leaf pairs that are relatively
		far from each other 
		rarely need placement of new edges after Step 4, this modified step has almost the same effect as that of the sorting used in the previous step. Due to the higher number of mergings  required in this step (compared to Step 4), we use \textsc{Greedy-Merge-Light} to speed up our algorithm in practice. For instance, for a $1M$-element  pointset uniformly distributed inside a square, using \textsc{Greedy-Merge-Light} in Step 5 instead of \textsc{Greedy-Merge} helped the algorithm to speed up by a factor of $2$ but with a negligible increment of $0.1$ in the average-degree of the final spanner.
		
		In the following, we show that $H$ is sparse. We first present Lemma~\ref{lem:edges}, which is  subsequently used to prove Theorem~\ref{thm:sparse}.

		\begin{lemma}\label{lem:edges}
			Let $S$ be a  set of leaves of the quad-tree $T$ (constructed in Step 1). Then, the subgraph $G_T(S)$ of $G_T$, induced by $S$ has $O(|S|)$ edges.
		\end{lemma}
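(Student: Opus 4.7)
The plan is to use a charging argument that exploits the dyadic, grid-aligned structure of the quadtree cells. The key observation is that any quadtree leaf has only a constant number of neighbors in $G_T$ whose side length is at least its own; combined with a charge-to-the-smaller-endpoint scheme, this yields a linear bound on the induced subgraph $G_T(S)$ directly, without having to first bound $|E(G_T)|$ globally.

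I would proceed in three steps. First, I would establish the key geometric fact: for any leaf $\sigma$ of $T$, the number of leaves adjacent to $\sigma$ in $G_T$ whose side length is at least that of $\sigma$ is at most $8$. Since all cells of $T$ are axis-aligned squares with side lengths that are dyadic fractions of the root cell and align to a common grid, each of the four sides of $\sigma$ borders at most one equal-or-larger cell (which must span the entire side), and each of the four corners touches at most one equal-or-larger diagonal neighbor. Second, I would charge every edge $\{\sigma_i,\sigma_j\} \in G_T(S)$ to whichever endpoint has the smaller side length, breaking ties by any fixed total ordering on the cells. Third, I would conclude: each $\sigma \in S$ receives at most $O(1)$ charges (at most $8$ from strictly larger neighbors, plus at most $8$ from equal-sized neighbors that lose the tiebreak), so summing over $S$ gives $|E(G_T(S))| = O(|S|)$.

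The main obstacle is the first step, since the ``dilate by a small quantity'' definition of neighbors in Step 4 of Algorithm~\ref{alg:fss} includes diagonal corner contact, and a single large cell can be a neighbor of arbitrarily many small cells along one of its sides. What rescues the count is that only one of those small neighbors can itself be as large as the chosen cell, not all of them. A cleaner alternative that sidesteps this enumeration is to exhibit a planar embedding of $G_T$ directly: place each vertex at its cell's center and route every dual edge through the shared boundary (or shared corner) of the two cells, using the disjointness of interiors to avoid crossings. Then $G_T$ is planar, every induced subgraph $G_T(S)$ is therefore planar, and Euler's formula yields $|E(G_T(S))| \leq 3|S|-6$.
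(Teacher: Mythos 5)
Your main argument (the charging scheme) is correct and takes a genuinely different route from the paper. The paper splits the edges of $G_T(S)$ into \emph{diagonal} edges (cells meeting only at a corner), bounded by $O(|S|)$ since each leaf has at most one corner-only neighbor per corner, and \emph{non-diagonal} edges, which it bounds by arguing that the graph obtained by joining the centers of side-adjacent cells is a union of plane stars and hence a planar graph on $|S|$ vertices with $O(|S|)$ edges. Your proof instead uses the standard quadtree fact that every cell has at most a constant number of neighbors at least as large as itself (your count of $8$ --- one equal-or-larger cell per side plus one per corner --- checks out, because a dyadic interval of length $s$ starting at a multiple of $s$ cannot straddle a multiple of any $s'\ge s$, so an equal-or-larger side-neighbor must cover the whole side), and then charges each edge to its smaller endpoint. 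This is self-contained, yields an explicit constant ($|E(G_T(S))|\le 8|S|$), treats diagonal and side adjacencies uniformly, and, unlike a planarity argument, generalizes verbatim to higher-dimensional quadtrees with $8$ replaced by a dimension-dependent constant. The paper's route is shorter on the non-diagonal part but leans on an ``observe that these stars are plane'' step that your approach avoids entirely.

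However, the ``cleaner alternative'' you sketch at the end does not work. The full dual graph $G_T$, with corner-contact adjacencies included, is not planar in general: for a uniform $k\times k$ grid of leaves it is the king graph, which has roughly $4k^2$ edges against the planar bound of $3k^2-6$. Concretely, at a corner where four cells meet, the two diagonal dual edges would both have to be routed through that single shared point and would cross there. This is precisely why the paper peels off the diagonal edges with a separate counting argument before invoking planarity for the side-adjacency graph only. So drop the proposed planar-embedding shortcut and keep your charging argument, which needs no planarity at all.
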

		
		\begin{proof}
			Two leaves in $S$ form a \emph{diagonal pair} if they intersect only at a point otherwise they form a \emph{non-diagonal pair}. An edge in $G_T(S)$ connecting the leaves of a diagonal pair is a \emph{diagonal edge}; otherwise, it is a \emph{non-diagonal edge}. Let the number of diagonal edges in $G_T(S)$ be $a$ and the non-diagonal edges be $b$. 
			
			Since every leaf in $S$ can form a diagonal pair with at most four other leaves in $S$, $a = O(|S|)$. Now we estimate the number of non-diagonal edges in $G_T(S)$.
			Construct a geometric graph $G$  using the leaves in $S$. For every leaf $\sigma \in S$, we consider its center to be a vertex in $G$. Place an edge (a  line-segment) between two vertices in $G$ if and only if the corresponding two leaves form a neighboring non-diagonal pair.  Every leaf $\sigma \in S$ forms a star centered at its center, with its adjacent non-diagonal leaves in $S$. We observe that these stars are plane (crossing-free). Refer to Fig.~\ref{fig:mergings}. 
			$G$ is a union of the stars centered at the  leaves in $S$.  So, $G$ is plane graph on $O(|S|)$ vertices. Consequently, $G$ has $O(|S|)$ edges. This implies, $b=O(|S|)$
			
			Thus, we conclude that $G_T(S)$ has $a+b=O(|S|)+ O(|S|) = O(|S|)$ edges.
		\end{proof}

\begin{figure}
	\centering
	\includegraphics[scale=0.6]{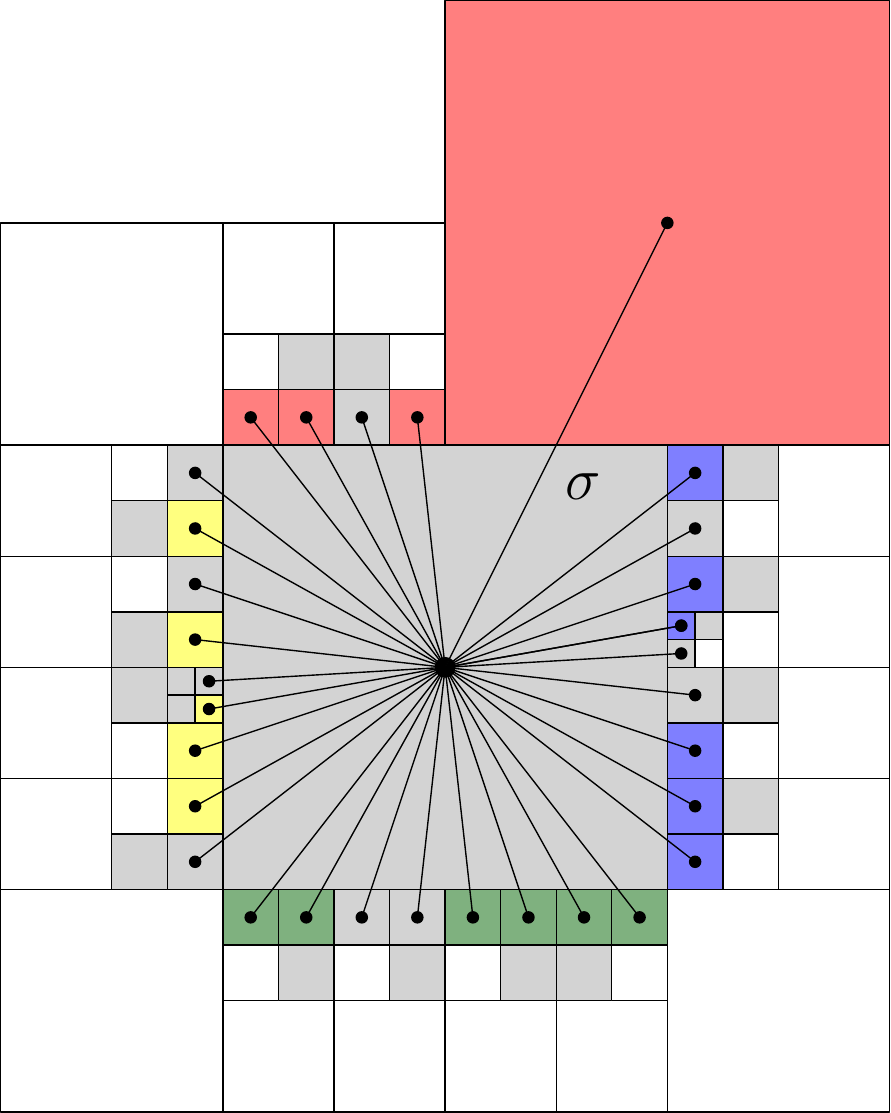}
	\caption{The non-empty leaves are shown in gray. The red, blue, green, and yellow leaves are the empty north, east, south, and west  neighbors of the non-empty leaf $\sigma$, respectively. }
	\label{fig:mergings}
	\end{figure}

	\begin{theorem} \label{thm:sparse}
		For fixed values of $h$ and $k$, $H$ is a  geometric graph on $P$ with $O(n)$ edges.
	\end{theorem}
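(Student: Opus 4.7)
The plan is to split $E(H)$ according to the step of Algorithm~\ref{alg:fss} that introduces each edge and to bound each contribution by $O(n)$, treating $h$ and $k$ as absorbed constants, then to sum.

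For Step 2, each non-empty leaf $\sigma$ has $|P_\sigma|\le k$, so the local greedy spanner $H_\sigma$ has at most $\binom{|P_\sigma|}{2}\le \tfrac{k}{2}|P_\sigma|$ edges; summing over leaves and using $\sum_\sigma |P_\sigma|=n$ gives $O(n)$ edges. For Step 3, a WSPD $t'$-spanner on the $|S|\le n$ leaders contains $O(|S|)=O(n)$ edges. For Step 4, \textsc{Greedy-Merge}$(\sigma_i,\sigma_j,\cdot,\cdot)$ is invoked once per edge of the induced subgraph $G_T(S)$ on the non-empty leaves, and each invocation adds at most $|P_i|\cdot|P_j|\le k^2$ new edges to $H$; by Lemma~\ref{lem:edges} the number of such calls is $O(|S|)=O(n)$, so Step 4 contributes $O(n)$ new edges.

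For Step 5, \textsc{Greedy-Merge-Light} is called on every pair of non-empty leaves whose $G_T$-distance lies in $\{2,\ldots,h\}$, again adding at most $k^2$ new edges per call. It therefore suffices to bound $N_h:=|\{(\sigma,\sigma')\in S\times S: d_{G_T}(\sigma,\sigma')\le h\}|$ by $O(n)$. My plan is to argue inductively on the hop distance $i\in\{2,\ldots,h\}$: the number of length-$i$ walks in $G_T$ whose endpoints lie in $S$ can be controlled by reapplying the diagonal/non-diagonal planar decomposition used in the proof of Lemma~\ref{lem:edges} to the auxiliary graph formed by the centers of leaves traversed by such walks, yielding $O(|S|)$ pairs per shell and $O(h|S|)=O(n)$ pairs after summing over $i$.

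The main obstacle is making this Step~5 bound precise: $G_T$ itself can have empty-leaf vertices of unbounded degree (a large empty leaf may border an arbitrarily long row of very small non-empty leaves produced by deep subdivisions of an adjacent quadrant), so a crude ``degree raised to the $h$'' estimate does not suffice. The argument must therefore exploit both the planarity of $G_T$ (guaranteed by Lemma~\ref{lem:edges} applied with $S$ taken to be the full leaf set) and the fact that only the endpoints of such walks---not the intermediate empty leaves---are counted in $N_h$, charging each short walk to a planar edge incident to a non-empty leaf. Once the Step~5 bound is established, summing the contributions of Steps 2--5 gives $|E(H)|=O(n)$, which completes the proof.
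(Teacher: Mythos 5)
Your handling of Steps 2--4 coincides with the paper's: $O(1)$ edges per leaf-local greedy spanner summed over $O(n)$ non-empty leaves, $O(n)$ WSPD edges on the leaders, and Lemma~\ref{lem:edges} applied to the non-empty leaves to bound the number of Step-4 mergings, each contributing at most $k^2=O(1)$ edges. The gap is exactly where you flag it: Step 5. The assertion that each hop-distance shell contributes $O(|S|)$ pairs is the entire content of the bound, and the charging scheme you sketch (``charge each short walk to a planar edge incident to a non-empty leaf'') cannot be carried out with an $O(1)$ load per edge. In the very configuration you identify --- a large empty leaf bordered by $D$ tiny non-empty leaves --- there are $\binom{D}{2}$ pairs of non-empty leaves at $G_T$-distance $2$ but only $D$ edges of $G_T$ joining that empty leaf to non-empty leaves, so any such charging places $\Omega(D)$ walks on some edge. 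Planarity does not rescue the count either: a planar graph on $O(n)$ vertices (a star, say) can have $\Theta(n^2)$ vertex pairs at distance $2$. So your inductive ``per-shell'' plan is a restatement of what must be proved, not a proof of it.

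The paper takes a different route through Steps 4--5. It counts mergings directly: letting $A$ be the non-empty leaves, $B$ the empty axis-aligned neighbors of non-empty leaves, and $C$ their diagonal neighbors, it argues $|B|,|C|=O(n)$ by charging each empty neighbor to a non-empty sibling leaf (a constant number of charges per non-empty leaf per direction), applies Lemma~\ref{lem:edges} to the induced subgraph $G_T(A\cup B\cup C)$ to conclude that the number $q$ of $G_T$-edges incident on non-empty leaves is $O(n)$, and then bounds the number of mergings by $h\cdot q=O(n)$. If you rewrite Step 5, that is the scaffolding to adopt; but note that the final multiplication --- passing from $q$ incident edges to $h\cdot q$ merged pairs within $h$ hops --- is itself the delicate point, and your own star example shows it deserves more justification than a single line, since intermediate empty leaves of high degree are precisely what make ``within $h$ hops'' larger than ``$h$ times the incident-edge count.''
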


\begin{proof}
Since $k=O(1)$, the  size of every greedy-spanner constructed by \textsc{FG-Greedy} inside every leaf is $O(1)$. The quad-tree $T$ has $O(n)$ non-empty leaves. So, the total number of edges put by \textsc{FG-Greedy} in Step 2 is $O(n)$. 
The WSPD-spanner $W$ constructed in step 3 on $O(n)$ leaders contains $O(n)$ edges~\cite{narasimhan2007geometric}. Thus, the total number of edges put so far in $H$ is $O(n)$.

Next, we show that the algorithm places $O(n)$ edges in $H$ in steps 4 and 5. To show this, we estimate the number of mergings $m$, executed in the Steps 4 and 5. Since the size of every non-empty leaf of the quad-tree $T$ is no more than $k$, every merging places at most $O(k^2)=O(1)$ edges in $H$. It remains to show that $m=O(n)$. 

As every merging requires at least one non-empty leaf (the other non-empty leaf is at most $h$ hops away in $G_T$),    $m$  equals $h$ times the total number of edges $q$ incident on the non-empty leaves of $T$ in $G_T$. Now we shall estimate $q$ and show that $q = O(n)$.

Let $A$ be the set of non-empty leaves in $T$.
Consider any leaf $\sigma \in A$. For every non-empty leaf $\sigma'$ (possibly same as $\sigma$), there can be at most two of its empty sibling leaves which are the east neighbors of $\sigma$. Refer to Fig.~\ref{fig:mergings}. 
Since there are $O(n)$ non-empty leaves, the total number of empty east neighbors for all the non-empty leaves in $T$ is also $O(n)$. Similarly, the total number of north, south, and west empty neighbors is also $O(n)$ each. 
Thus, there are a total of $O(n)$ empty neighbors of the non-empty leaves. We denote the set of such empty neighbors by $B$. Further, let $C$ be the set of diagonal neighbors of the non-empty leaves. Clearly, $|C|=O(n)$.
Now, let $G_T(A \cup B \cup C)$ be the subgraph of $G_T$, induced by the vertices in $A\cup B \cup C$. By Lemma~\ref{lem:edges}, $G_T(A\cup B \cup C)$ has $O(|A \cup B \cup C|) = O(|A| + |B| + |C|)= O(n + n + n) = O(n)$ edges.

Since $h$ is a constant, $m = h \cdot q = O(1) \cdot O(n) = O(n)$. Hence, $H$ has $O(n)$ edges.
\end{proof}
	
	\begin{corollary}\label{cor:mergings}
		For fixed values of $h$ and $k$, the total number of mergings executed by \textsc{Fast-Sparse-Spanner} is $O(n)$.
	\end{corollary}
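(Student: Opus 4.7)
The plan is to extract the bound on the number of mergings directly from the analysis already carried out in the proof of Theorem~\ref{thm:sparse}, which effectively established $m = O(n)$ as an intermediate quantity. I would simply isolate that bound and cite Lemma~\ref{lem:edges} to clinch it.

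First I would observe that mergings occur only in Step 4 and Step 5 of \textsc{Fast-Sparse-Spanner}, and each merging corresponds to an ordered (or unordered) pair $(\sigma,\sigma')$ of non-empty leaves of $T$ whose graph distance in $G_T$ is at most $h$. Thus the total number of mergings is at most $h$ times the number of edges in $G_T$ incident to at least one non-empty leaf, because any two non-empty leaves within $h$ hops in $G_T$ are connected by a path of length at most $h$ in $G_T$, and each edge of such a path lies in the induced subgraph $G_T(A \cup B \cup C)$, where (as in the theorem's proof) $A$ is the set of non-empty leaves, $B$ the set of empty neighbors of leaves in $A$, and $C$ the set of diagonal neighbors of leaves in $A$.

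Next I would invoke the argument already used in the proof of Theorem~\ref{thm:sparse} to show $|A|+|B|+|C|=O(n)$: $|A|=O(n)$ trivially, while each non-empty leaf contributes at most a constant number of empty axis-aligned and diagonal neighbors, so $|B|=O(n)$ and $|C|=O(n)$. Then Lemma~\ref{lem:edges} applied to $S := A \cup B \cup C$ gives that $G_T(A \cup B \cup C)$ has $O(n)$ edges, so the number of merging-eligible leaf pairs is $O(n)$.

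Finally, since $h$ is treated as a constant, multiplying by $h$ preserves the bound, giving at most $O(n)$ mergings in total. The only mild subtlety — and the one worth stating explicitly in the write-up — is the counting step that passes from ``pairs of non-empty leaves within $h$ hops'' to ``edges of $G_T(A\cup B\cup C)$ times $h$''; once this is pinned down, the corollary is immediate from Lemma~\ref{lem:edges}. I expect no genuine obstacle, since the required combinatorial work has already been done inside the proof of Theorem~\ref{thm:sparse}.
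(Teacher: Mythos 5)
Your proposal is correct and essentially identical to the paper's own argument: the paper states this corollary without a separate proof precisely because the bound on the number of mergings $m$ is established inside the proof of Theorem~\ref{thm:sparse}, via the same decomposition $A \cup B \cup C$ (non-empty leaves, their empty axis-aligned neighbors, and their diagonal neighbors), the same application of Lemma~\ref{lem:edges} to conclude that the number $q$ of $G_T$-edges incident on non-empty leaves is $O(n)$, and the same final step $m \le h \cdot q = O(n)$. The ``mild subtlety'' you flag --- passing from ``pairs of non-empty leaves within $h$ hops'' to ``$h$ times the number of incident edges'' --- is present and equally unelaborated in the paper's version, so you have reproduced, not departed from, its reasoning.
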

	
	Now, we focus on our algorithm's time and space complexities.

		\begin{theorem} 
			\label{thm:time}
			For fixed values of $h$ and $k$, 
		\textsc{Fast-Sparse-Spanner} runs in $O(n^2\log n + d^2n)$ time and uses $O((d+1)n)$ extra space, where $d$ is the depth of the quad-tree $T$ used in Step 1.

	\end{theorem}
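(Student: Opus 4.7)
The plan is to charge the running time and extra space to the five steps of \textsc{Fast-Sparse-Spanner} individually, and to sum the contributions using Theorem~\ref{thm:sparse} (which bounds $|E(H)|=O(n)$) and Corollary~\ref{cor:mergings} (which caps the number of mergings at $O(n)$) as the two key levers.

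For the time bound I would proceed step by step. Step 1 constructs a region quad-tree on $n$ points with leaf capacity $k=O(1)$; the standard bound gives $O((d+1)n)$ time. Step 2 runs \textsc{FG-Greedy} on each non-empty leaf, which holds at most $k=O(1)$ points, so each call costs $O(1)$ and the step sums to $O(n)$. Step 3 builds a WSPD $t'$-spanner on the $O(n)$ leader points in $O(n \log n)$ time using the algorithm of Narasimhan and Smid. The bulk of the work occurs in Steps 4 and 5, and I would split it into two sub-costs: (a) the neighbor queries that build $G_T$ and expose the $i$-hop neighbors for $i\le h$, and (b) the merging work itself. For (a), each leaf performs a dilated-cell search that descends from the root along $O(d)$ levels and may branch into $O(d)$ sibling subtrees before reaching a neighbor; summed over the $O(n)$ leaves and the $h=O(1)$ hop values, this gives $O(d^2 n)$. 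For (b), by Corollary~\ref{cor:mergings} there are $O(n)$ mergings in total; each processes $O(k^2)=O(1)$ point pairs; for each pair, the dominant cost is at most one $A^*$ call on $H$, which runs in $O((n+|E(H)|)\log n)=O(n\log n)$ time because Theorem~\ref{thm:sparse} guarantees $|E(H)|=O(n)$. This yields $O(n\cdot n\log n)=O(n^2\log n)$ for the merging phase, so the overall time is $O(n^2\log n + d^2 n)$.

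For the extra space the analysis is shorter. The quad-tree $T$ can contain up to $O((d+1)n)$ nodes in the worst case and dominates. The graph $H$ contributes $O(n)$ by Theorem~\ref{thm:sparse}, and the WSPD-spanner of Step 3 also uses $O(n)$. Each merging allocates $O(k^2)=O(1)$ scratch space for the pair-list $L$ and the \texttt{bridges} set, which is released when the merging completes, and the $A^*$/\textsc{Greedy-Path} routines reuse the same $O(n)$ workspace (priority queue, distance and parent arrays) across calls. Summing, the total extra space is $O((d+1)n)$.

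The main obstacle is the joint analysis of Steps 4 and 5. On one hand, one must argue that the number of $A^*$ invocations across all mergings is $O(n)$, which needs Corollary~\ref{cor:mergings} together with the fact that each merging inspects only $O(1)$ pairs. On the other hand, one must certify that every such call is cheap, which requires Theorem~\ref{thm:sparse} to guarantee that $H$ never grows beyond $O(n)$ edges at any moment during execution (not just at the end); since edges are only added and the final-$H$ bound is $O(n)$, this is immediate. Once these pieces are in place, the time and space estimates assemble cleanly from the per-step accounting above.
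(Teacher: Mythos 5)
Your proposal is correct and follows essentially the same route as the paper's proof: a per-step accounting that charges $O((d+1)n)$ to the quad-tree, $O(n)$ to the local \textsc{FG-Greedy} calls, $O(n\log n)$ to the WSPD-spanner, $O(d^2n)$ to the neighbor queries building $G_T$, and $O(n^2\log n)$ to the $O(n)$ mergings (via Corollary~\ref{cor:mergings}) each performing $O(1)$ pair checks whose dominant cost is an $A^*$ call on the $O(n)$-edge graph guaranteed by Theorem~\ref{thm:sparse}, with the space bound dominated by the quad-tree and $G_T$. The only cosmetic difference is that the paper justifies the $O(d^2n)$ term by charging $O(d)$ per neighbor found and invoking Lemma~\ref{lem:edges} to bound the total number of adjacencies in $G_T$ by $O((d+1)n)$, whereas you use a rougher branching argument, but both yield the same bound.
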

	
\begin{proof}
The construction of the quad-tree $T$ in Step 1 takes $O((d+1)n)$ time~\cite[Chapter 14]{de2008computational}. Since there are $O(n)$ non-empty leaves of $T$, our algorithm makes $O(n)$ \textsc{FG-Greedy} calls, each on a pointset of size at most $k=O(1)$. Thus, Step 2 takes $O(n)$ time. The construction of the WSPD-spanner $W$ in Step 3 takes $O(n\log n)$ time.  

Now we estimate the time taken to construct $G_T$. Let $n_v$ denote the number of neighbors of a vertex $v$ in $G_T$. The time taken to find $n_v$ neighbors of $v$ by doing a 4-way search starting at the root of the quad-tree $T$ is $O(d)\cdot n_v$. Since $G_T$ has $O((d+1)n)$ leaves~\cite[Chapter 14]{de2008computational}, by Lemma~\ref{lem:edges}$, G_T$ has $O((d+1)n)$ edges. So, the total time taken to compute $G_T$ is, $$\sum_{v \in G_T} ( O(d) \cdot n_v ) = O(d) \sum_{v \in G_T} n_v = O(d)\cdot O((d+1)n) = O(d^2 n).$$

We use $O(n)$ breadth-first traversals to find the leaves that are at most $h$ hops away. Since $G_T$ has $O((d+1)n)$ edges and $h=O(1)$, total time taken for the $n$ traversals amounts to $O(h(d+1)n) = O((d+1)n)$.

Since $H$ is sparse (Theorem~\ref{thm:sparse}), each execution of \textsc{Greedy-Path} and $A^*$ takes $O(n)$ and $O(n\log n)$ time each, respectively. 
The sorting calls in \textsc{Greedy-Merge} and \textsc{Greedy-Merge-Light} run in $O(1)$ time each since $k=O(1)$.  
By Corollary~\ref{cor:mergings}, our algorithm executes $O(n)$ mergings. Every merging considers at most $k^2 = O(1)$ point pairs (some leaves in $T$ may contain less than $k$ points). 
For every such point pair, $O(k^2 + O(n) + O(n\log n)) = O(k^2 + n\log n)$ time is spent to verify the existence of a $t$-path in $H$, since there are $O(k^2)$ bridges in the worst case. So, the total time spent in the Steps 4 and 5 is $O(n\cdot k^2 \cdot  (k^2 + n\log n)) = O(n^2\log n)$. 

Summing up all the above runtimes gives us a runtime of $O(n^2\log n + d^2n)$ for the \textsc{Fast-Sparse-Spanner}.

Now we analyze its space complexity. 
The quad-tree $T$  needs $O((d+1)n)$ storage space~\cite[Chapter 14]{de2008computational}. Step 2 uses $O(1)$ extra space since \textsc{FG-Greedy} is run on $k=O(1)$ points. The  construction of the WSPD-spanner $W$ on the leader points in Step 3 uses $O(n)$ space. $G_T$ is a dual graph on $O((d+1)n)$ leaves of $T$. Since $G_T$ has $O((d+1)n)$ edges, for storing $G_T$, $O((d+1)n)$ extra space is needed. For running $A^*$, we need $O(n)$ extra space for maintaining a priority-queue. \textsc{Greedy-Path} uses $O(n)$ extra space for bookkeeping. Thus, the total space requirement amounts to $O((d+1)n)$.
\end{proof}
	
 \textsc{Fast-Sparse-Spanner} is a $t$-spanner algorithm (can always produce $t$-spanners for any value of $t>1$) if $h$ is set to the diameter of $G_T$. Since $G_T$ is connected, in this case, all leaf pairs will be considered for merging in Steps 4 and 5, and consequently, $t$-paths will be ensured for every point pair in $P$. Hence, we state the following theorem without a proof.
 
	\begin{theorem}\label{thm:t-spanner}
	For any integer value of $k$, if $h$ is set to the diameter of $G_T$, then \textsc{Fast-Sparse-Spanner} will always produce $t$-spanners for any value of $t>1$.
\end{theorem}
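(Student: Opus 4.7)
The plan is to argue by cases on whether the two points of the pair lie in the same leaf of $T$ or in different leaves, and to show that in both cases the algorithm guarantees a $t$-path in the output graph $H$. Since edges are only ever added to $H$ during the algorithm (never removed), once a $t$-path is established for a pair it remains a $t$-path in the final $H$, so I only need to identify the step at which each pair is handled.

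First I would handle the intra-leaf case. For each non-empty leaf $\sigma$ of $T$, Step~2 runs \textsc{FG-Greedy} on $P_\sigma$ with parameter $t$, which produces a local $t$-spanner $H_\sigma$ on $P_\sigma$, and all edges of $H_\sigma$ are placed into $H$. Thus for any two points $u,v \in P_\sigma$ there is already a $t$-path between them in $H_\sigma \subseteq H$, and hence in the final $H$.

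Next I would handle the inter-leaf case. Suppose $u \in \sigma_i$ and $v \in \sigma_j$ for two distinct non-empty leaves $\sigma_i, \sigma_j$ of $T$. Since $G_T$ is connected (noted in Step~4) and $h$ is set to the diameter of $G_T$, there is an integer $1 \le i^* \le h$ equal to the graph distance between $\sigma_i$ and $\sigma_j$ in $G_T$. Then either Step~4 (if $i^*=1$) or the $i^*$-th iteration of Step~5 will invoke one of \textsc{Greedy-Merge} or \textsc{Greedy-Merge-Light} on the pair $(\sigma_i, \sigma_j)$. I would then inspect the body of those two merge routines: for the pair $(u,v) \in P_i \times P_j$, the routine either (a) detects a pre-existing $t$-path via a cached bridge, (b) finds one via \textsc{Greedy-Path}, (c) finds a shortest path via $A^*$ and verifies it is a $t$-path, or (d) falls through to the final \texttt{else} branch and inserts the edge $\{u,v\}$ directly into $H$. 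In all four sub-cases a $t$-path between $u,v$ exists in $H$ at the end of that merge, and case (d) trivially achieves $|uv|\le t\cdot|uv|$ via the single edge. Addition of edges in later mergings can only shorten distances, so the $t$-path persists in the final output.

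The only subtlety is the pruning step in \textsc{Greedy-Merge}, where the quantities $t\cdot|u\ell_{\sigma_i}|$ and $t\cdot|\ell_{\sigma_j}v|$ are used in place of true shortest-path lengths from $u$ to $\ell_{\sigma_i}$ and from $\ell_{\sigma_j}$ to $v$ in $H$. Because $H$ contains the local greedy $t$-spanners $H_{\sigma_i}$ and $H_{\sigma_j}$, those true shortest-path lengths are in fact bounded above by these quantities, so the pruning condition is a valid sufficient condition for the existence of a $t$-path through the bridge edge $\{\ell_{\sigma_i},\ell_{\sigma_j}\}$. Hence whenever a pair is pruned it already has a $t$-path in $H$, and whenever it is not pruned it proceeds into the bridge/\textsc{Greedy-Path}/$A^*$/edge-insertion branch analyzed above. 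Combining the intra-leaf and inter-leaf cases shows that every pair in $P\times P$ has a $t$-path in the final $H$, which is the definition of a $t$-spanner.
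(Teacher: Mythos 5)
Your proposal is correct and follows exactly the reasoning the paper itself gives: the paper states this theorem without a formal proof, justifying it only by the observation that since $G_T$ is connected, setting $h$ to its diameter means every leaf pair is considered for merging in Steps 4 and 5, so $t$-paths are ensured for every point pair. Your case split (intra-leaf via \textsc{FG-Greedy}, inter-leaf via the exhaustive mergings) and your verification that the pruning and bridge checks are valid sufficient conditions simply supply the details the paper leaves implicit.
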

 
 However, setting $h$ to the diameter of $G_T$ will make the algorithm slow for large pointsets. In Section~\ref{sec:exp}, we present precise $t$-dependent values for $h$ used in our experiments.  

While worst-case time complexities are high, our experiments suggest
that the average-case time complexities are exceptionally low. See the
discussion in Section \ref{sec:argue}. 
 
	\section {Computing the stretch-factor of $H$} \label{sec:sf}
	
	In this section, we present a simple algorithm, named \textsc{Fast-Stretch-Factor}, that can compute stretch-factor of any spanner $H$ generated by \textsc{Fast-Sparse-Spanner}. 
	Refer to Algorithm~\ref{alg:sf}. It is a minor modification of the
	\textsc{Greedy-Merge-Light} algorithm. It needs the quad-tree
	$T$ used for constructing $H$. 
	
	Let $\Xi$ be the set of non-empty	leaves of $T$ and $X$ be all pairs of distinct leaves in $\Xi$.  When
	$H$ was constructed by our algorithm  \textsc{Fast-Sparse-Spanner}, a
	subset $M$ of  $X$ were considered for merging in Steps 4 and 5. At
	that time, it was ensured 
	that for every leaf pair $\{\sigma_i,\sigma_j\}$ in $M$ and for every point pair $\{u \in \sigma_i,v\in \sigma_j\}$, there exists  a  $t$-path in $H$. Thus, it is enough to consider  the remaining leaf pairs in $X \setminus M$ and look for the set $\Gamma$ of point pairs (points from two different leaves) not having $t$-paths in $H$. Clearly, inside every leaf, it is impossible to have a pair that does not have a $t$-path since we have used  \textsc{FG-Greedy} to create local $t$-spanners inside every leaf.
	However, in doing so, we do not modify $H$ (since we are computing its stretch-factor). The algorithm returns $t_H := \max(t,\max_{\{u,v\}\in \Gamma} \pi(u,v) / |uv|)$. 
	
		\begin{algorithm}[H]
		
		$t_H=t$\;
		
		\For{\upshape \textbf{each} $\{\sigma_i,\sigma_j\} \in X \setminus M$}{
			
			Let $P_i, P_j$ be the two pointsets inside the leaves $\sigma_i,\sigma_j$, respectively\;
			
			Sort $P_i$ based on their distances from the leader point   $\ell_{\sigma_j}$\;
			
			Sort $P_j$ based on their distances from the leader point $\ell_{\sigma_i}$\;
			
			$\texttt{bridges} = \emptyset$\; 	
			\If{\upshape the edge $\{\ell_{\sigma_i},\ell_{\sigma_j}\} \in H$ }{
				$\texttt{bridges} = \texttt{bridges} \cup \{\{\ell_{\sigma_i},\ell_{\sigma_j}\}\}$\;
			}

			\For{\upshape\textbf{each} $u \in P_i$}{
				\For{\upshape\textbf{each} $v \in P_j$}{
					\If{\upshape $\nexists$ $x\curly y \in \texttt{bridges}$ such that $\dfrac{t\cdot |ux| + |x\curly y| + t \cdot |yv|  } {|uv| }\leq t$ } {
						\If{\upshape the path $u \curly v$ returned by \textsc{Greedy-Path}$(H,u,v)$ is a $t$-path between $u,v$}{
							$\texttt{bridges}  = \texttt{bridges} \cup \{u \curly v\}$\;
						}
						\ElseIf{\upshape the shortest path $\pi(u,v)$ between $u,v$ returned by $A^*$ is a $t$-path}{
								$\texttt{bridges}  = \texttt{bridges} \cup \{\pi(u,v)\}$\;
							}
							\Else{
								$t_H = \max\left(t_H, \dfrac{|\pi(u,v)| }{ |uv|}\right)$;\
							}			
						}	
					}	
				}
			}
			
			\textbf{return} $t_H$\;
			\caption{\textsc{Fast-Stretch-Factor}$(P,t,H,T,\Xi,M)$}
			\label{alg:sf}
		\end{algorithm}

	Next, we derive the asymptotic time and space complexities for \textsc{Fast-Stretch-Factor} in Theorem~\ref{thm:sf}.
	\begin{theorem} \label{thm:sf}
		For fixed values of $h$ and $k$, 
	\textsc{Fast-Stretch-Factor} runs in $O(n^3\log n)$ expected time and uses $O(n)$ extra space. 
\end{theorem}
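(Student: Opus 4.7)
The plan is to bound separately (a) the total work done across all leaf pairs iterated in the outer loop and (b) the peak auxiliary space at any instant, piggy‑backing on the structural facts already proved for \textsc{Fast-Sparse-Spanner}. First I would count the leaf pairs: since every non‑empty leaf stores at least one point and at most $k=O(1)$ points, $|\Xi|=O(n)$, hence $|X\setminus M|\le|X|=\binom{|\Xi|}{2}=O(n^2)$. The outer loop therefore executes $O(n^2)$ iterations; membership in $M$ can be tested in expected $O(1)$ time via a hash table, which is the sole source of the ``expected'' qualifier in the runtime bound.

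Next I would upper‑bound the cost of one iteration. The two sorts operate on pointsets of size $\le k=O(1)$, so they cost $O(1)$ each. The doubly nested loop over $P_i\times P_j$ runs $O(k^2)=O(1)$ times. Inside the body, the \texttt{bridges} cache contains at most $O(k^2)=O(1)$ entries, so the bridge scan is $O(1)$. The dominant cost is the single shortest‑path style probe: by Theorem~\ref{thm:sparse}, $H$ has $O(n)$ edges, so one \textsc{Greedy-Path} call costs $O(n)$ and one $A^*$ call (which in the worst case degenerates to Dijkstra on a sparse graph with a binary heap priority‑queue) costs $O(n\log n)$. The per‑pair work is therefore $O(n\log n)$, and multiplying by $O(n^2)$ iterations yields the claimed expected time bound of $O(n^3\log n)$.

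For space, I would argue that the inputs $P,H,T,\Xi,M$ are given and not counted as extra. At any moment the algorithm maintains: (i) the sorted copies of $P_i,P_j$, of total size $O(1)$; (ii) the \texttt{bridges} set, which holds $O(k^2)=O(1)$ paths in $H$, each of length $O(n)$, contributing $O(n)$ words; (iii) the priority‑queue and bookkeeping arrays used by $A^*$, requiring $O(n)$ words; and (iv) the per‑vertex \texttt{visited}/\texttt{path} arrays used by \textsc{Greedy-Path}, also $O(n)$. Because \texttt{bridges} is reset for each leaf pair, these contributions do not accumulate, so the total extra space is $O(n)$.

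The main obstacle I anticipate is justifying the per‑pair bound cleanly: one must be careful that the bridge cache is bounded by $O(k^2)$ (rather than by the number of point pairs already processed for that leaf pair, which is anyway $O(k^2)$), and that each bridge's length bookkeeping is $O(1)$ if one stores only endpoints plus the cached length, while the $O(n)$ per‑bridge storage cost is only incurred when the underlying path itself is kept around for the output. The only other subtlety is the ``expected'' qualifier, which I would explicitly trace back to the hash‑based representation of $M$ (and, optionally, to hash‑based adjacency lookups inside $A^*$); every other component is deterministic worst‑case.
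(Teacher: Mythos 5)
Your proposal is correct and follows essentially the same route as the paper's proof: $O(n^2)$ leaf pairs in $X\setminus M$, expected $O(1)$ hash-table membership tests for $M$ (the sole source of the ``expected'' qualifier), $O(1)$ sorting and bridge-cache costs since $k=O(1)$, a dominant $O(n\log n)$ per-pair path probe via the sparsity of $H$, and $O(n)$ space for \textsc{Greedy-Path} and $A^*$. Your accounting of the \texttt{bridges} storage ($O(k^2)$ paths of length $O(n)$ each, hence $O(n)$ words) is in fact slightly more careful than the paper's, which charges only $O(1)$ for the bridges, but both land on the same $O(n)$ total.
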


\begin{proof}
There are $\Theta(n^2)$ non-empty leaf pairs in $X \setminus M$. If $M$ is maintained using a hash-table, it takes $O(1)$ expected time to whether a pair $\{\sigma_i,\sigma_j\} \in M$. Every merging maintains no more than $k^2$ bridges. Since \textsc{Greedy-Path} runs in $O(n)$ time and $A^*$ in $O(n\log n)$ time, for every pair in $X \setminus M$, we spend $O(k^4 \cdot n\log n ) =  O(n\log n)$ time to verify the existence of $t$-paths among $k^2$ pairs. The two sorting calls run in $O(k \log k) = O(1)$ time each. Hence, the total expected time taken by \textsc{Fast-Stretch-Factor} is $O(n^2 \cdot n\log n) = O(n^3\log n)$. 

For storing the bridges found so far, $O(1)$ space is needed. Both \textsc{Greedy-Path} and $A^*$ calls need $O(n)$ space each.  So, the total space complexity amounts to $O(n)$.
\end{proof}

	\textbf{Remarks.} In this algorithm, we initialize $t_H$ with $t$. So, if the actual stretch-factor is less than $t$, it will still return $t$ as the stretch-factor. 
	
	Note that the pairs in $X \setminus M$ are faraway leaf pairs since the ones which are at most $h$ hops away from each other have already been merged while constructing $H$. So, in most cases, \texttt{bridges} contain a very few bridges which are enough to verify the existence of $t$-paths between all points pairs in $P_i \times  P_j$. Further, the $t$-paths between  faraway point pairs have a low number of edges because of the existence of the long WSPD edges in $H$. Consequently, the two path-finding algorithms, \textsc{Greedy-Path} and $A^*$, terminate fast. This helps our algorithm to be faster than the naive Dijkstra-based algorithm in practice that runs in $O(n^2\log n)$ time on sparse graphs. Refer to Section~\ref{sec:exp} for experimental evidence.
	
	\section{Experiments}\label{sec:exp}
	
	We have implemented our algorithms in \textsf{C}\texttt{++}$17$ using the \textsf{CGAL} library. Two machines with same configuration were deployed for conducting the experiments. Both of them are equipped with  \textsf{Intel i9-12900K} processors and $32$ GB of main memory, and run \textsf{Ubuntu Linux 22.04 LTS}.  Our code was compiled using the \texttt{g++} compiler, invoked with \texttt{-O3} flag to achieve fast real-world speed. 
	
	Algorithms \textsc{Fast-Sparse-Spanner} and \textsc{Fast-Stretch-Factor} have
	been engineered to be as fast as possible. 
	From \textsf{CGAL}, the \texttt{Exact\_predicates\_inexact\_constructions\_kernel} is used for accuracy and speed.  For a quad-tree implementation, we have used \texttt{CGAL::Quadtree}. To engineer \textsc{FG-Greedy}, we have used  \texttt{boost::dijkstra\_shortest\_paths\_no\_color\_map} from the  \texttt{boost} library for a robust implementation of the Dijkstra's shortest path algorithm.
	For maintaining sets (where no ordering is required), \texttt{std::unordered\_set} have been used. 
	
	\medskip
	
	\noindent
	\textit{Synthetic pointsets.} 
	Following the strategies used in previous experimental work 
	\cite{bentley-gtsp-92, nz-geomst-01, friederich2023experiments, anderson2022bounded}, we 
	used the following distributions for drawing random
	pointsets. Refer to Fig.~\ref{fig:dist}, for a visualization of the point distributions. 
	
		\begin{figure}[h]
		\centering
		\begin{subfigure}[b]{0.22\linewidth}
			\centering
			\includegraphics[scale=0.15]{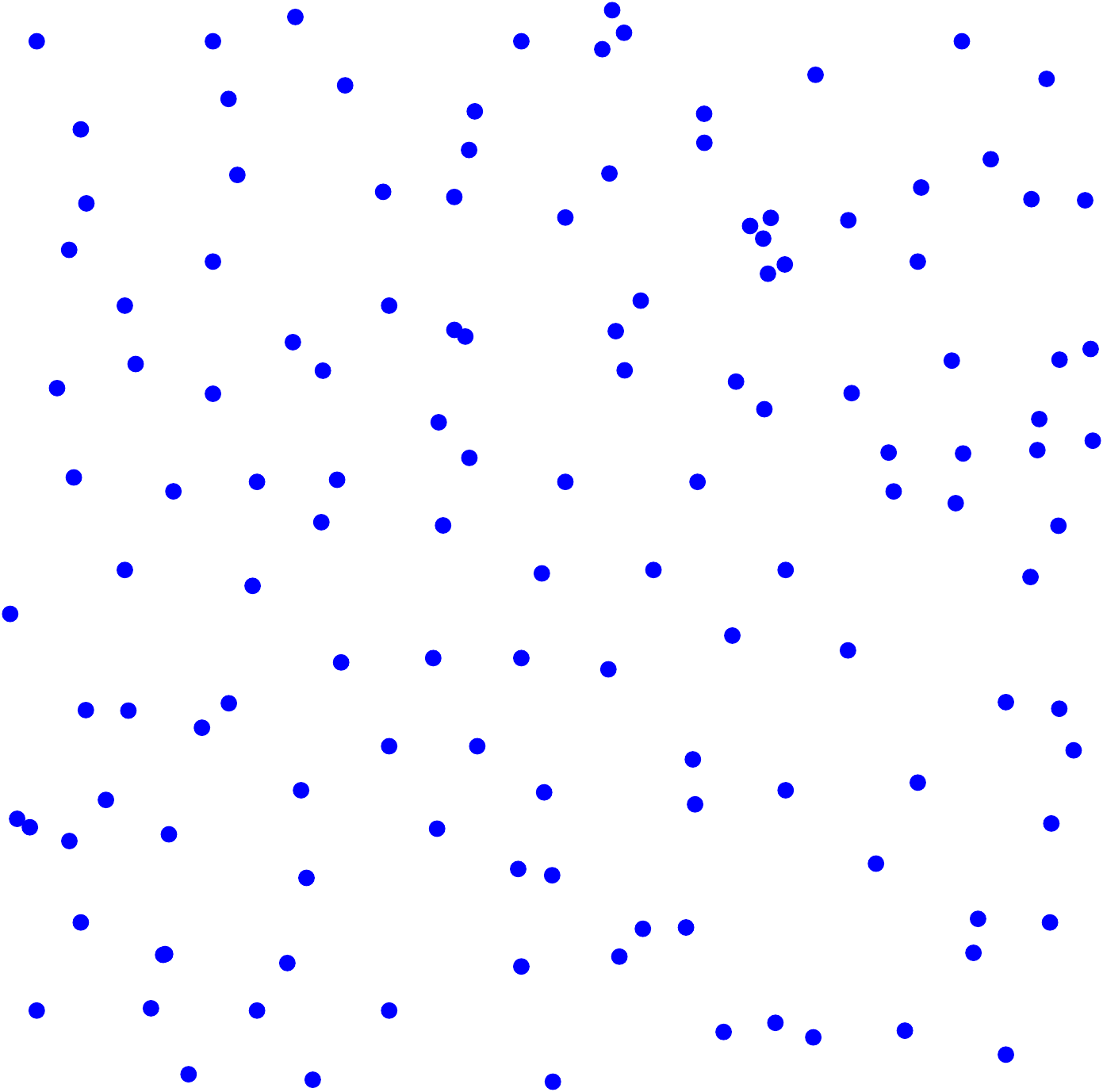}
			\subcaption{\texttt{uni-square}}
		\end{subfigure}
		\begin{subfigure}[b]{0.25\linewidth}
			\centering
			\resizebox{0.8\linewidth}{!}{\includegraphics{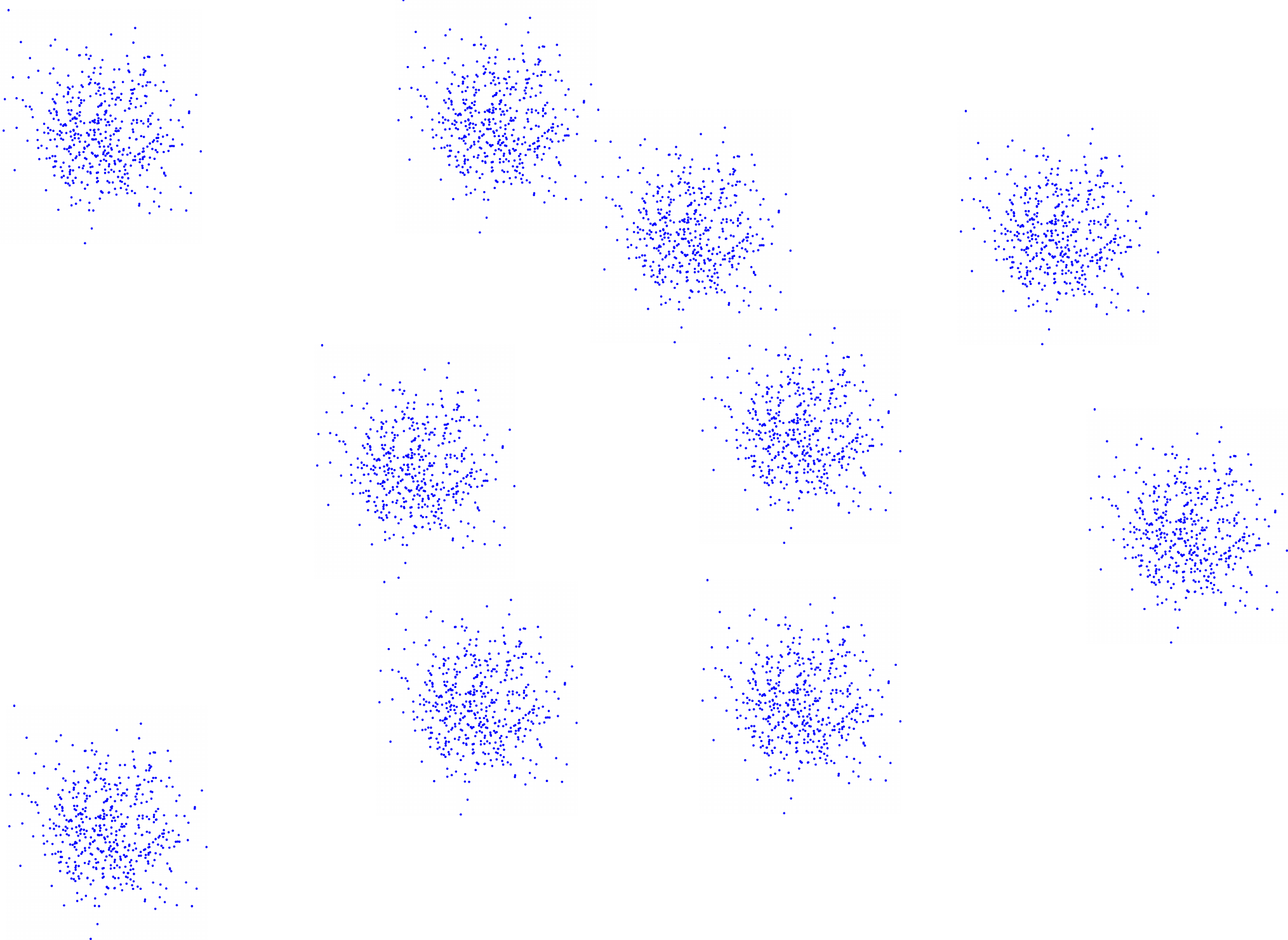}}
			\subcaption{\texttt{normal-clustered}}
		\end{subfigure}
		\begin{subfigure}[b]{0.22\linewidth}
			\centering
			\resizebox{0.6\linewidth}{!}{\includegraphics{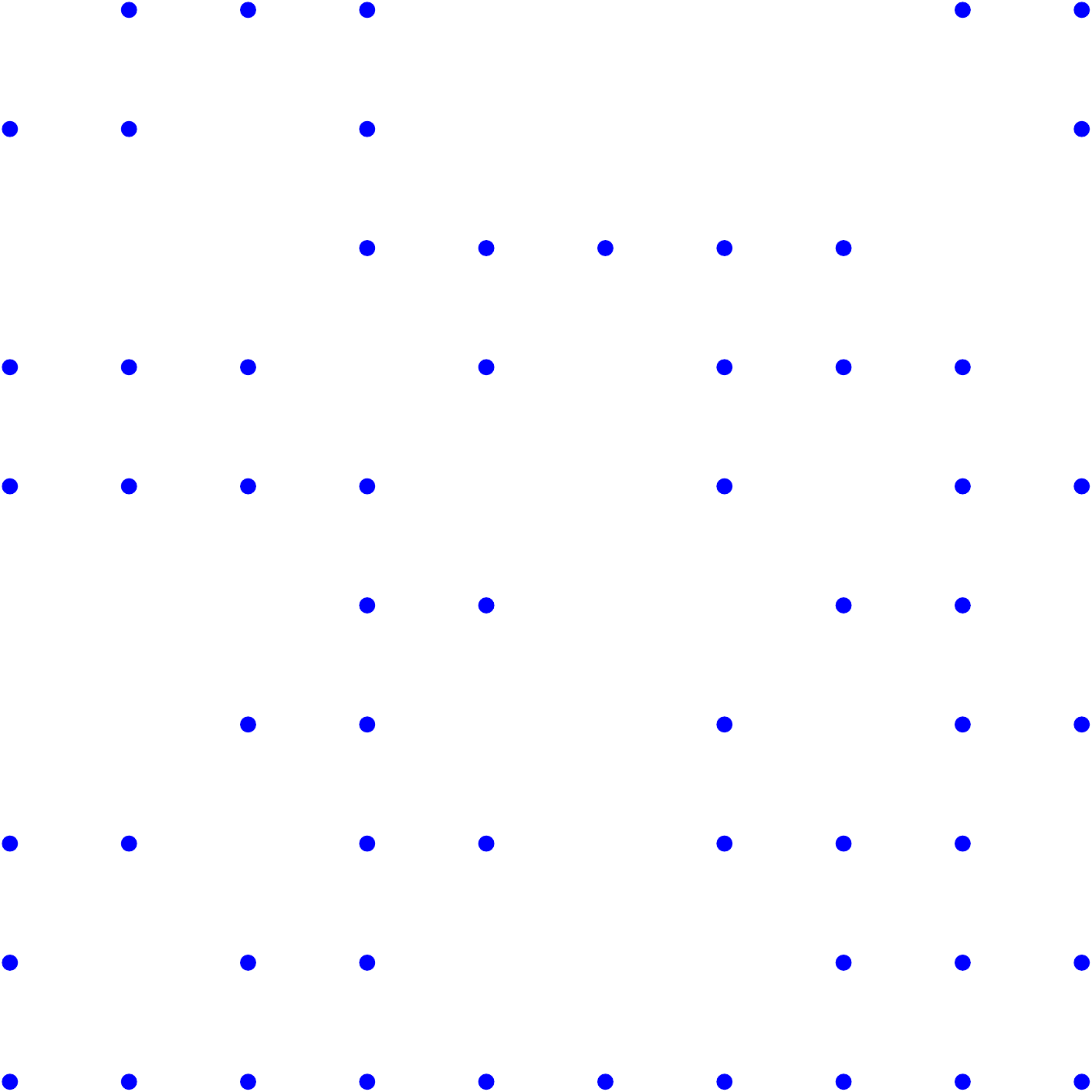}}
			\subcaption{\texttt{grid-random}}
		\end{subfigure}
	\vspace{2pt}
		\begin{subfigure}[b]{0.22\linewidth}
			\centering
			\resizebox{0.6\linewidth}{!}{\includegraphics{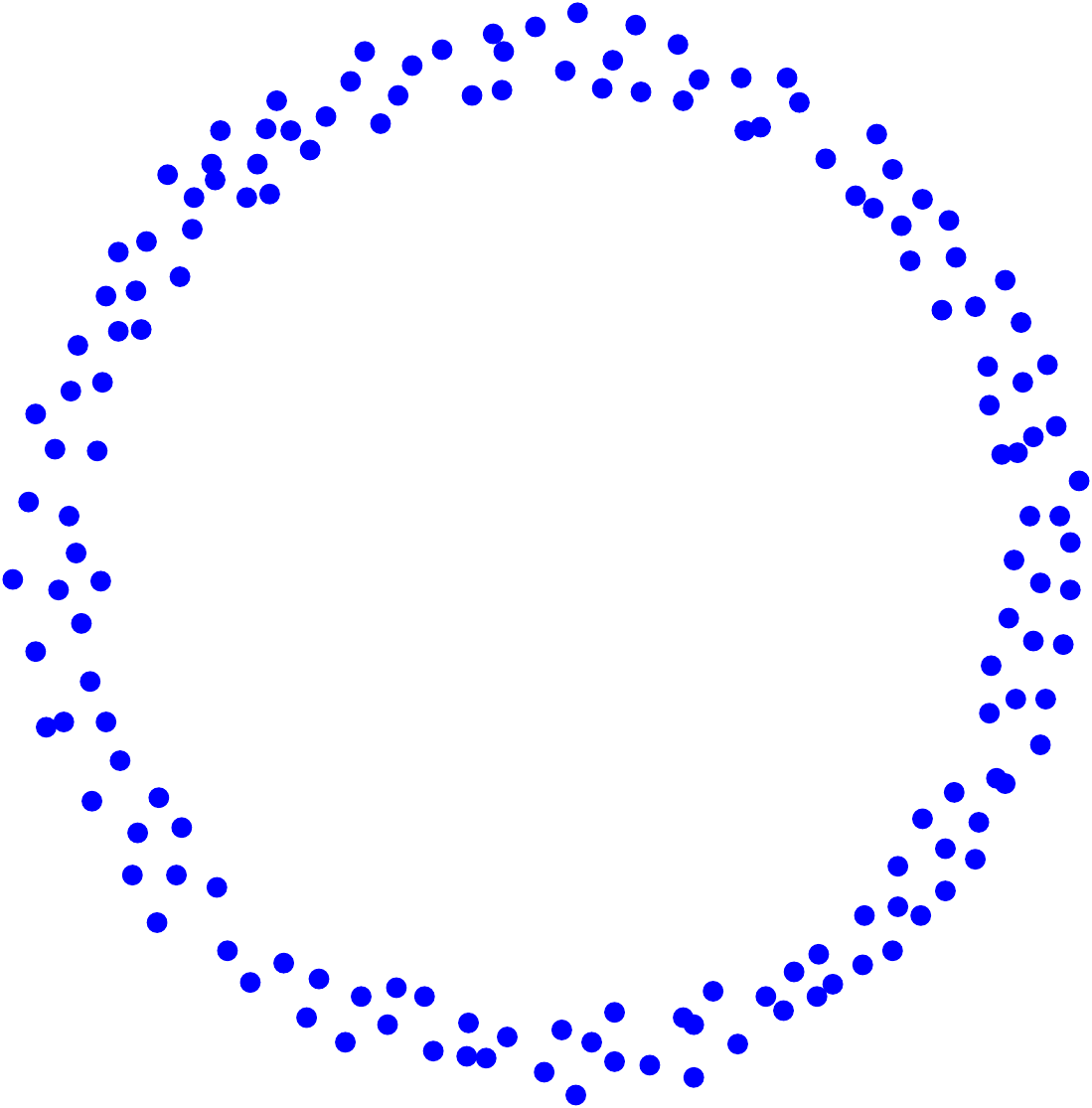}}
			\subcaption{\texttt{annulus}}
		\end{subfigure}
		\begin{subfigure}[b]{0.22\linewidth}
			\centering
			\resizebox{0.8\linewidth}{!}{\includegraphics{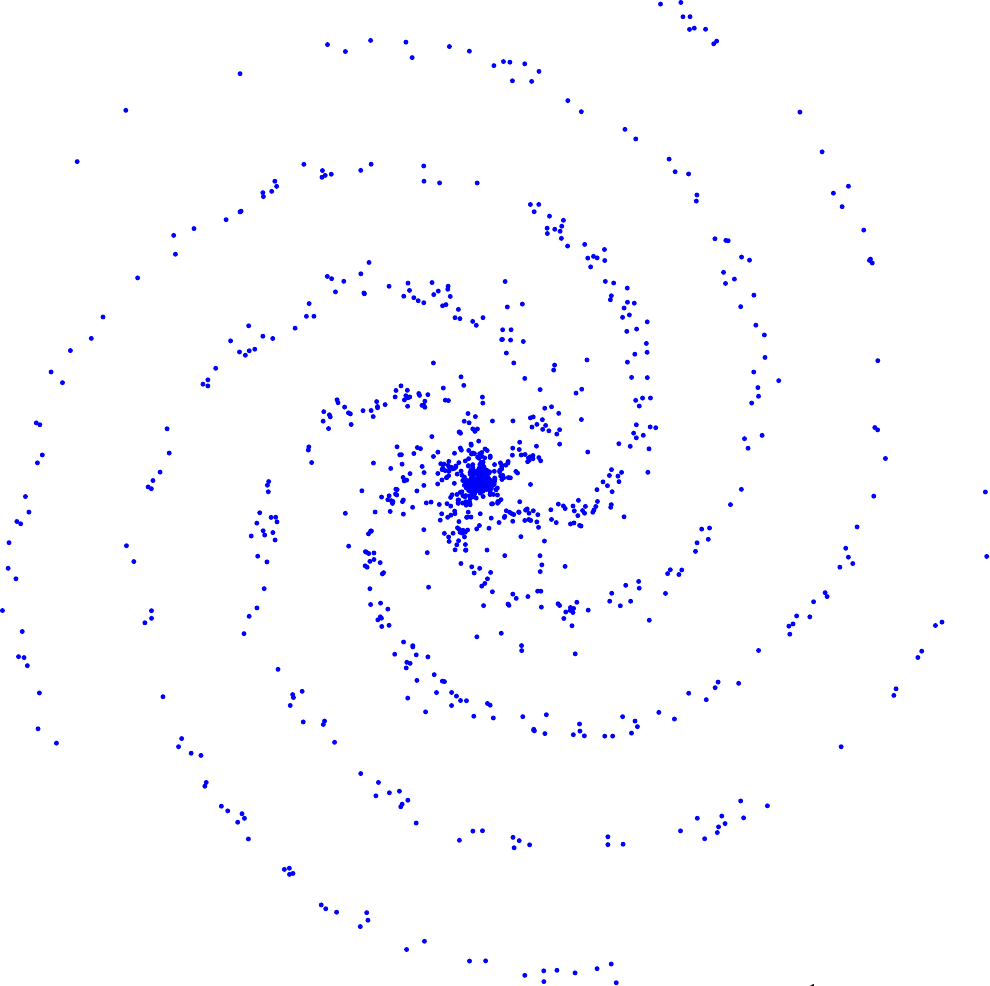}}
			\subcaption{\texttt{galaxy}}
		\end{subfigure}
		\begin{subfigure}[b]{0.22\linewidth}
			\centering
			\resizebox{0.6\linewidth}{!}{\includegraphics{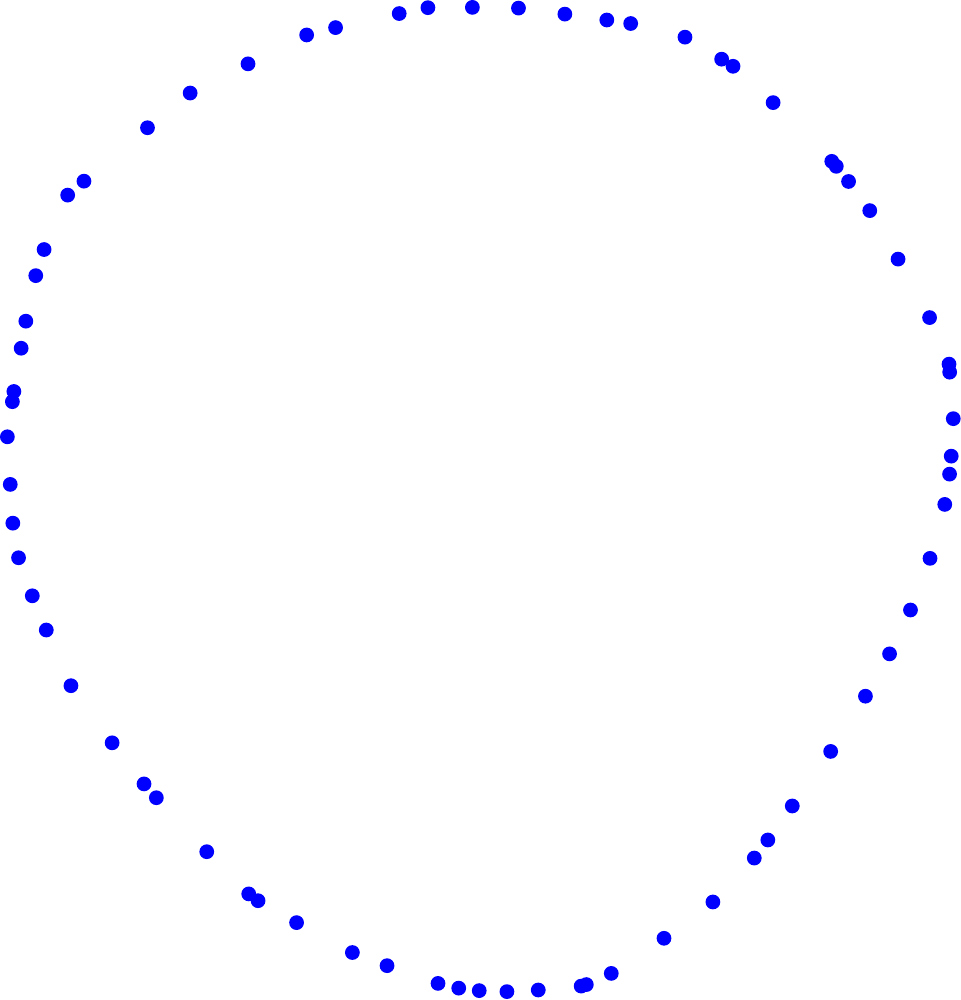}}
			\subcaption{\texttt{convex}}
		\end{subfigure}
		\begin{subfigure}[b]{0.22\linewidth}
			\centering
			\vspace{4pt}
			\resizebox{0.6\linewidth}{!}{\includegraphics{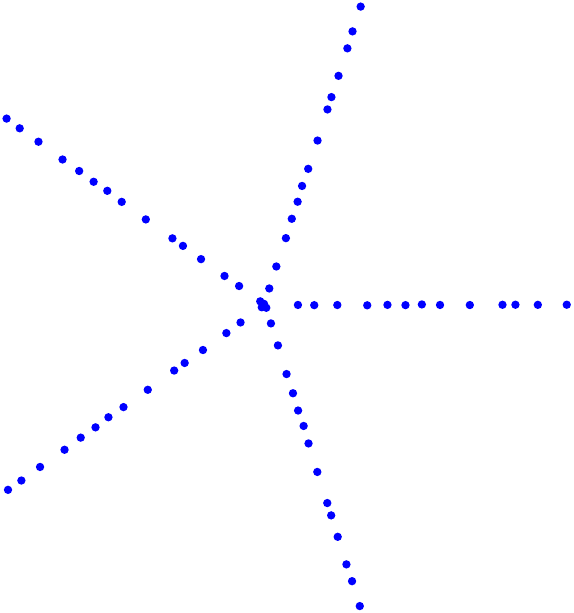}}
			\subcaption{\texttt{spokes}}
		\end{subfigure}
		\caption{The point distributions used in our experiments.}
		\label{fig:dist}
	\end{figure}

	\begin{enumerate}[label=(\alph*)]\itemsep0pt
		\item \texttt{uni-square}. The
		\texttt{CGAL::Random\_points\_in\_square\_2} generator was 
		used to generate points uniformly inside a square.
		\item \texttt{normal-clustered}. A set of $\sqrt{n}$ clusters placed
		randomly in the plane. Each cluster contains $\sqrt{n}$ normally
		distributed points (mean and standard-deviation are set to 2.0). The
		coordinates were 
		generated using \texttt{std::normal\_distribution<double>}.
		
		
		\item \texttt{grid-random}. Points were 
		generated on a $\lceil 0.7n \rceil \times \lceil 0.7n\rceil $ unit
		square grid. The value $0.7$ was 
		chosen arbitrarily to obtain well-spaced non-contiguous grid
		points. The coordinates are integers and were 
		generated independently using \texttt{std::uniform\_int\_distribution}.
		\item \texttt{annulus}. Points were 
		generated inside an annulus whose outer radius is set to 500 and the inner 
		radius was
		set to 400. We 
		used  \texttt{std::uniform\_real\_distribution} to generate the coordinates.
		
		\item \texttt{galaxy}. Points are generated in the shape of a spiral galaxy having outer five arms~\cite{itinerantgames_2014}.
		
		\item \texttt{convex}. Points were 
		generated using \texttt{CGAL::random\_convex\_set\_2}. 
		
		\item \texttt{spokes}{\tiny }. Points were 
		generated in the shape of five spokes. 
	\end{enumerate}

	\noindent
	\textit{Real-world pointsets.} The following real-world pointsets were
	used in our experiments. 
	Note that they are not known to follow any specific distribution. We removed the duplicate points from them before running the experiments.
	
	\begin{enumerate}[label=(\alph*)]\itemsep0pt
		\item \texttt{burma}~\cite{tsp,anderson2022bounded}. An $33,708$-element pointset representing cities in Burma.
		
		\item \texttt{birch3}~\cite{bus2018practical,ghosh2019unit,anderson2022bounded,friederich2023experiments}. An $99,999$-element pointset representing random clusters at random locations.
		
		\item \texttt{monalisa}~\cite{tsp,ghosh2019unit,anderson2022bounded,friederich2023experiments}: A $100,000$-city TSP instance representing a continuous-line drawing of the Mona Lisa. 
		
		\item  \texttt{KDDCU2D}~\cite{bus2018practical,anderson2022bounded,ghosh2019unit,friederich2023experiments}.  An $104,297$-element pointset representing the first two dimensions of a protein data-set.
		
		\item \texttt{usa}~\cite{tsp,ghosh2019unit,anderson2022bounded,friederich2023experiments}.  A $115,475$-city TSP instance representing (nearly) all towns, villages, and cities in the United States.
		
		\item \texttt{europe}~\cite{bus2018practical,ghosh2019unit,anderson2022bounded,friederich2023experiments}.  An $168,435$-element pointset representing differential coordinates of the map of Europe.
		
	\end{enumerate}

	\noindent
	\textit{Values for $t',k,h$.} Refer to Algorithm~\ref{alg:fss}. We
	used 
	$t'$ for WSPD-spanner construction on the leader points of the
	non-empty leaves. In our experiments, we found that $t'=1.25$ suffices
	when $1.1\leq t \leq 1.25$. However, when $t < 1.1$ or $t>1.25$, we
	used $t'=t$. 
	
		\begin{figure}[h]
		\centering
    \begin{minipage}{0.58\textwidth}
			\centering
		\includegraphics[scale=0.75]{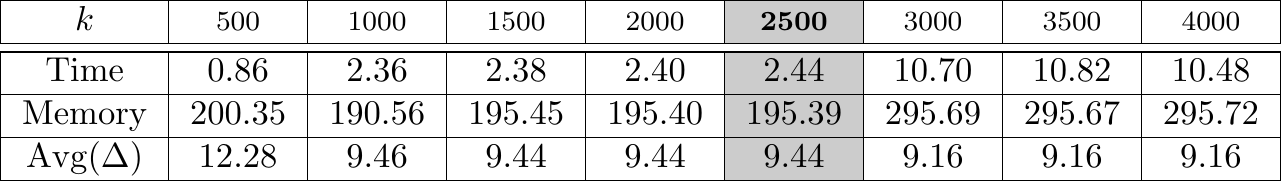}	
		\end{minipage}
    \begin{minipage}{0.4\textwidth}
			\centering
\includegraphics[scale=0.7]{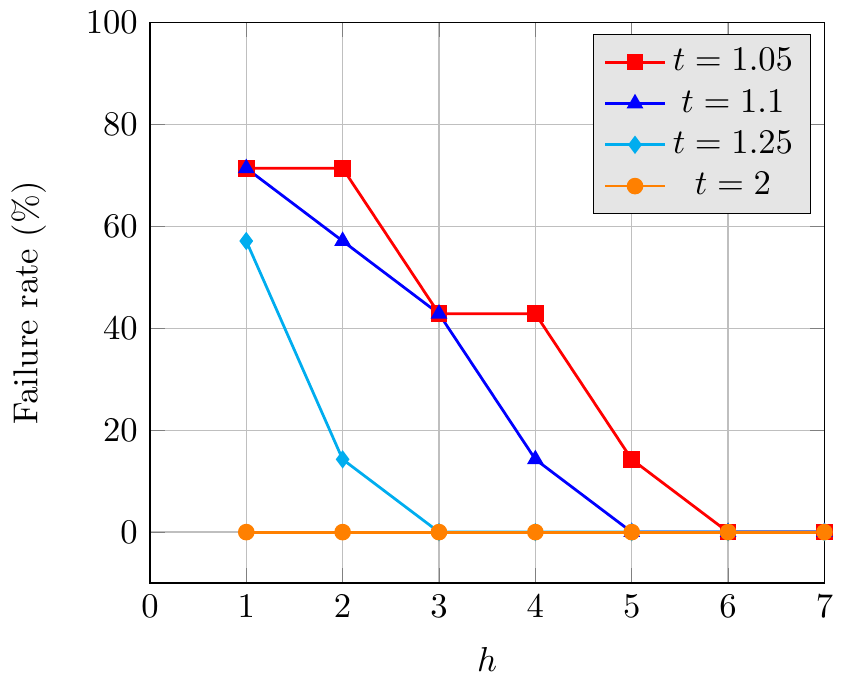} 
		\end{minipage}

		\caption{Left: Determining an appropriate value for $k$; Right: Determining appropriate values for $h$ for different values of $t$ used in our experiments.}
		\label{fig:kh-values}
	\end{figure}
	
	To find a suitable value for $k$, we ran an experiment with a $128K$-element pointset, drawn from the \texttt{uni-square} distribution. We varied $k$ from $500$ to $4000$, using increments of $500$. Refer to Fig.~\ref{fig:kh-values} (left). We observed that after $k=2500$, runtime increases substantially along with memory usage, and quite surprisingly, average-degree did not decrease, as one would expect. This motivated us to fix $k$ to $2500$ in our algorithm. We note that as $k$ increases, every execution of \textsc{FG-Greedy} gets more resource-intensive. Further, every execution of \textsc{Greedy-Merge} (Algorithm~\ref{alg:greedymerge}) and \textsc{Greedy-Merge-Light} (Algorithm~\ref{alg:greedymergelight}) used in Steps 4 and 5 also gets more expensive in terms of time and memory usage.  As $k$ approaches $n$, the output spanner converges to an actual greedy-spanner. 
	
	For choosing appropriate values for $h$, we  generated seven $1M$-element pointsets, one from each of the seven  synthetic distributions. Then, we ran  \textsc{Fast-Sparse-Spanner} on them by varying $h$ from $1$ to $7$. See Fig.~\ref{fig:kh-values} (right). In this case, the \emph{failure rate} is defined as the percentage of samples (out of $7$) where the desired stretch-factors were not achieved. For every $t \in \{1.05,1.1,1.25,2\}$ and $h \in \{1,2,\ldots,7\}$, we measured the  failure rates. The objective is to find the $h$-values that give us $0\%$ failure rates. Using the experimental data, the optimum $h$-values for the four stretch-factors $1.05,1.1, 1.25,2$ were set to $6,5,3,1$, respectively.
	
	\medspace
	
	\noindent
	\textit{Stretch-factors of the  \textsc{Fast-Sparse-Spanner} graphs.}  Surprisingly, 
	even though \textsc{Fast-Sparse-Spanner} only strives to get as close
	as possible to the desired stretch-factor, it never failed to achieve
	the	stretch-factor in our experiments.  To gain confidence, we ran our algorithm using pointsets drawn from the seven distributions. In this experiment, nine values of $n$ were chosen from  $\{1K,2K,\ldots,128K,256K\}$ and $t$  from  $\{1.05,1.1,1.25,2\}$. For every value of $n$, we have used $100$ samples. Thus in total, we conducted $7 \cdot 9 \cdot 100 \cdot 4 = 25,200$ trials and measured the stretch-factors of the output spanners using \textsc{Fast-Stretch-Factor}. Our algorithm  never missed the target stretch-factor. We have also performed a similar experiment on the aforementioned real-world pointsets using the four stretch-factors and observed that \textsc{Fast-Sparse-Spanner} could construct graphs with the desired stretch-factors. 
	
	\medspace

	\noindent
\textit{Comparison with the popular algorithms.} We have compared \textsc{Fast-Sparse-Spanner} with seven other popular algorithms mentioned in Section~\ref{sec:intro}. See Fig.~\ref{fig:8algs}. For every value of $n \in \{1K,2K,\ldots,128K\}$, we have drawn $10$ samples from the \texttt{uni-square} distribution. The stretch-factor was fixed at $1.1$ throughout the experiment.

	\begin{figure}[h]
	\centering
	\includegraphics{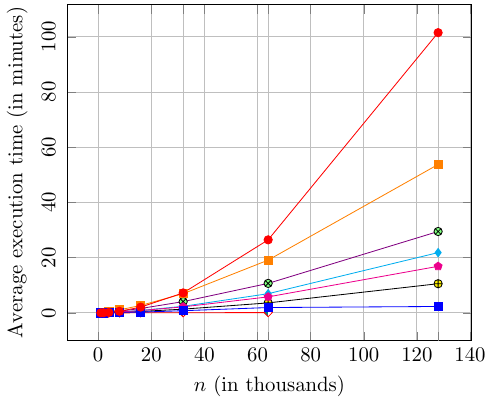} 
	\includegraphics{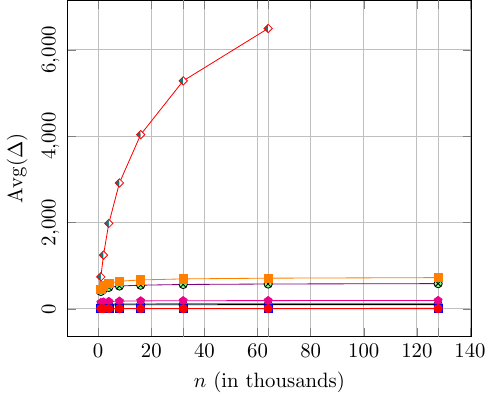}
	\includegraphics{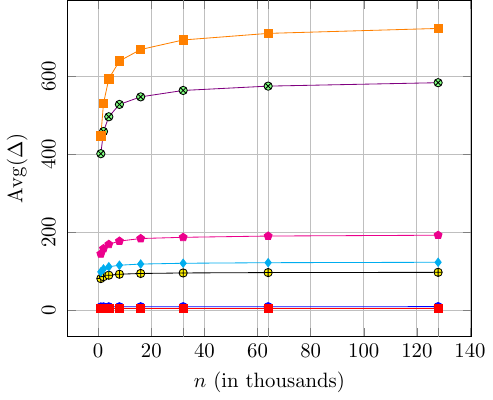}
	\includegraphics[scale=0.75]{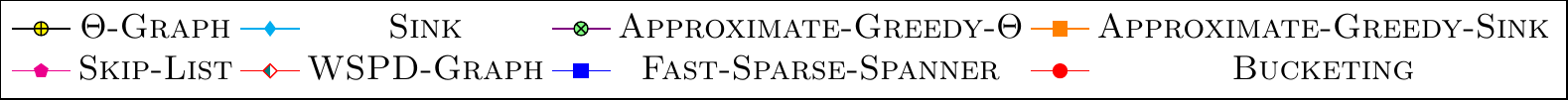}
	\caption{In this experiment, $n \in \{1K,2K,\ldots,128K\}$, $t=1.1$. The pointsets were drawn from the \texttt{uni-square} distribution. Left: Average execution times for the eight algorithms; Middle: Average-degrees of the spanners generated by the eight algorithms; Right: Average-degrees of the spanners generated by the eight algorithms without the WSPD-spanners. The WSPD-spanner algorithm crashed after $n=64K$.}
	\label{fig:8algs}
\end{figure}

 The \textsc{Approximate-Greedy} algorithm needs a bounded-degree spanner in its Step 1~\cite{das1997fast}. In the experimental paper~\cite{farshi2010experimental} by Farshi and Gudmundsson, the authors have used \textsc{Sink}-spanner in Step 1 since such spanners are bounded-degree. In our experiments, we have also used $\Theta$-\textsc{Graph} (although not guaranteed to be bounded-degree) in its place to observe how it performs. Thus, we have used two versions of the same algorithm, \textsc{Approximate-Greedy-$\Theta$} and \textsc{Approximate-Greedy-Sink}.

In terms of speed,  we found that WSPD-spanner algorithm was the fastest, but it ran out of memory after $n=64K$ since it tends to generate near-complete graphs for low values of $t$. Our algorithm was the second fastest, and \textsc{Bucketing} turned out to be the slowest. 

Average-degrees of the WSPD-graphs skyrocketed even for values of $n \leq 64K$. In this regard, \textsc{Bucketing} came out as the winner, and \textsc{Fast-Sparse-Spanner} was very close to \textsc{Bucketing}. The other six algorithms generated graphs that have substantially more average-degrees across all values of $n$. 
Both in terms of speed and average-degree, \textsc{Approximate-Greedy-$\Theta$} always beat \textsc{Approximate-Greedy-Sink} everywhere.
Hence, it can be concluded that \textsc{Bucketing} is our closest competitor when average-degree is the primary concern.
 
\medspace

	\noindent
	\textit{Experimental comparison with \textsc{Bucketing}.} Since the goal is to construct
	sparse spanners with low stretch-factors, we have used the following
	four values of $t$: $1.05,1.1,1.25,2$ for comparing 
	 \textsc{Fast-Sparse-Spanner} with \textsc{Bucketing}. The comparison is done based on runtime, memory usage,  average-degree, and diameter. Since \textsc{Bucketing} runs very slow, we have used $3$ trials for every value of $n$ and computed average runtimes, memory usages, average-degrees, and diameters. But for \textsc{Fast-Sparse-Spanner}, we have used $10$ trials since it runs fast to completion. 
	
	\begin{itemize}\itemsep0pt
		
		\item\text{\textit{Runtime}} and \text{\textit{memory usage}}. Fast
		execution speed and low memory usage are two desired
		characteristics of spanner construction algorithms when one tries to
		construct low-stretch-factor spanners on large pointsets. Refer to
		the top rows in the Figs.~\ref{fssvsbucketing:us},\ref{fssvsbucketing:nc},\ref{fssvsbucketing:gr},\ref{fssvsbucketing:an},\ref{fssvsbucketing:ga},\ref{fssvsbucketing:co},,\ref{fssvsbucketing:sp} for runtime
		and the bottom rows for memory usage comparisons for pointsets drawn from the seven pointset distributions (Fig.~\ref{fig:dist}). The legends can be found in Fig.~\ref{fig:8algs}. 
		In the synthetic pointset trials, we varied $n$ from $1K$ to $128K$, with increments by a factor of $2$.

		In our
		experiments with both synthetic and real-world pointsets, we found
		that \textsc{Fast-Sparse-Spanner} is remarkably faster than our
		closest competitor \textsc{Bucketing} as  $n$ gets larger. For $128K$-element  \texttt{uni-square} pointsets, the observed speedups are approximately $ 85, 41, 12 , 2$, for the four stretch-factors $1.05,1.1,1.25,2$, respectively. 
		Among the seven distributions used in our experiments, the highest speedups are observed for the
		\texttt{convex} and \texttt{galaxy} pointsets when $t=1.05$. For $n=128K$, the observed speedup is
		more than $1000$, in both the cases. For low values of $t$ such as $1.05,1.1,1.25$, our
		algorithm is substantially faster than \textsc{Bucketing}.  
		However, with the increase in $t$, \textsc{Bucketing} caught up with \textsc{Fast-Sparse-Spanner} (while still being noticeably slower). Overall, we observed that our algorithm behaved almost like a linear-time algorithm in practice.

					On the memory usage front, \textsc{Fast-Sparse-Spanner} beat \textsc{Bucketing} almost everywhere for large values of $n$. However, in some cases such as the \texttt{normal-clustered} distribution and $t=2$,  \textsc{Bucketing} used less memory but not by much.

				\begin{figure}[h]
			
			\centering
			
			\includegraphics[scale=0.8]{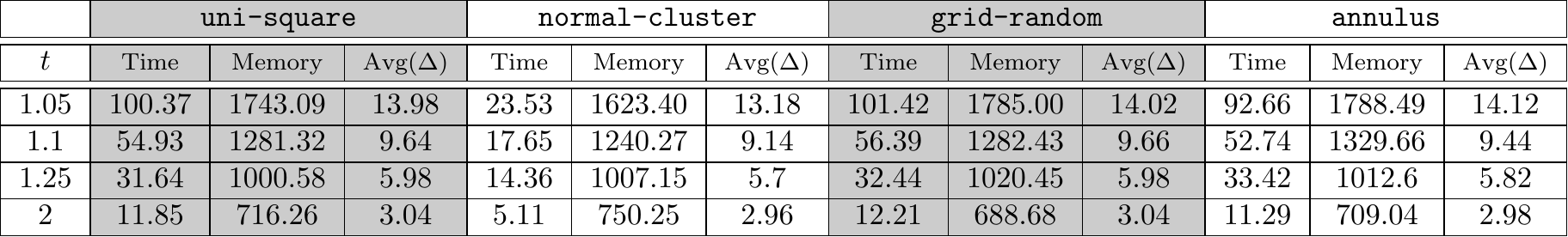}	
			\vspace{2pt}
			
			\includegraphics[scale=0.8]{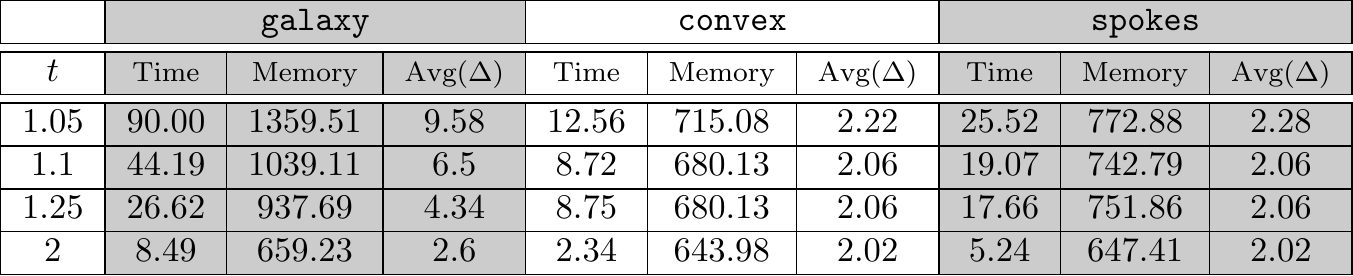}	
			\vspace{2pt}
			
			\caption{Single-threaded runtimes (in minutes), memory usages (in MB), and average-degrees for \textsc{Fast-Sparse-Spanner}, where $n=10^6$. }
			\label{fig:1million}
		\end{figure}

                Next we investigate the performance (time, memory and
                average degree) of \textsc{Fast-Sparse-Spanner} for
                different input point distributions and for different
                values of stretch factors.
		Refer to Fig.~\ref{fig:1million} 
                for the case when $n=10^6$. We found that even for such large pointsets and stretch-factor as low as $1.05$, our algorithm finished within two hours for all the distributions and used a reasonable amount of extra memory (at most $1.8$ GB). When $n=10^6$, \textsc{Bucketing} took around five days (not shown in the table) to construct a $1.1$-spanner on a pointset drawn from \texttt{uni-square} and used $5.8$ GB of main memory. In contrast, our algorithm took just $55$ minutes and used $\approx 1.3$ GB of main memory, making it roughly $130$ times faster. For \texttt{convex} pointsets, for $n=10^6$, our algorithm took a mere $13$ minutes to construct a $1.05$ spanner and used around $715$ MB of extra memory. Note that \textsc{Bucketing} took more than $20$ hours even for a $128K$-element \texttt{convex} pointset. Among the seven distributions, when $n=10^6$, our algorithm ran fastest on \texttt{convex} pointsets and slowest on \texttt{uni-square} and \texttt{grid-random} pointsets.

		Our algorithm beat \textsc{Bucketing} for all the real-world pointsets used in our experiments in terms of speed. See Fig.~\ref{fig:realWorld}. Once again, we observed that \textsc{Bucketing} struggled to complete with \textsc{Fast-Sparse-Spanner} for low values of $t$. For the used real-world pointsets, the best speed-up of $\approx 136$ was observed in the case of \texttt{europe} ($n=168,435$), when $t=1.1$. In most cases, our algorithm used less memory than \textsc{Bucketing}. We note that for some of the real-world pointsets viz., \texttt{burma}, \texttt{birch3}, and \texttt{europe} and certain values of $t$, \textsc{Bucketing} used less memory than \textsc{Fast-Sparse-Spanner} but the differences are quite tolerable in practice. The higher memory usages are marked in bold in Fig~\ref{fig:realWorld}. 
		
		Therefore, we conclude  that irrespective of the geometry of the input pointset, \textsc{Fast-Sparse-Spanner} is substantially faster than \textsc{Bucketing} in practice and in most cases tends to use much less amount of runtime memory.
		
			\begin{figure}
			\centering
			\includegraphics[scale=0.9]{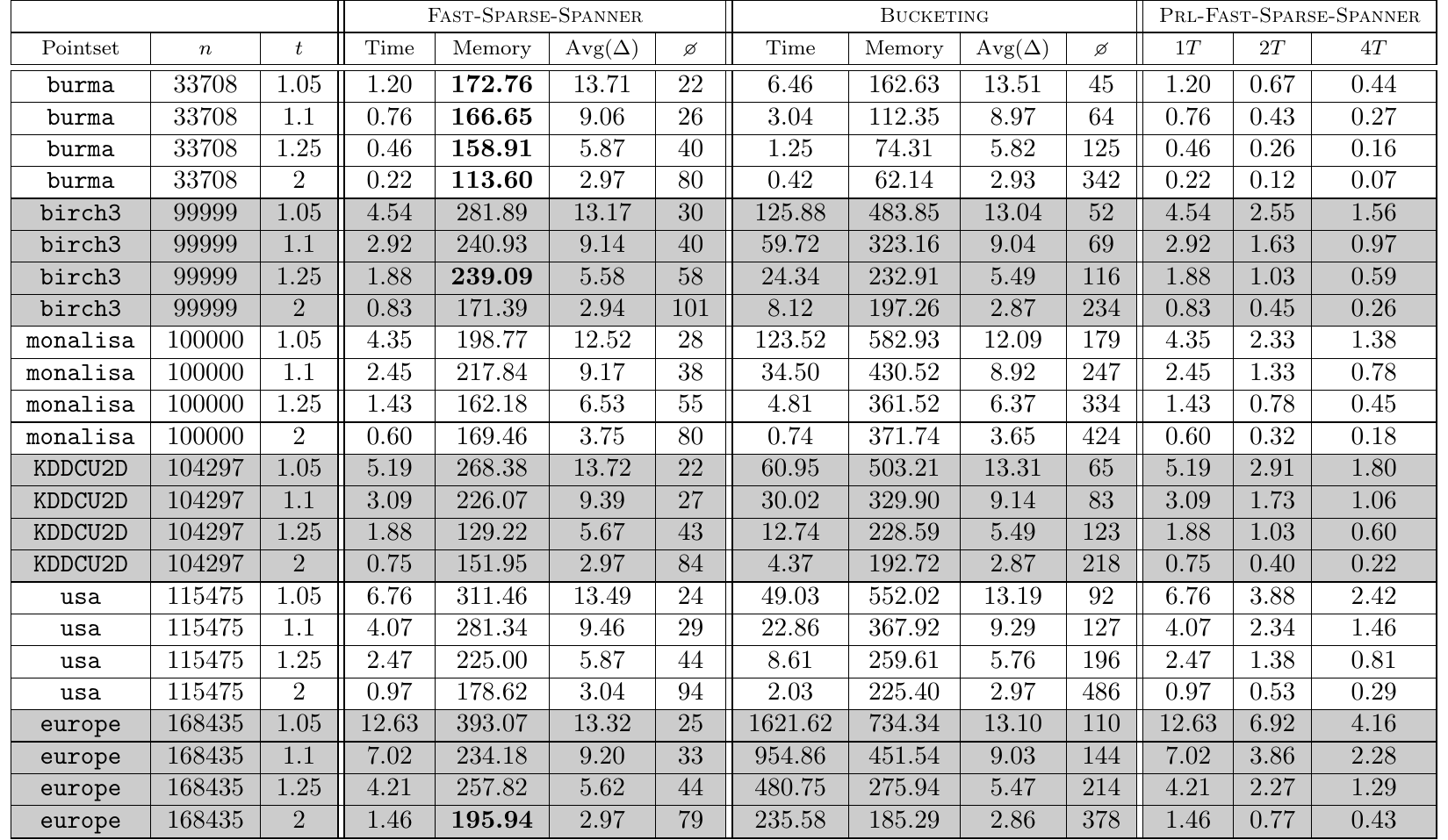}
			\caption{Runtime (in minutes), memory usage (in MB), average-degree (Avg($\Delta$)), and diameter ($\diameter$) in the case of real-world pointsets  are shown.  In this table, \textsc{Prl-Fast-Sparse-Spanner} stands for  \textsc{Parallel-Fast-Sparse-Spanner}; the columns $1T,2T,4T$ show the runtimes for $1$ thread (same as the fourth column in this table), $2$ threads, and $4$ threads, respectively. }
			\label{fig:realWorld}
		\end{figure}

		\item{\textit{Average-degree} (Avg($\Delta$)).} 
		For low stretch-factor spanners, low average-degree is desired since it helps to optimize per-node communication cost. With the decrease in $t$, average-degree tends to increase since more edges must be placed to ensure $t$-paths between all vertex pairs in a spanner. Greedy-spanners are unbeatable in this aspect and there is no other practical algorithm (fast and memory-efficient) that can produce spanners having near-greedy average-degree. In our experiments, we found that average-degrees of the spanners produced by \textsc{Fast-Sparse-Spanner} are not far from that of the greedy-spanners. For average-degree comparisons for synthetic pointsets, refer to the Figs.~\ref{avd:us},\ref{avd:nc},\ref{avd:gr},\ref{avd:an},\ref{avd:ga},\ref{avd:co},\ref{avd:sp} and Fig.~\ref{fig:realWorld} for real-world pointsets. 
		
		For $t=1.05,1.1,1.25,2$, we observed that the average-degrees of the spanners produced by \textsc{Fast-Sparse-Spanner} never exceeded $14.2, 9.7, 6, 3.75$, respectively. The observed differences in average-degrees of the spanners produced by the two algorithms never exceeded a unit. For smaller pointsets, the difference is much less. 
		It shows that our algorithm produced near-greedy size spanners everywhere. In the case of synthetic pointsets, the average-degrees of the spanners did not increase by much even when $n$ is set to $1M$ (see Fig.~\ref{fig:1million}), corroborating the fact that our algorithm is efficient when it comes to placements of edges. For example, when $t$ was set to $1.1$, the average-degree of the spanner produced by \textsc{Bucketing} was observed to be $9.08$ and that of the spanner produced by our algorithm was $9.64$.  We believe that such small increments in average-degrees should be acceptable in most practical applications given that \textsc{Bucketing} runs for days and our algorithm runs to completion in a couple of hours (much faster when multiple threads are used) on $1M$-element pointsets.  
		Further, \textsc{Fast-Sparse-Spanner}   tends to place fewer edges with the increase of $t$, a desired property of spanner construction algorithms. 
		
		Step 2 in our algorithm, where greedy-spanners are constructed inside every non-empty leaf, contributed the most in average-degree of the final spanner.  
		We found that the number of edges placed in steps 3,4 was very low and contributed marginally to average-degree. Step 5 did not put any edges for some pointsets, such as those drawn from the \texttt{uni-square} distribution. 
		To illustrate, for $128K$-element \texttt{uni-square} pointsets, step 2 added $\approx 8.46$ to the average-degree of the final spanner and steps 3, 4, 5 added $\approx 0.23,0.65,0$, respectively. In contrast, for $128K$-element \texttt{normal-clustered} pointsets, the contributions are approximately $8.5,0.1,0.01,0.002$, for steps 2, 3, 4, 5, respectively. Empty spaces (holes) inside the convex hull of the pointsets force our algorithm to place a few edges (compared to $n$) in step 5 for ensuring $t$-paths between distant point pairs whose parent leaves are within $h$ hops of each other in $G_T$, separated by a void (a sequence of empty leaves).

		\item{\textit{Diameter} ($\diameter$).} As mentioned earlier,  the \textit{diameter} of a graph $H$ is defined as the length (number of edges) of the longest shortest path among all vertex pairs in $H$. In our case, the presence of long WSPD edges (placed in Step 3) gives us the benefit of low diameter in most cases. We found that the diameter of the spanners produced by our algorithm is substantially less than the greedy-spanners in most cases, without any intolerable increase in average-degrees. See Fig.~\ref{fig:realWorld} for the diameters obtained for  real-world points and Figs.~\ref{dia:us},\ref{dia:nc},\ref{dia:gr},\ref{dia:an},\ref{dia:ga},\ref{dia:co},\ref{dia:sp} for the synthetic distributions. 
		
		In the case of real-world pointsets, \textsc{Fast-Sparse-Spanner} could always beat \textsc{Bucketing}. In this regard, we observed the best performance for the \texttt{monalisa} pointset ($n=100K$),  $t=1.1$; the diameter of the spanner produced by \textsc{Bucketing} was $6.5$ times more than that of the one constructed by \textsc{Fast-Sparse-Spanner}. Further, our algorithm placed just $\approx 2.8\%$ extra edges, was $\approx14$ times faster, and used $\approx 50\%$ less memory.  
		
		For the \texttt{uni-square}, \texttt{normal-clustered}, \texttt{grid-random}, \texttt{annulus}, and \texttt{galaxy} distributions, \textsc{Fast-Sparse-Spanner} always produced spanners having considerably lower diameter for all the four values of stretch-factors. However, we noticed that for the \texttt{convex} and \texttt{spokes} distributions when $t$ was set to $1.05$, \textsc{Bucketing} beat our algorithm by a considerable difference. In those two cases, the WSPD edges could not help much in reducing the diameter of the spanners. However, with the increase in $t$, our algorithm took the lead. For all the seven distributions, the differences in diameters of the spanners generated by the two algorithms increased with the increase in $t$ ($\geq 1.1$); \textsc{Fast-Sparse-Spanner} generated spanners with far less diameter than the ones constructed by \textsc{Bucketing}. This is expected since greedy-spanners tend to have shorter edges when $t$ is increased. In contrast, \textsc{Fast-Sparse-Spanner} still places long WSPD edges in the spanners, no matter how large $t$ is,  thereby vastly reducing the diameters of the spanners. 
	\end{itemize}
	
	\noindent
	\textit{Efficacy of \textsc{Fast-Stretch-Factor}.}  We compare the Dijkstra-based stretch-factor measurement algorithm (run from every vertex) with our algorithm \textsc{Fast-Stretch-Factor}.  Refer to Fig.~\ref{fig:fs-sf}. We have used two values of $n$, $256K$ and $1M$.  Further, we fixed $t$ to $1.1$ in our experiment since the same trend was observed for all values of $t$. 
	 
	\begin{figure}[h]
		\centering
		\includegraphics[scale=0.8]{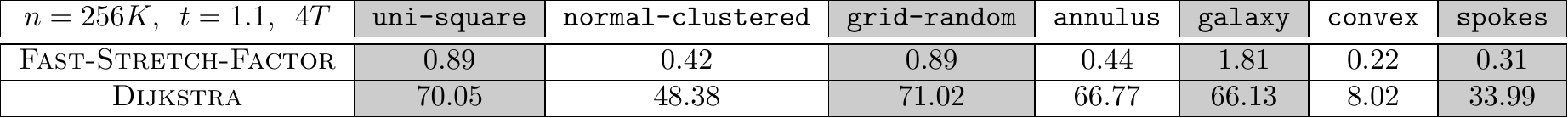}\\
		\vspace{4pt}
		\includegraphics[scale=0.8]{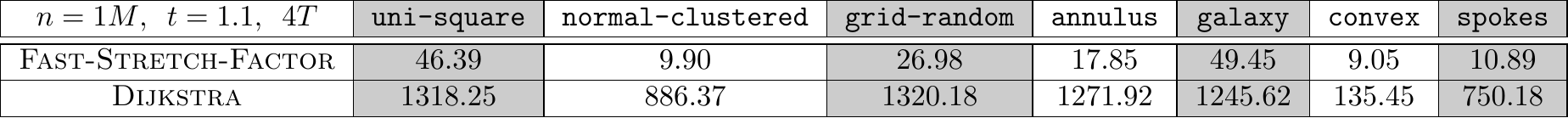}
		\caption{\textsc{Fast-Stretch-Factor} vs \textsc{Dijkstra}: $t$ was fixed to $1.1$ and $4$ threads were used.  Top: $n=256K$; Bottom: $n=10^6$. The reported times are in minutes.}
		\label{fig:fs-sf}
	\end{figure}

	 \textsc{Fast-Stretch-Factor} is easily parallelizable since the unmerged leaf pairs can be considered independently in the main for-loop. Similarly, the Dijkstra-based algorithm can be parallelized by executing the shortest path computations from every vertex in parallel.
	As both are easily parallelizable, we have used $4$ threads in our comparison experiment. It is clearly evident that in both cases ($n=256K,1M$), \textsc{Fast-Stretch-Factor} outperformed the Dijkstra-based algorithm. For instance, for a $1.1$-spanner, constructed on a $1M$-element \texttt{uni-square} pointset by \textsc{Fast-Sparse-Spanner}, the Dijkstra-based algorithm took $1318.25$ minutes (around $22$ hours), whereas \textsc{Fast-Stretch-Factor} took just $46.39$ minutes making our algorithm $\approx 28.42$ times faster.  The best speedup of $\approx 89.5$ was achieved for the \texttt{normal-clustered} distribution, $n=1M$, where \textsc{Fast-Stretch-Factor} could finish up within $10$ minutes but \textsc{Dijkstra} took around $14.8$ hours. 
	
	The main reason behind the speed of our algorithm is the avoidance of unnecessary graph explorations done from every vertex of the input spanner by the Dijkstra-based algorithm. \textsc{Fast-Sparse-Spanner} explores the spanner in a restrictive way using \textsc{Greedy-Path} and $A^*$. Further, it works only with the leaf pairs which were not considered during the spanner construction in \textsc{Fast-Sparse-Spanner} and tries to reuse the \textsc{Greedy-Path} and $A^*$ paths to check for the existence of $t$-paths between point pairs from two different leaves. Further, due to the long edges of $W$ in $H$, we found that the \textsc{Greedy-Path} and $A^*$ executions ran to completion fast. For instance, for a $1.1$-spanner generated by \textsc{Fast-Sparse-Spanner} on a $1M$-element \texttt{uni-square} pointset, \textsc{Greedy-Path} and $A^*$ explored approximately $1690$ and $10700$ vertices on average. Note that these numbers are much less than $1M$. Further, \textsc{Greedy-Path} was successful $\approx 92.6\%$ of the times in finding $t$-paths. Consequently, in practice, for every pair, the time taken to compute $t$-paths is much less than linearithmic, as assumed in Theorem~\ref{thm:sf}. As a result, \textsc{Fast-Stretch-Factor} behaves like a quadratic-time algorithm in practice and could easily beat \textsc{Dijkstra} everywhere. 

	\medskip
	
	\noindent
	\textit{Parallel \textsc{Fast-Sparse-Spanner}.} Parallelization of \textsc{Fast-Sparse-Spanner} is straightforward. In Step 1, we construct local greedy-spanners in parallel. We always found that the runtimes of steps 1 and 2 are a minuscule of the total runtime. So, we did not parallelize it. The mergings in steps 4 and 5 can be easily executed in parallel. In our experimental results, we set the number of threads to $1,2$ and $4$.  
	We have leveraged \textsf{OpenMP} for parallelization. To see
        thread-dependent runtimes, refer to Fig.~\ref{fig:realWorld}
        for real-world pointsets and
        Figs.~\ref{multithread:us},\ref{multithread:nc},\ref{multithread:gr},\ref{multithread:ann},\ref{multithread:ga},\ref{multithread:co},\ref{multithread:sp}
        for the synthetic distributions.

	Speed-ups are remarkably close to the number of threads used. 
        However, we point out that with the increase in threads, the memory usage increases because of the per-thread space requirements for the mergings in Steps 4 and 5. For instance, on a $128K$-element \texttt{uni-square} pointset, our implementation consumed approximately $ 189, 285,  473$ MB of main memory for $1,2,4$ threads, respectively.
	
	\clearpage
	
	\begin{figure}
		\centering
\includegraphics[scale=0.85]{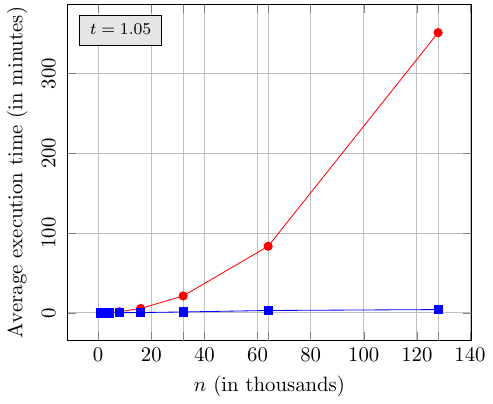}
\includegraphics[scale=0.85]{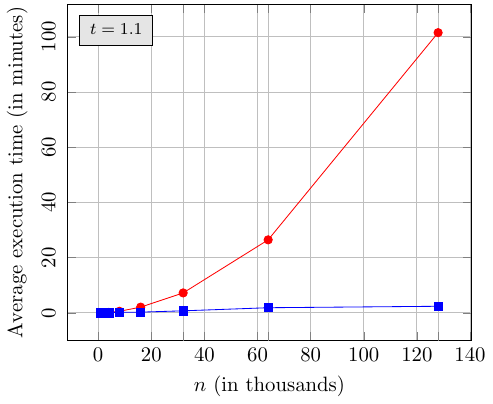}
\includegraphics[scale=0.85]{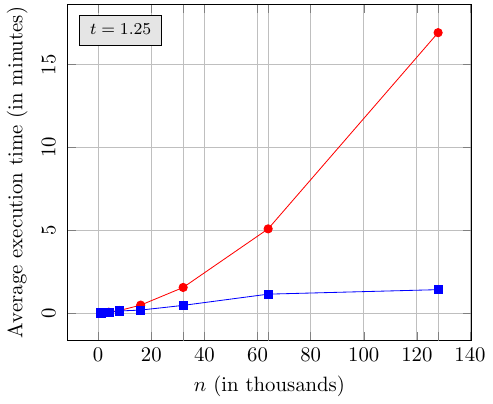}
\includegraphics[scale=0.85]{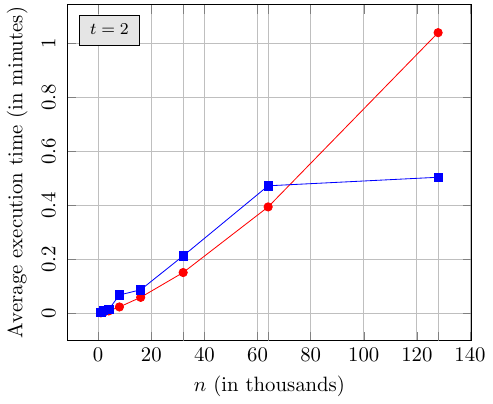}
		
\includegraphics[scale=0.85]{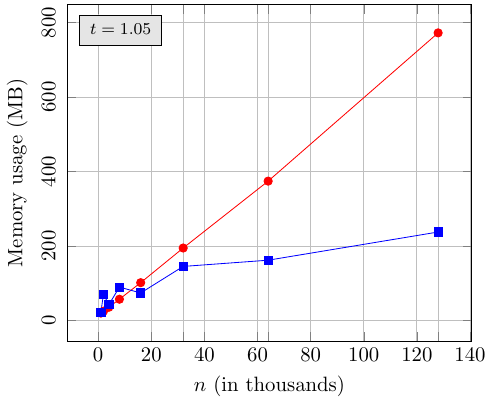}
\includegraphics[scale=0.85]{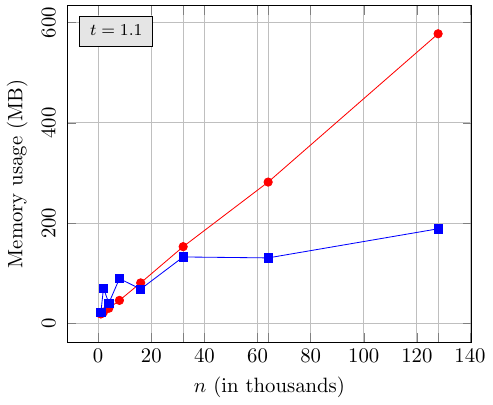}
\includegraphics[scale=0.85]{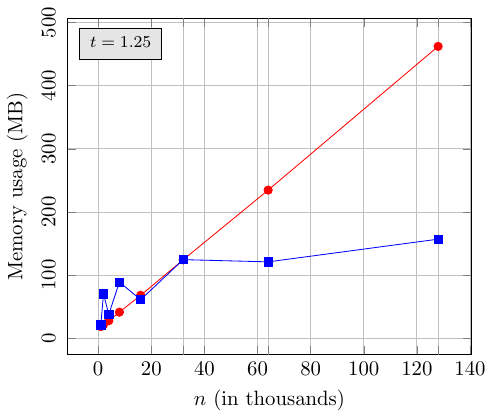}
\includegraphics[scale=0.85]{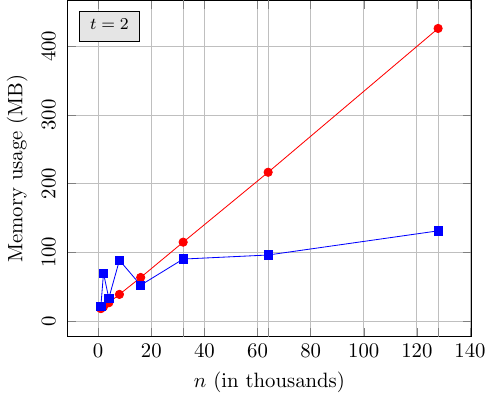}

\vspace{-10pt}
\caption{Time and memory usage comparisons for the \texttt{uni-square} distribution.}
		\label{fssvsbucketing:us}
	\end{figure}

	\begin{figure}
		\centering

		\includegraphics[scale=0.85]{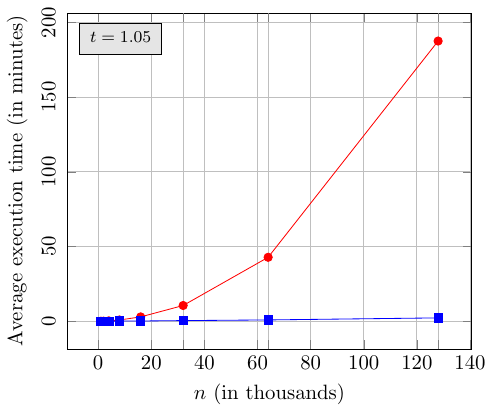}
		\includegraphics[scale=0.85]{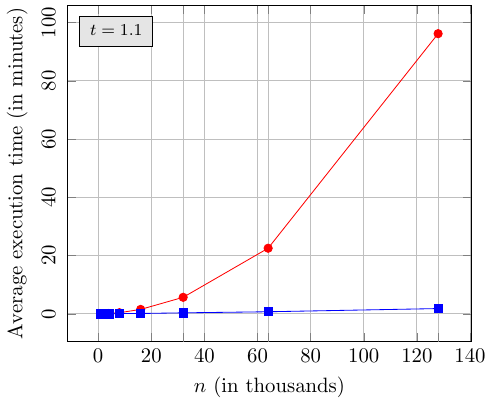}
		\includegraphics[scale=0.85]{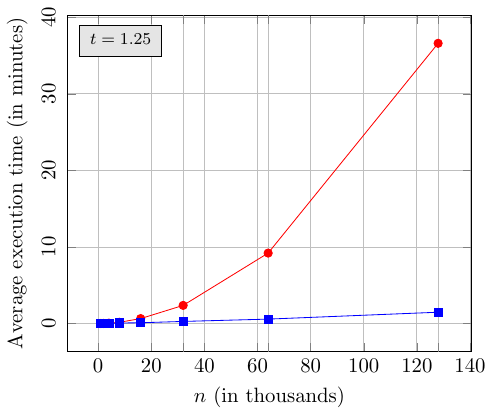}
		\includegraphics[scale=0.85]{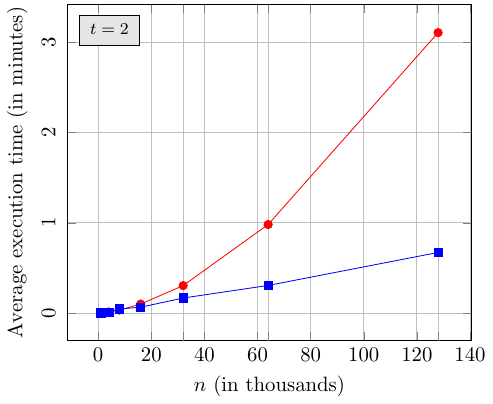}
		
		\includegraphics[scale=0.85]{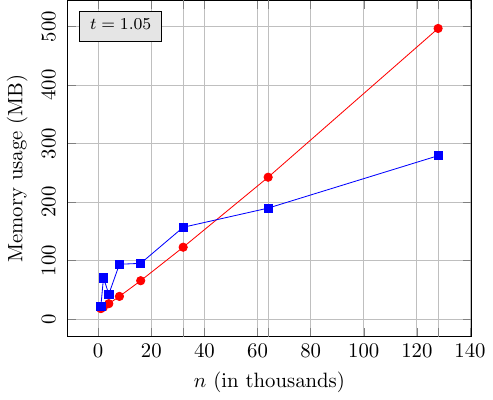}
		\includegraphics[scale=0.85]{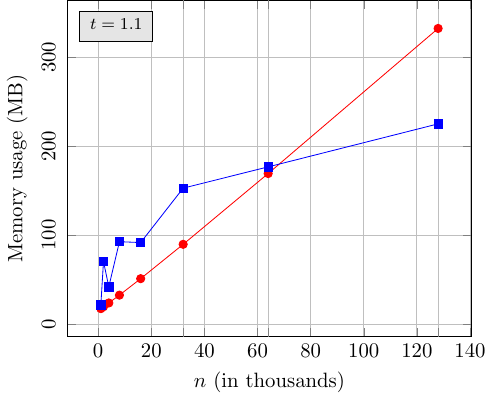}
		\includegraphics[scale=0.85]{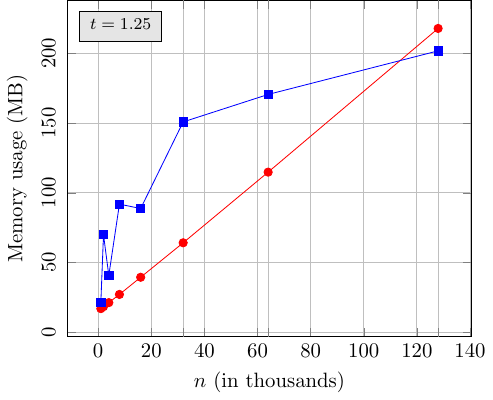}
		\includegraphics[scale=0.85]{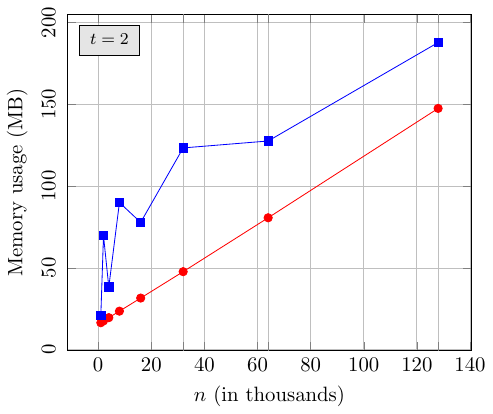}
		\vspace{-10pt}
\caption{Time and memory usage comparisons for the \texttt{normal-clustered} distribution.}
		\label{fssvsbucketing:nc}
	\end{figure}

	\begin{figure}
	\centering
	\includegraphics[scale=0.85]{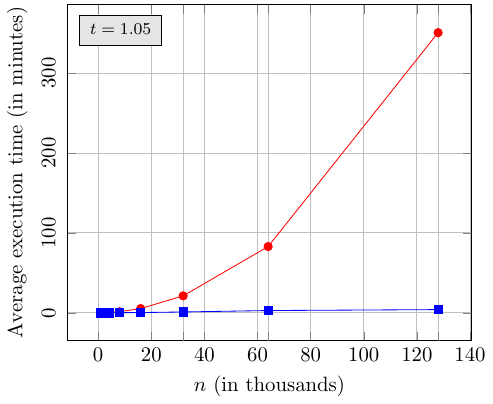}
	\includegraphics[scale=0.85]{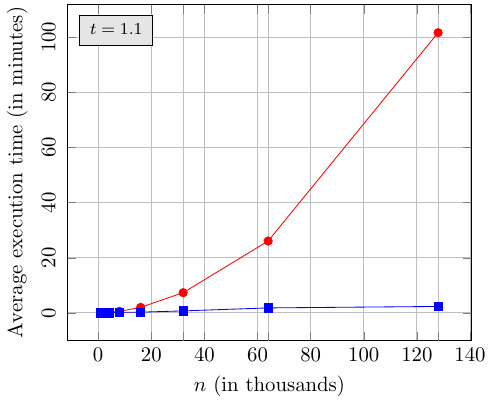}
	\includegraphics[scale=0.85]{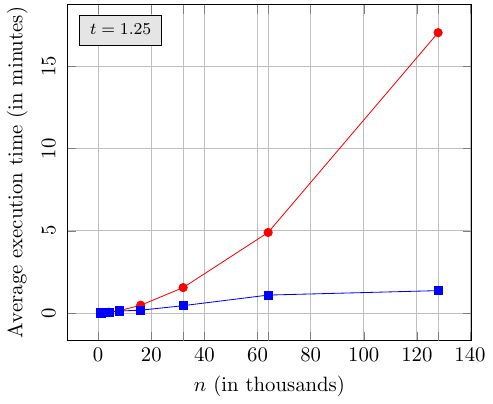}
	\includegraphics[scale=0.85]{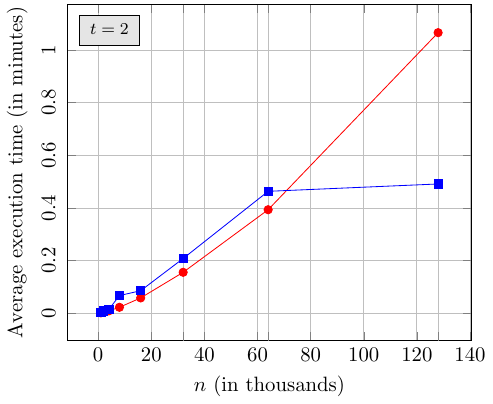}
	\includegraphics[scale=0.85]{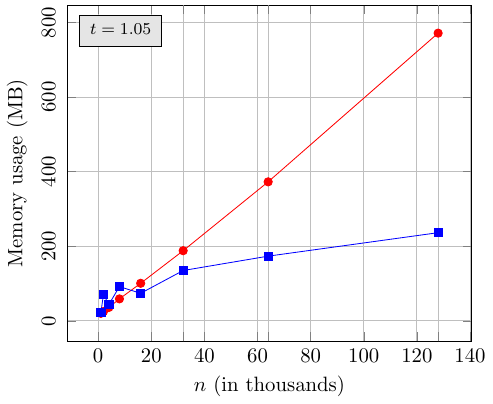}
	\includegraphics[scale=0.85]{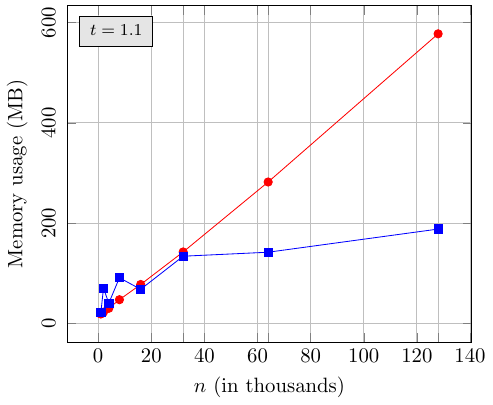}
	\includegraphics[scale=0.85]{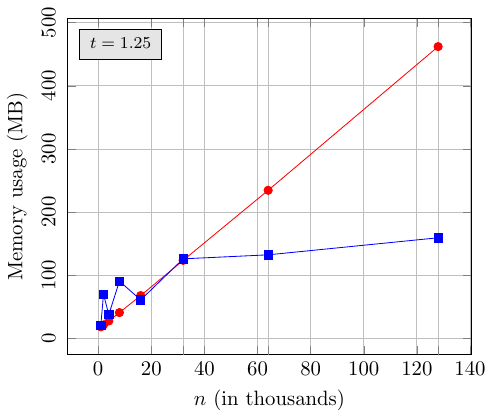}
	\includegraphics[scale=0.85]{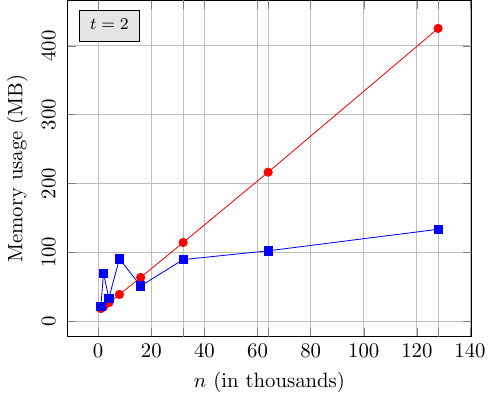}
	\vspace{-10pt}
	\caption{Time and memory usage comparisons for the \texttt{grid-random} distribution.}
	\label{fssvsbucketing:gr}
\end{figure}
	
		\begin{figure}
		\centering
		\includegraphics[scale=0.85]{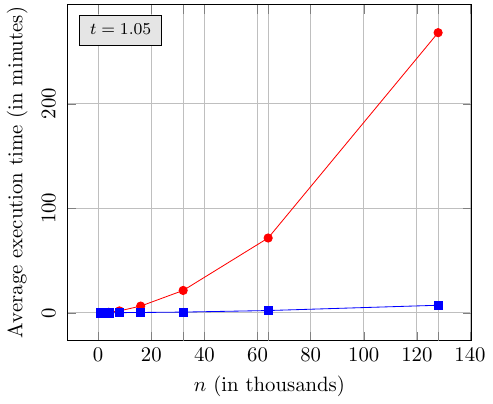}
		\includegraphics[scale=0.85]{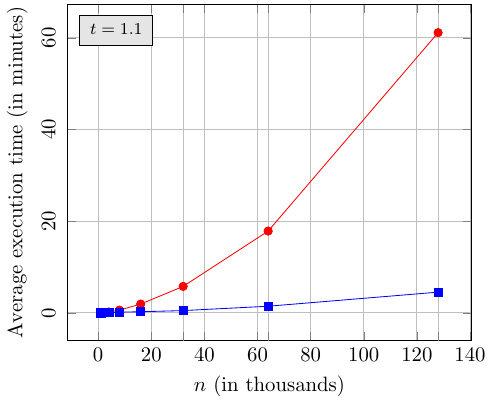}
		\includegraphics[scale=0.85]{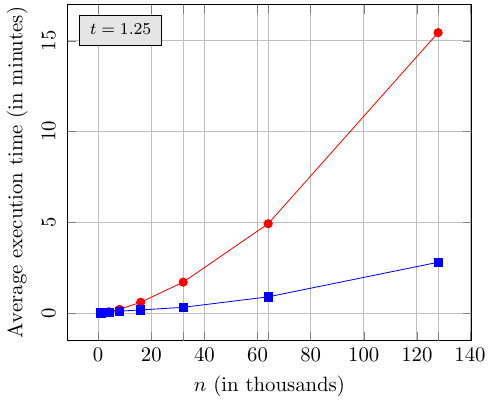}
		\includegraphics[scale=0.85]{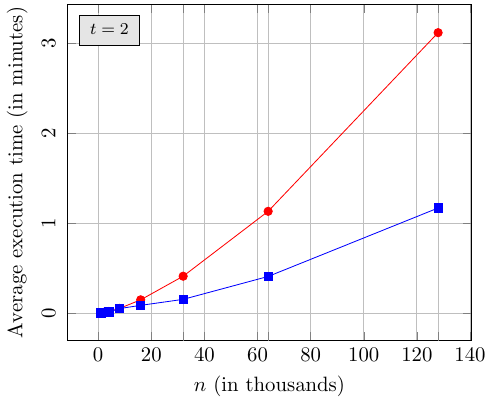}
		\includegraphics[scale=0.85]{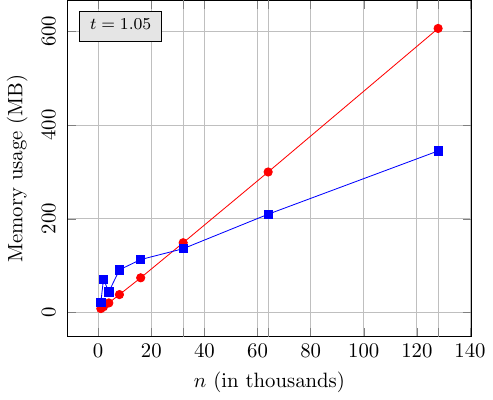}
		\includegraphics[scale=0.85]{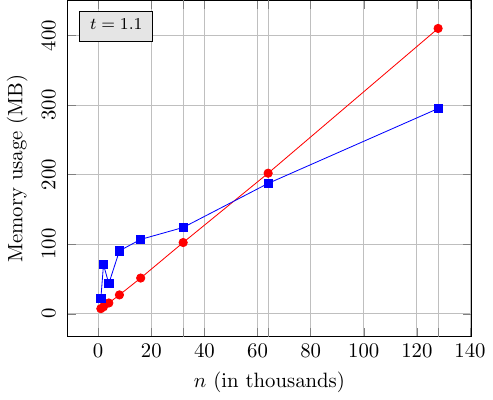}
		\includegraphics[scale=0.85]{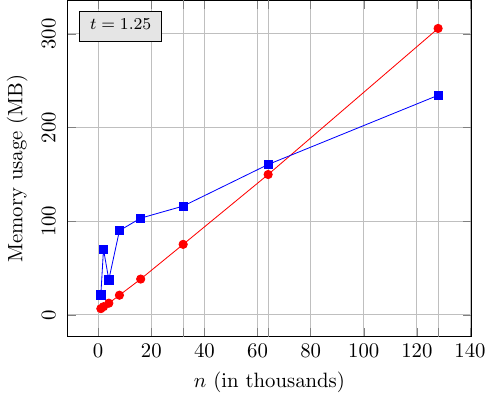}
		\includegraphics[scale=0.85]{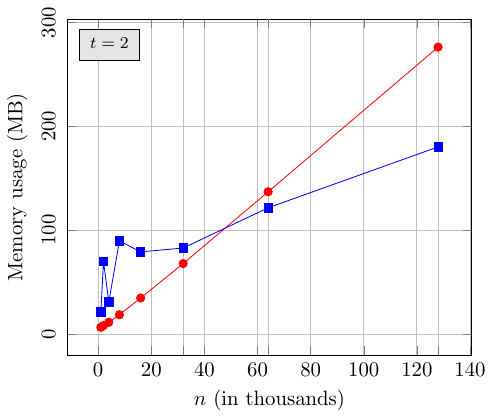}
		\vspace{-10pt}
		\caption{Time and memory usage comparisons for the \texttt{annulus} distribution.}
		\label{fssvsbucketing:an}
	\end{figure}

			\begin{figure}
		\centering
		\includegraphics[scale=0.85]{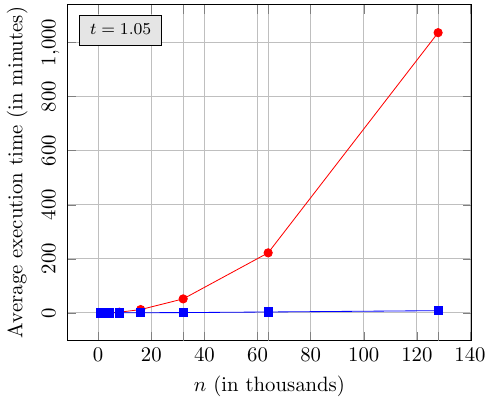}
		\includegraphics[scale=0.85]{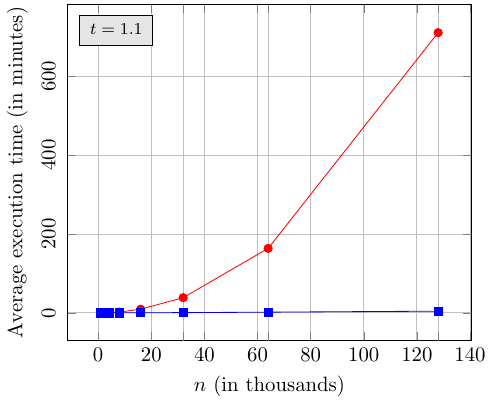}
		\includegraphics[scale=0.85]{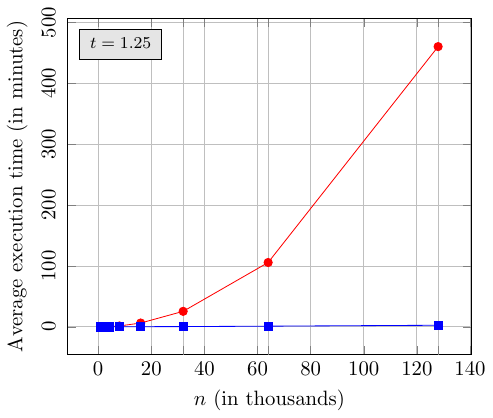}
		\includegraphics[scale=0.85]{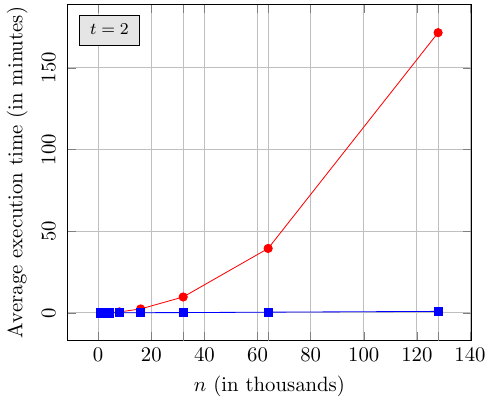}
		\includegraphics[scale=0.85]{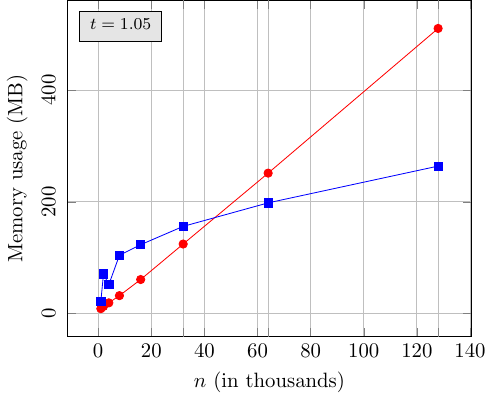}
		\includegraphics[scale=0.85]{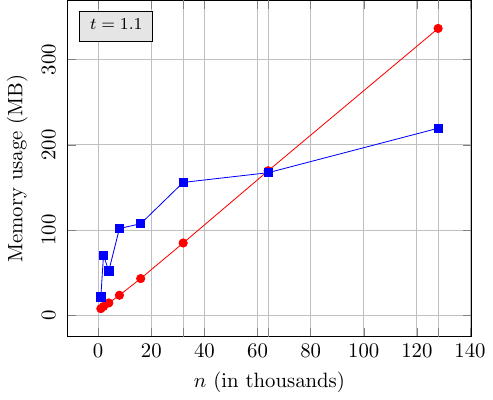}
		\includegraphics[scale=0.85]{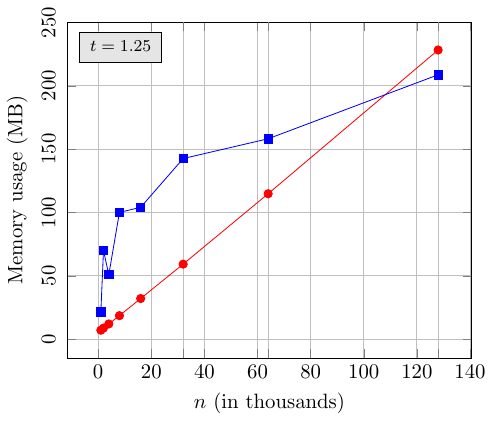}
		\includegraphics[scale=0.85]{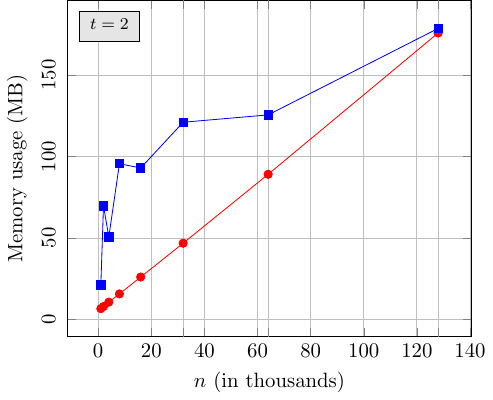}
		\vspace{-10pt}
		\caption{Time and memory usage comparisons for the \texttt{galaxy} distribution.}
		\label{fssvsbucketing:ga}
	\end{figure}

		\begin{figure}
			\centering
			\includegraphics[scale=0.85]{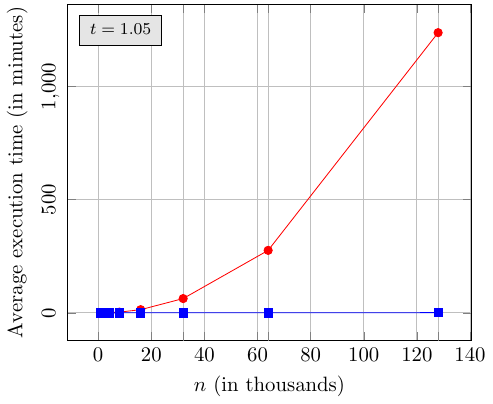}
			\includegraphics[scale=0.85]{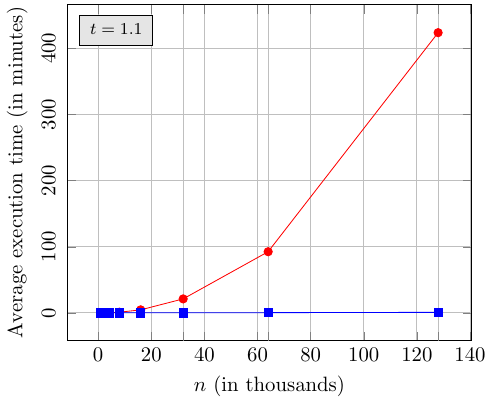}
			\includegraphics[scale=0.85]{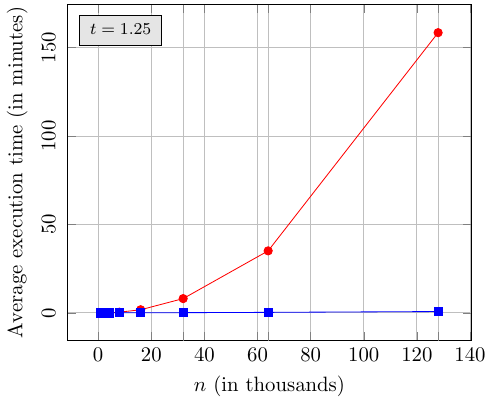}
			\includegraphics[scale=0.85]{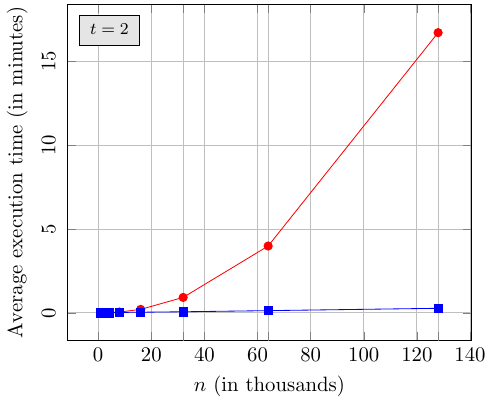}
			\includegraphics[scale=0.85]{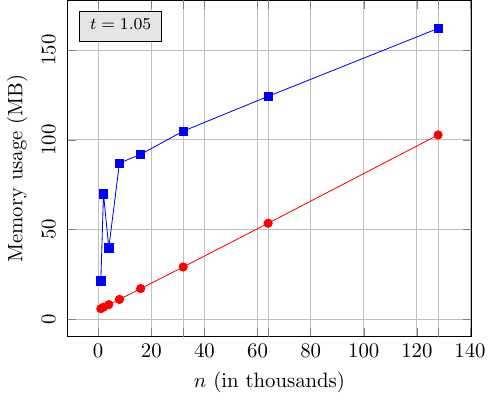}
			\includegraphics[scale=0.85]{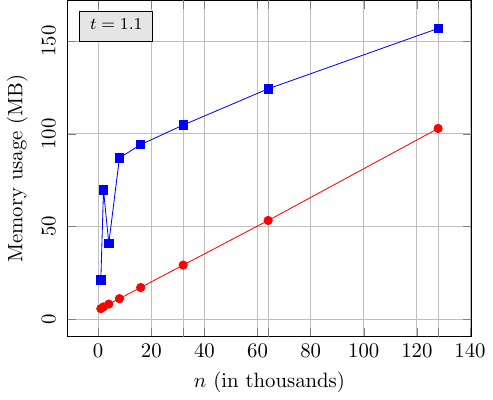}
			\includegraphics[scale=0.85]{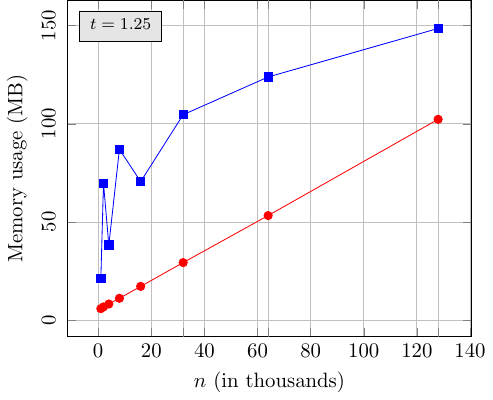}
			\includegraphics[scale=0.85]{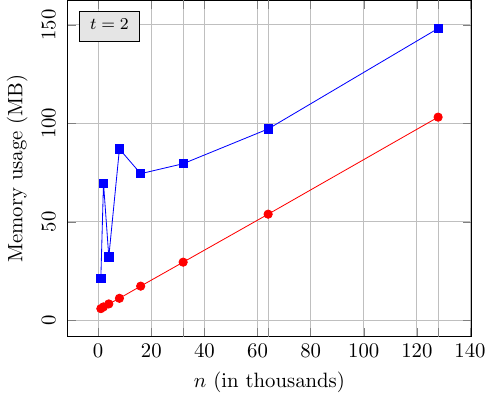}
			\vspace{-10pt}
			\caption{Time and memory usage comparisons for the \texttt{convex} distribution.}
			\label{fssvsbucketing:co}
			\end{figure}

		\begin{figure}
	\centering
	\includegraphics[scale=0.85]{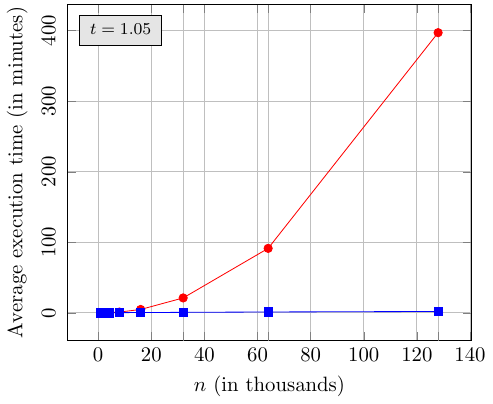}
	\includegraphics[scale=0.85]{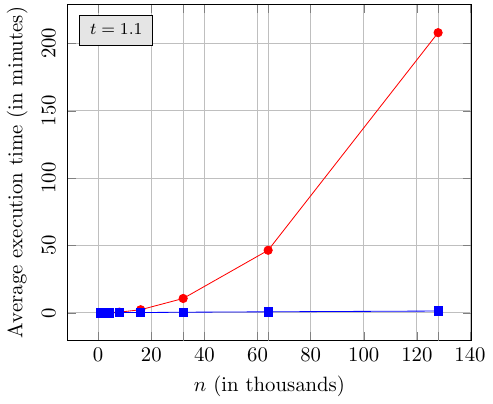}
	\includegraphics[scale=0.85]{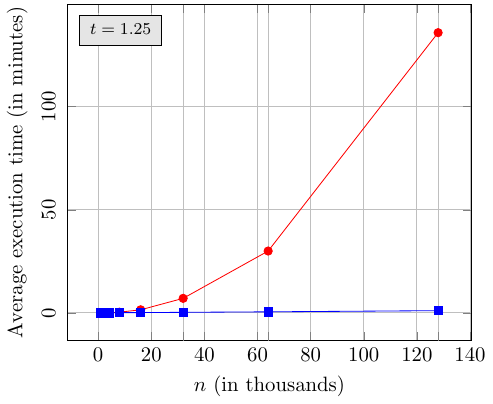}
	\includegraphics[scale=0.85]{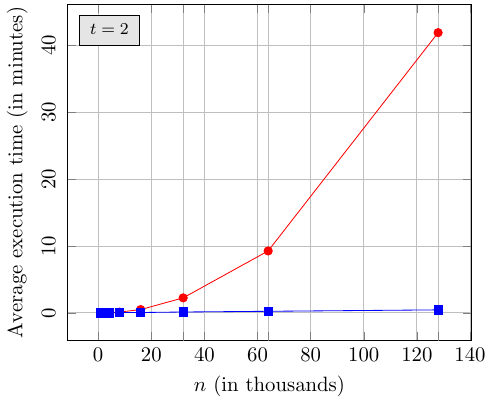}
	\includegraphics[scale=0.85]{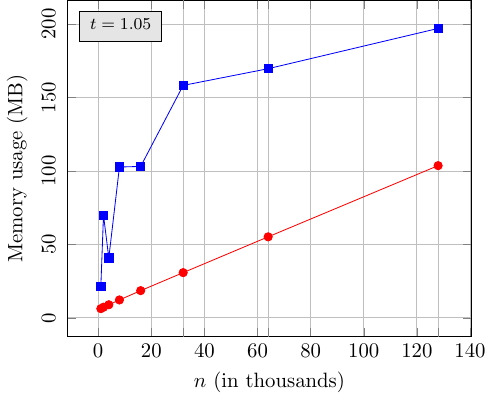}
	\includegraphics[scale=0.85]{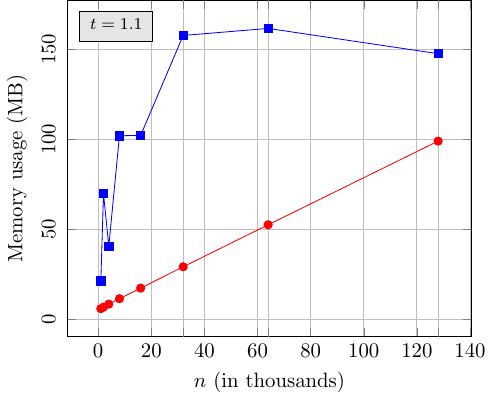}
	\includegraphics[scale=0.85]{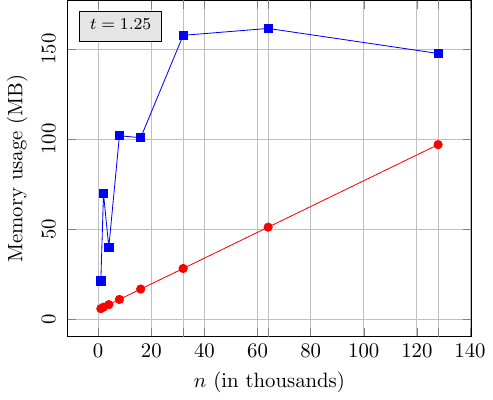}
	\includegraphics[scale=0.85]{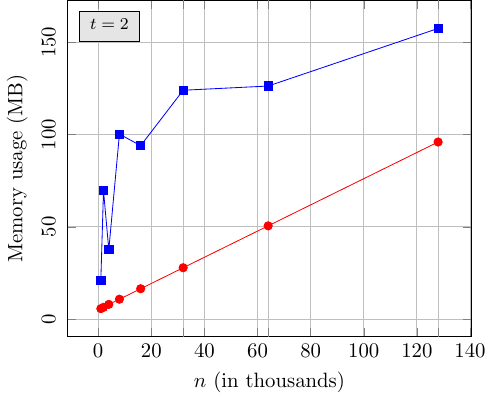}
	\vspace{-10pt}
	\caption{Time and memory usage comparisons for the \texttt{spokes} distribution.}
	\label{fssvsbucketing:sp}
	
\end{figure}
		\begin{figure}
		\centering
		\includegraphics[scale=0.85]{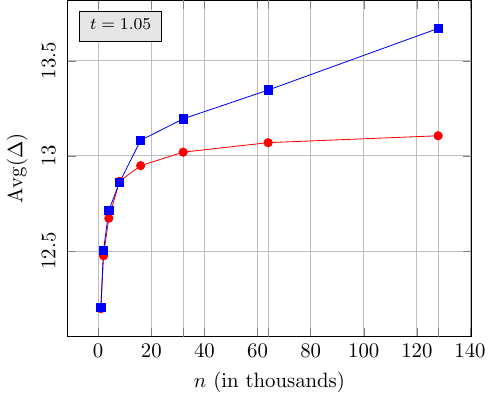}
		\includegraphics[scale=0.85]{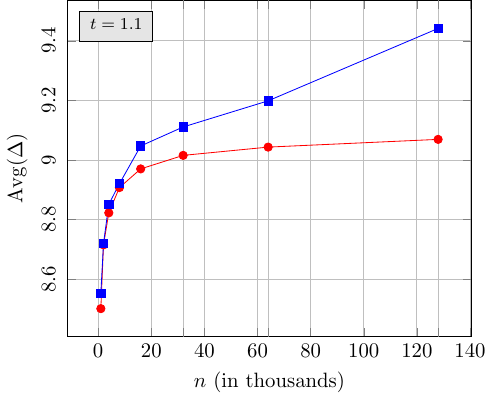}
		\includegraphics[scale=0.85]{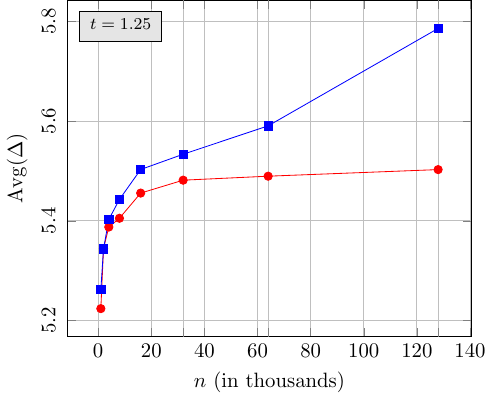}
		\includegraphics[scale=0.85]{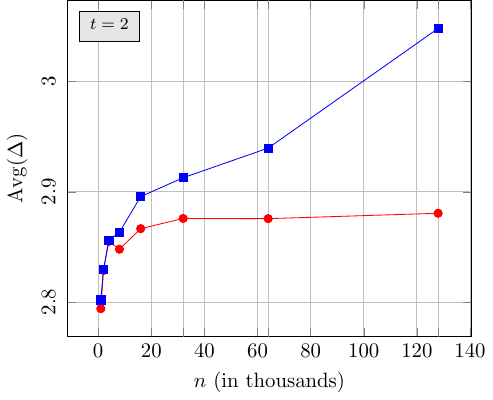}
		\vspace{-10pt}
		\caption{Average-degree comparisons for the \texttt{uni-square} distribution.}
		\label{avd:us}
	\end{figure}

	\begin{figure}
		\centering
		\includegraphics[scale=0.85]{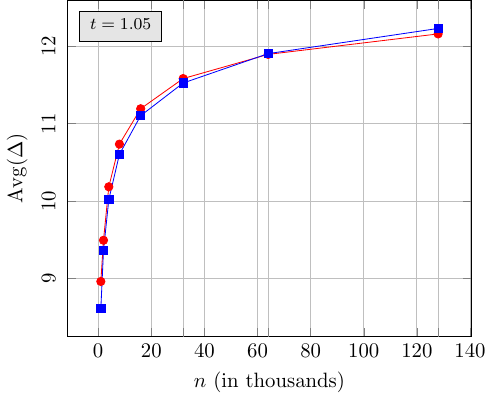}
		\includegraphics[scale=0.85]{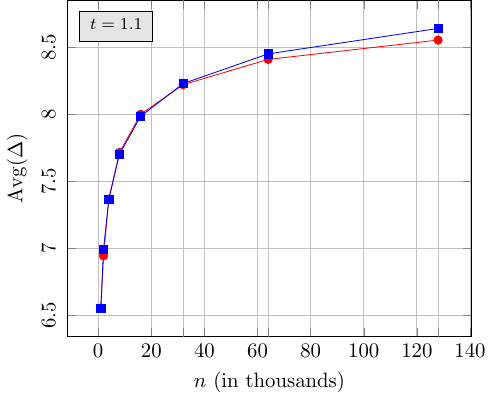}
		\includegraphics[scale=0.85]{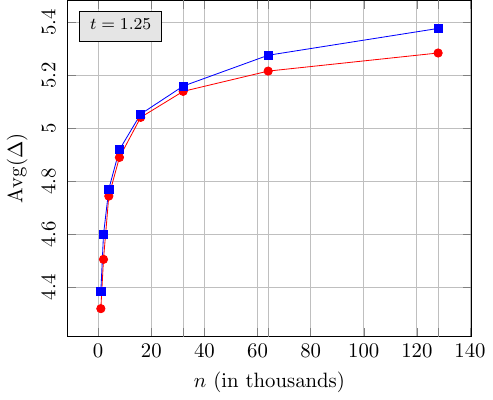}
		\includegraphics[scale=0.85]{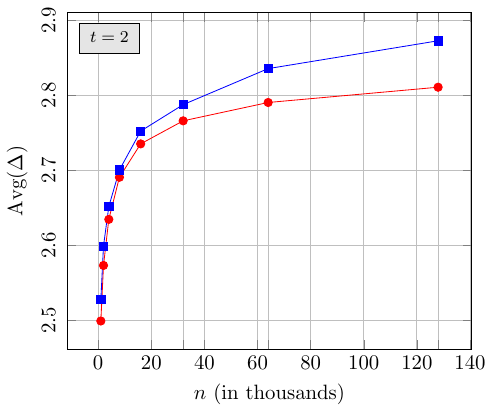}
		\vspace{-10pt}
		\caption{Average-degree comparisons for the \texttt{normal-clustered} distribution.}
		\label{avd:nc}
	\end{figure}
	
	\begin{figure}
		\centering
		\includegraphics[scale=0.85]{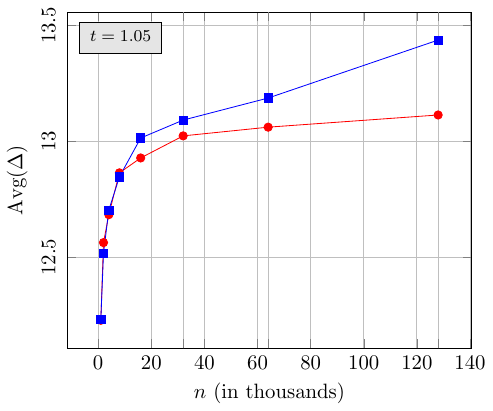}
		\includegraphics[scale=0.85]{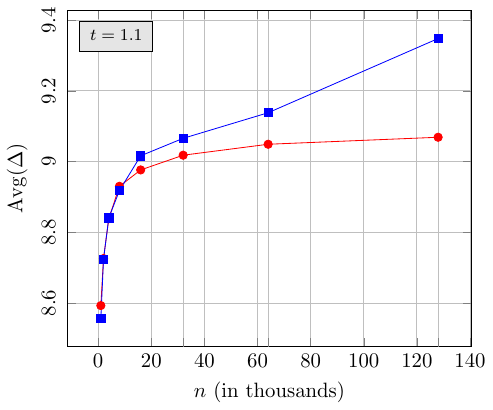}
		\includegraphics[scale=0.85]{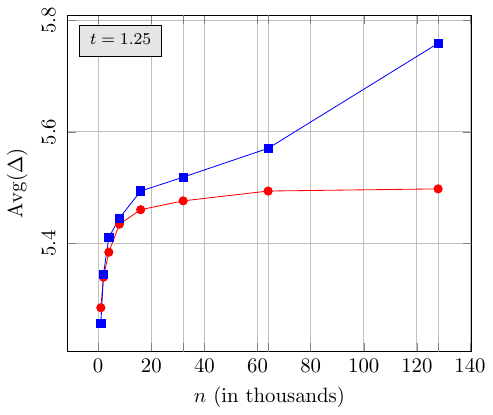}
		\includegraphics[scale=0.85]{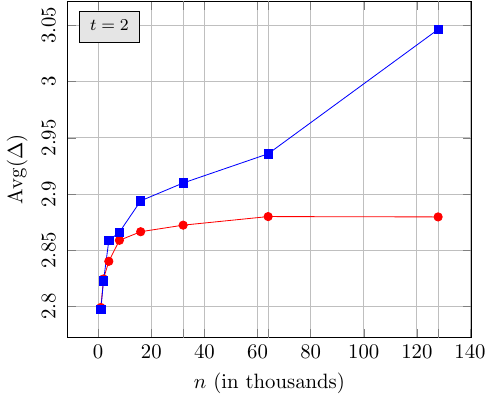}
		\vspace{-10pt}
		\caption{Average-degree comparisons for the \texttt{grid-random} distribution.}
		\label{avd:gr}
	\end{figure}
	
	\begin{figure}
		\centering
		\includegraphics[scale=0.85]{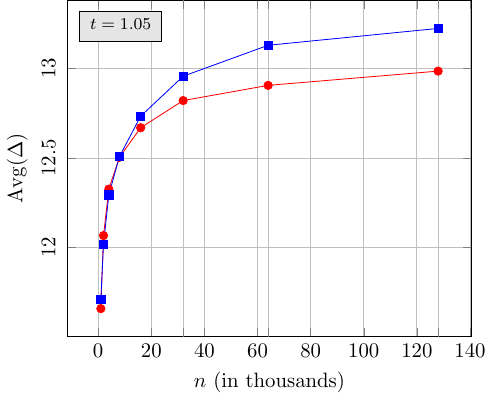}
		\includegraphics[scale=0.85]{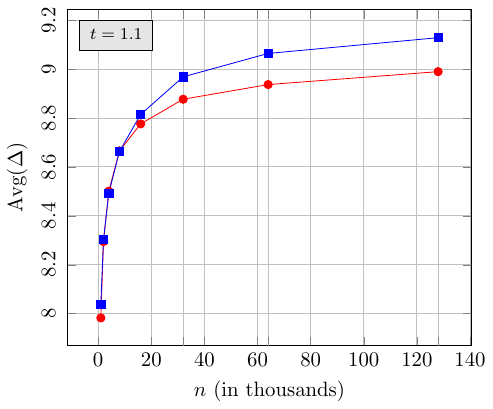}
		\includegraphics[scale=0.85]{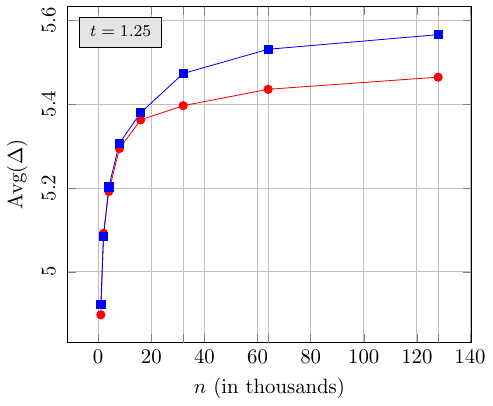}
		\includegraphics[scale=0.85]{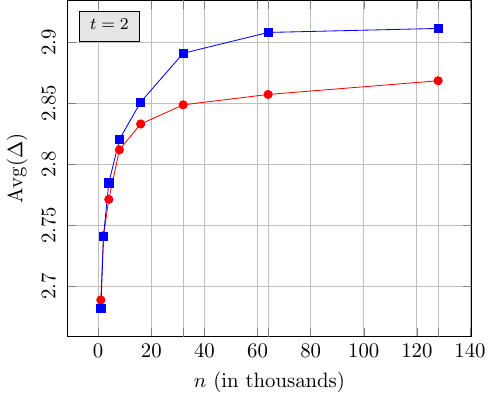}
		\vspace{-10pt}
		\caption{Average-degree comparisons for the \texttt{annulus} distribution.}
		\label{avd:an}
	\end{figure}
	
	\begin{figure}
		\centering
		\includegraphics[scale=0.85]{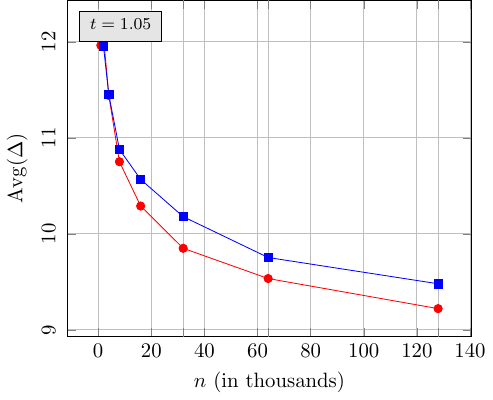}
		\includegraphics[scale=0.85]{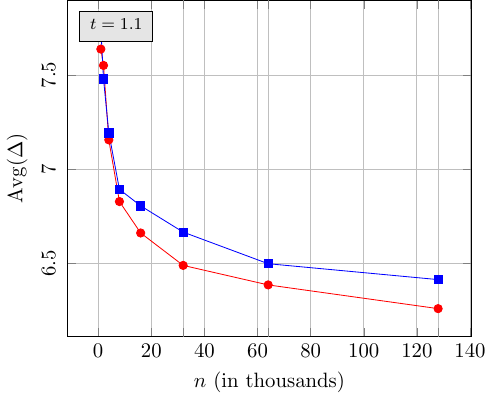}
		\includegraphics[scale=0.85]{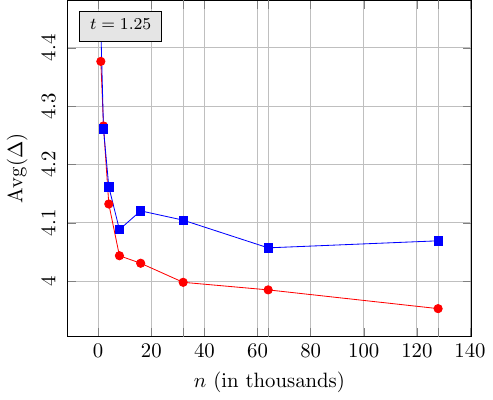}
		\includegraphics[scale=0.85]{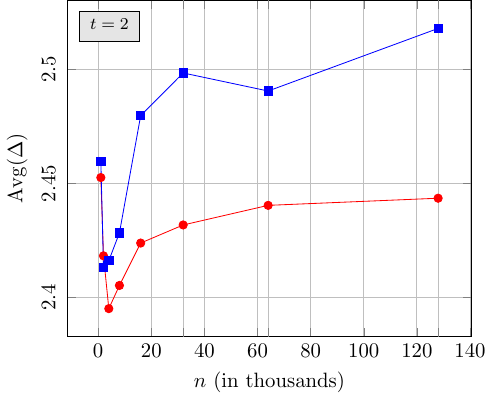}
		\vspace{-10pt}
		\caption{Average-degree comparisons for the \texttt{galaxy} distribution.}
		\label{avd:ga}
	\end{figure}
	
	\begin{figure}
		\centering
		\includegraphics[scale=0.85]{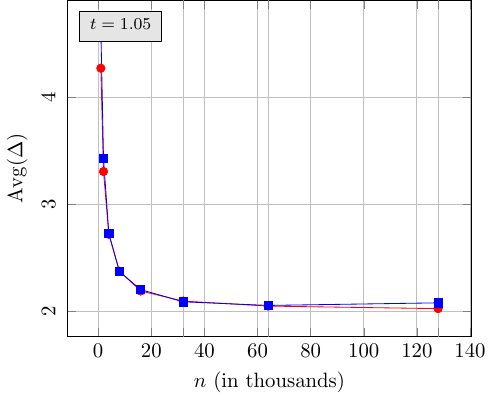}
		\includegraphics[scale=0.85]{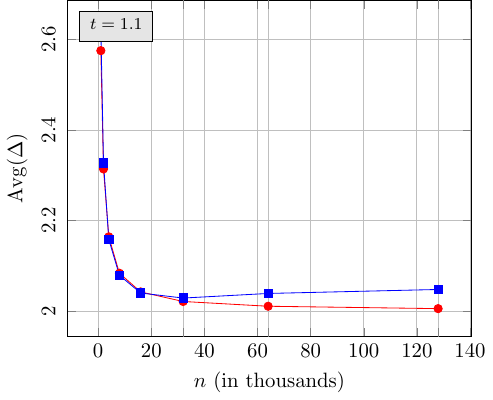}
		\includegraphics[scale=0.85]{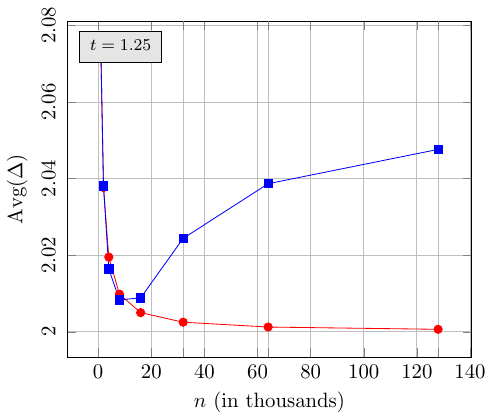}
		\includegraphics[scale=0.85]{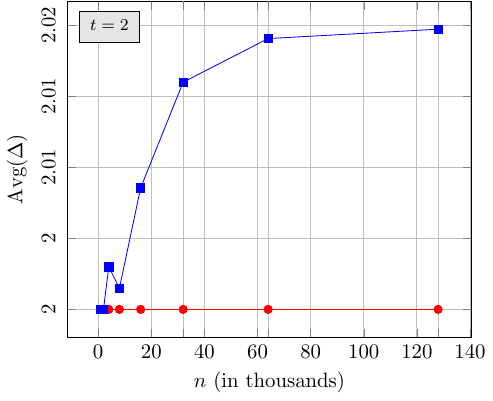}
		\vspace{-10pt}
		\caption{Average-degree comparisons for the \texttt{convex} distribution.}
		\label{avd:co}
	\end{figure}
	
	\begin{figure}
		\centering
		\includegraphics[scale=0.85]{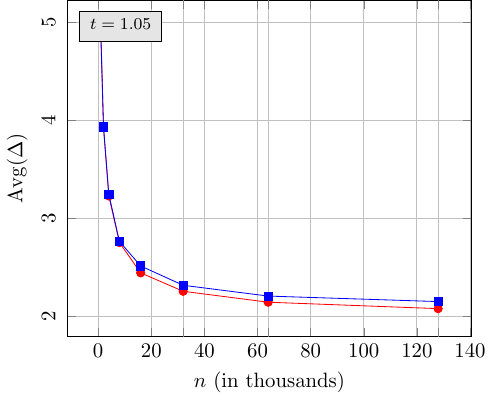}
		\includegraphics[scale=0.85]{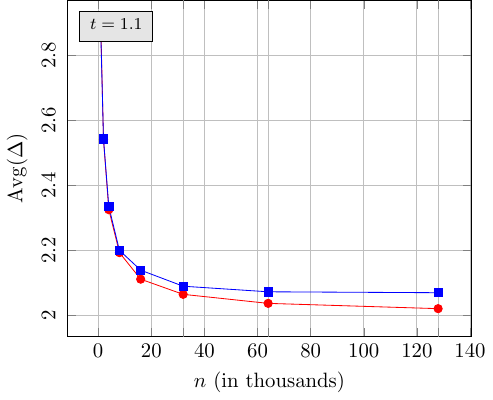}
		\includegraphics[scale=0.85]{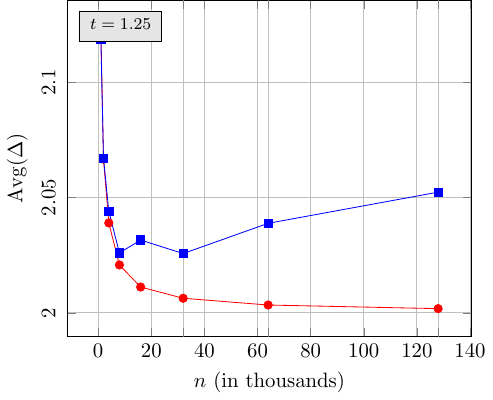}
		\includegraphics[scale=0.85]{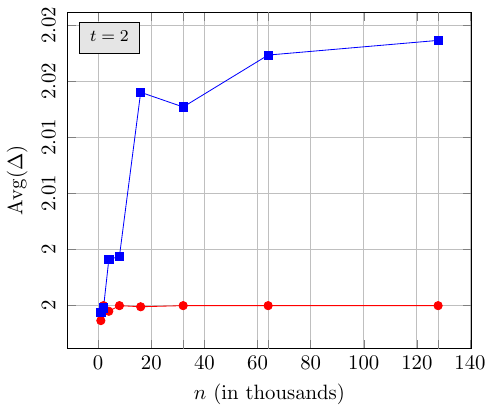}
		\vspace{-10pt}
		\caption{Average-degree comparisons for the \texttt{spokes} distribution.}
		\label{avd:sp}
	\end{figure}
	
		\begin{figure}
		\centering
		\includegraphics[scale=0.85]{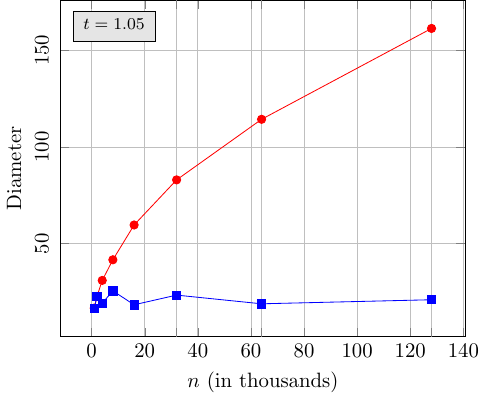}
		\includegraphics[scale=0.85]{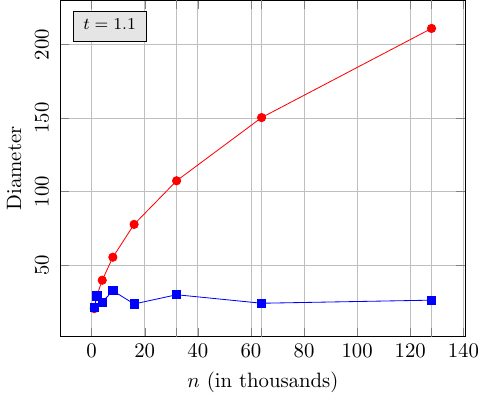}
		\includegraphics[scale=0.85]{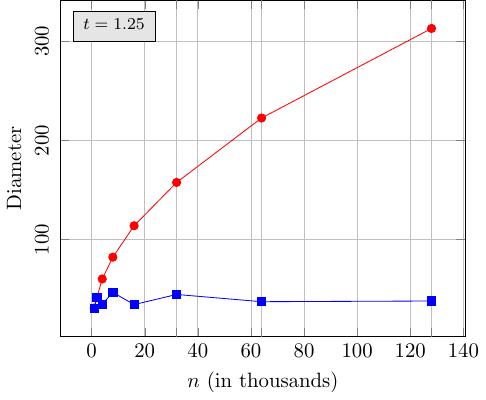}
		\includegraphics[scale=0.85]{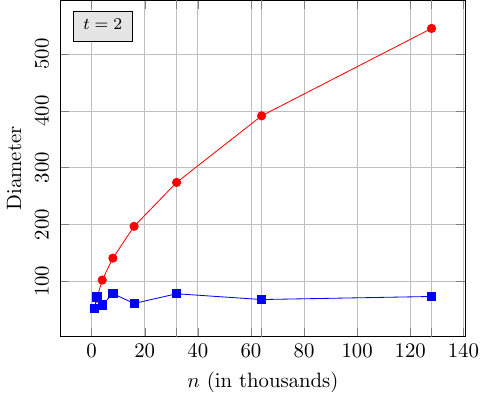}
		\vspace{-10pt}
		\caption{Diameter comparisons for the \texttt{uni-square} distribution.}
		\label{dia:us}
	\end{figure}

	\begin{figure}
		\centering
		\includegraphics[scale=0.85]{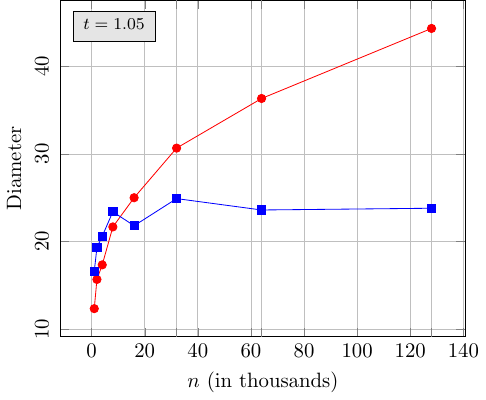}
		\includegraphics[scale=0.85]{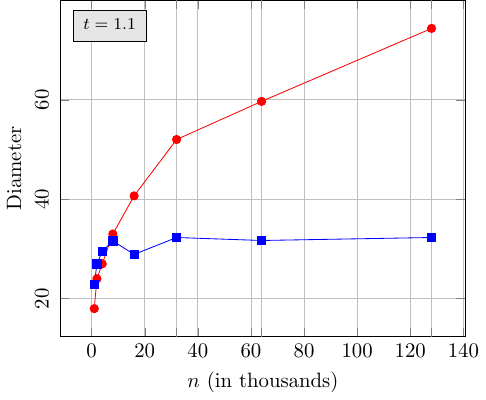}
		\includegraphics[scale=0.85]{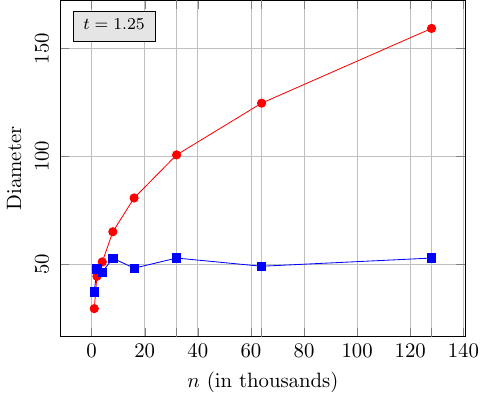}
		\includegraphics[scale=0.85]{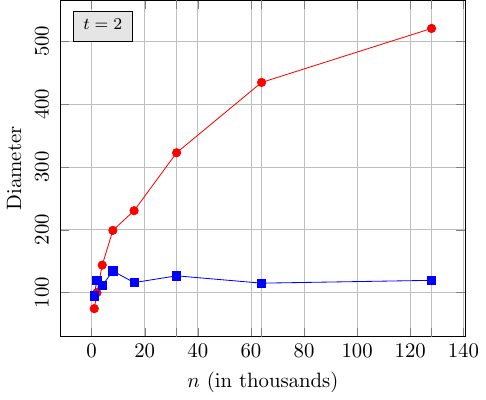}
		\vspace{-10pt}
		\caption{Diameter comparisons for the \texttt{normal-clustered} distribution.}
		\label{dia:nc}
	\end{figure}
	
		\begin{figure}
		\centering
		\includegraphics[scale=0.85]{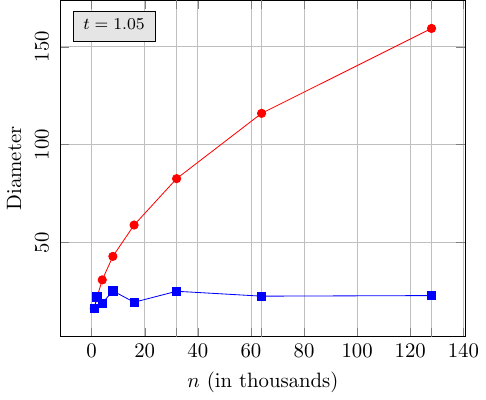}
		\includegraphics[scale=0.85]{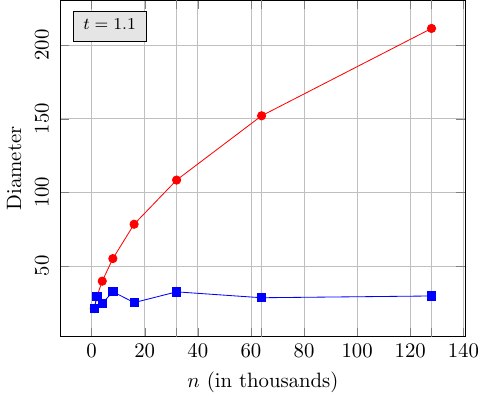}
		\includegraphics[scale=0.85]{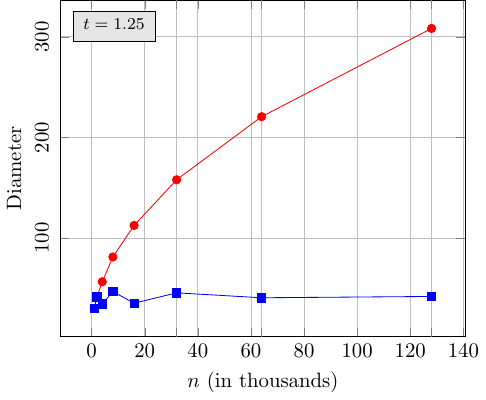}
		\includegraphics[scale=0.85]{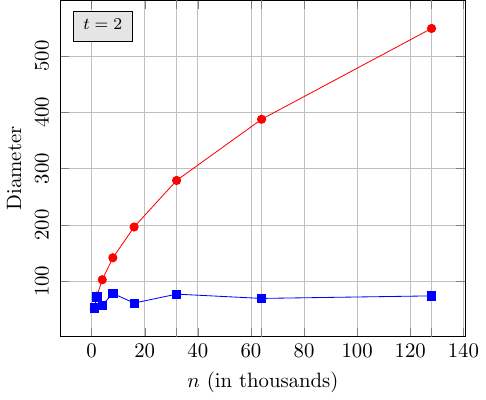}
		\vspace{-10pt}
		\caption{Diameter comparisons for the \texttt{grid-random} distribution.}
		\label{dia:gr}
	\end{figure}
	
			\begin{figure}
		\centering
		\includegraphics[scale=0.85]{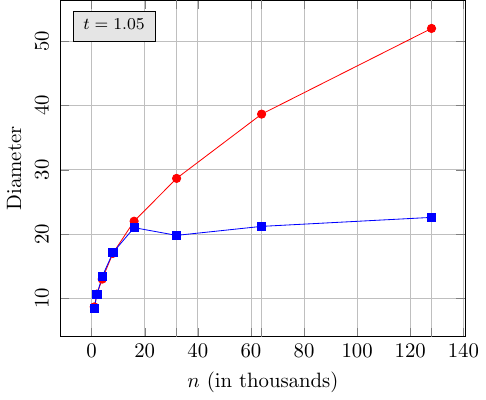}
		\includegraphics[scale=0.85]{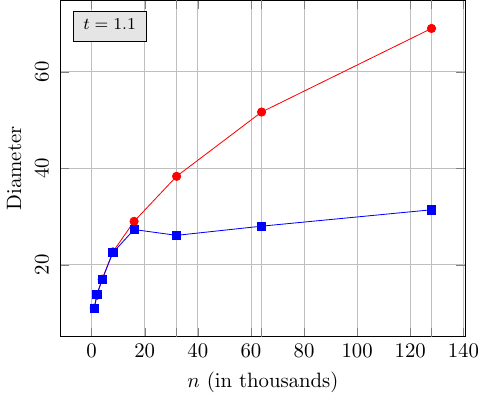}
		\includegraphics[scale=0.85]{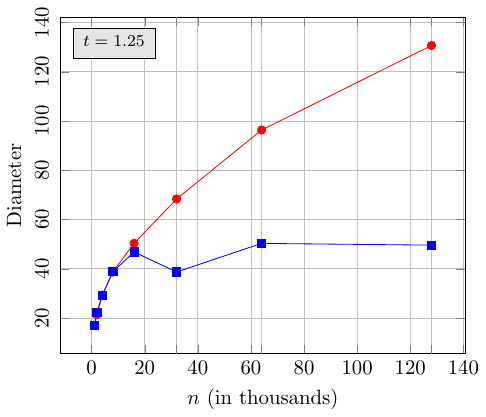}
		\includegraphics[scale=0.85]{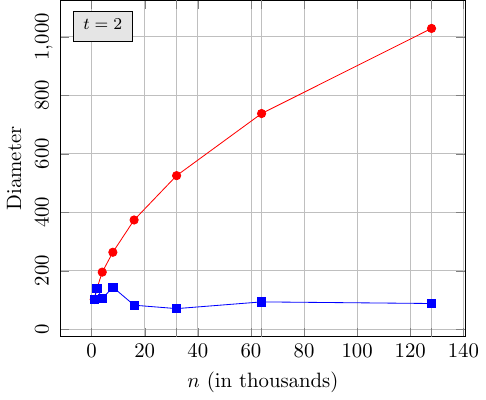}
		\vspace{-10pt}
		\caption{Diameter comparisons for the \texttt{annulus} distribution.}
		\label{dia:an}
	\end{figure}
	
		\begin{figure}
		\centering
		\includegraphics[scale=0.85]{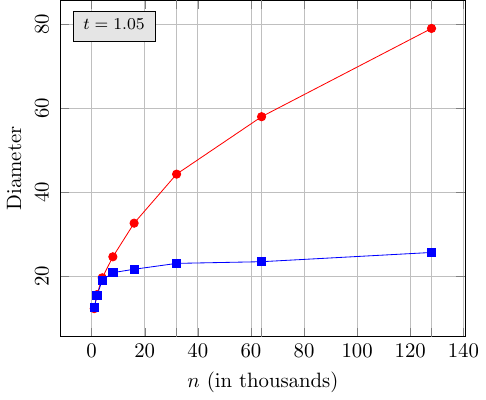}
		\includegraphics[scale=0.85]{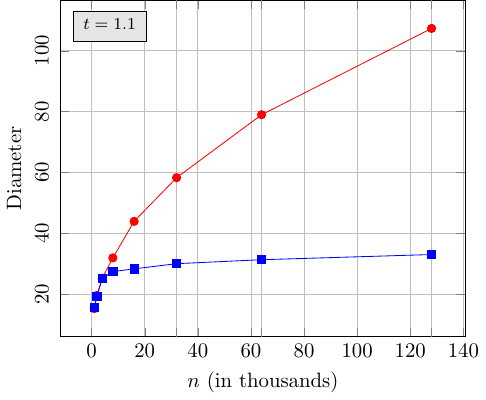}
		\includegraphics[scale=0.85]{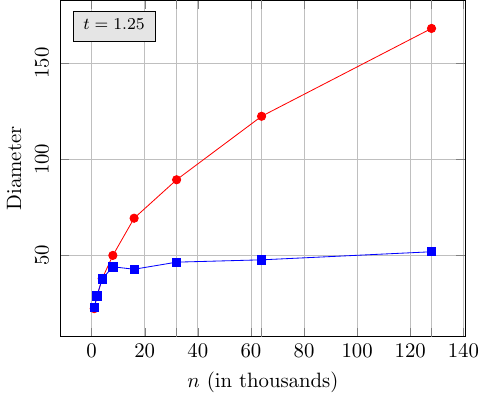}
		\includegraphics[scale=0.85]{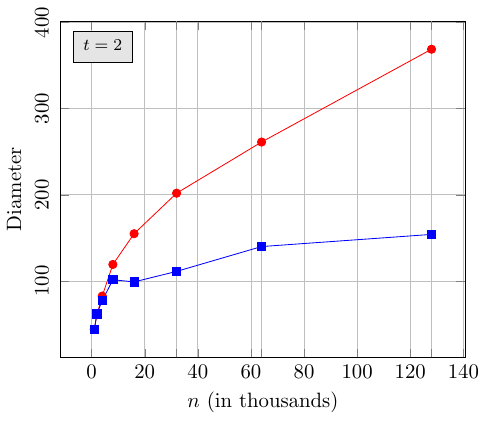}
		\vspace{-10pt}
		\caption{Diameter comparisons for the \texttt{galaxy} distribution.}
		\label{dia:ga}
	\end{figure}
	
			\begin{figure}
		\centering
		\includegraphics[scale=0.85]{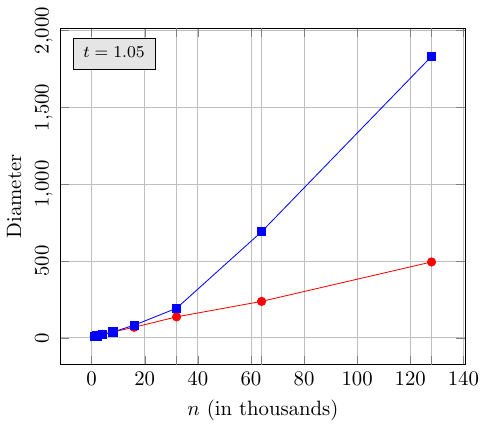}
		\includegraphics[scale=0.85]{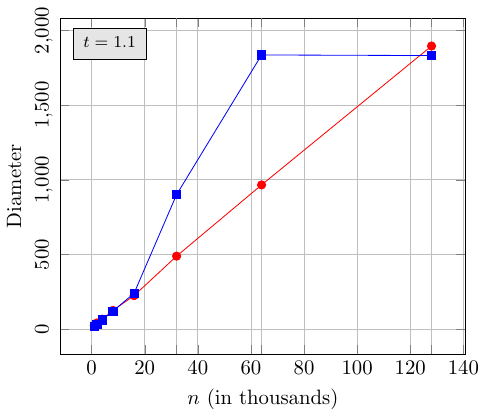}
		\includegraphics[scale=0.85]{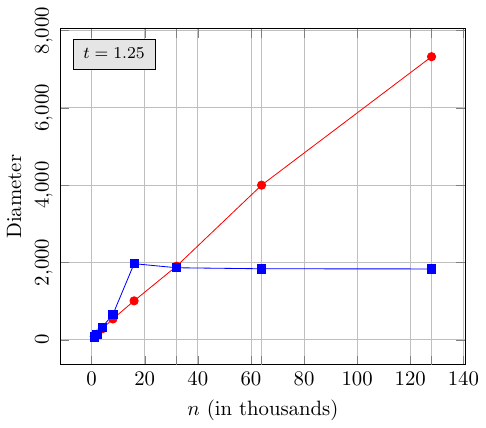}
		\includegraphics[scale=0.85]{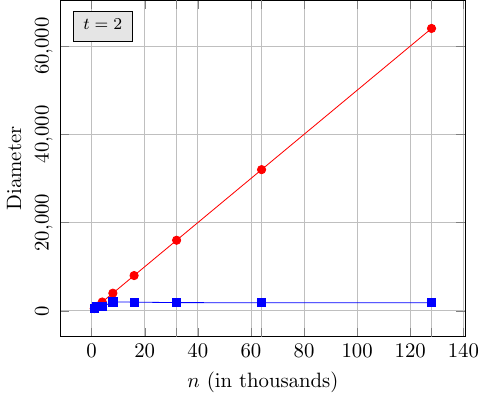}
		\vspace{-10pt}
		\caption{Diameter comparisons for the \texttt{convex} distribution.}
		\label{dia:co}
	\end{figure}
	
				\begin{figure}
		\centering
		\includegraphics[scale=0.85]{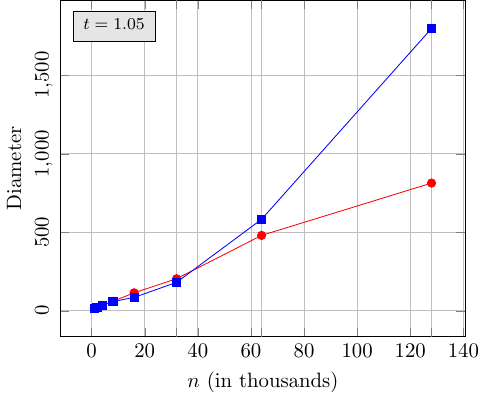}
		\includegraphics[scale=0.85]{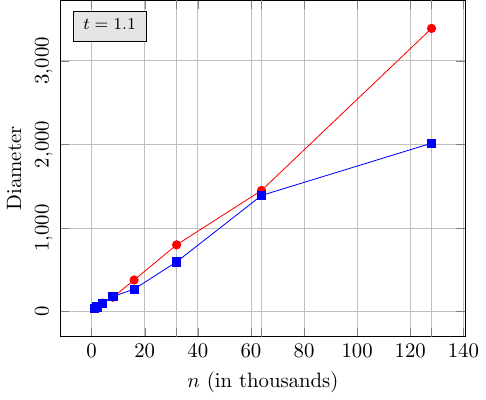}
		\includegraphics[scale=0.85]{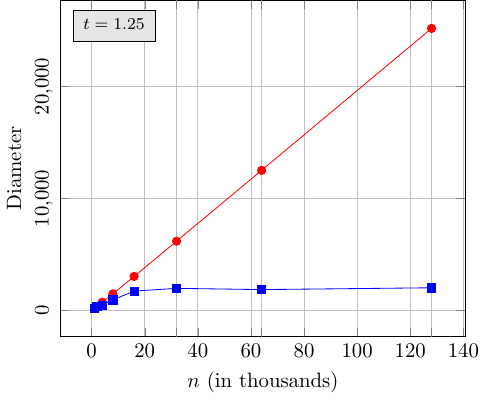}
		\includegraphics[scale=0.85]{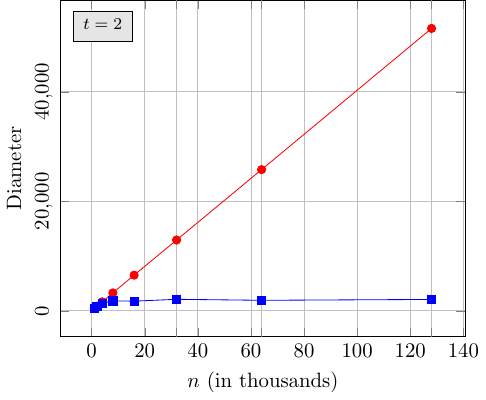}
		\vspace{-10pt}
		\caption{Diameter comparisons for the \texttt{spokes} distribution.}
		\label{dia:sp}
	\end{figure}
	\begin{figure}[ht]
		\centering
		\includegraphics[scale=0.85]{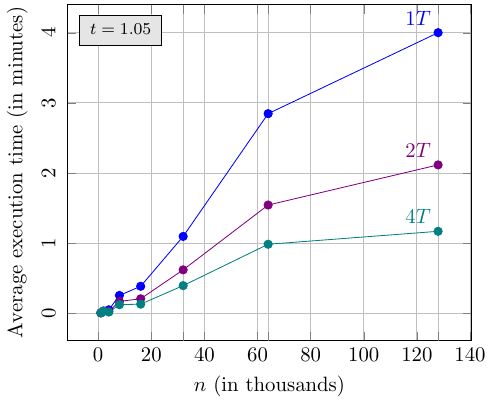}
		\includegraphics[scale=0.85]{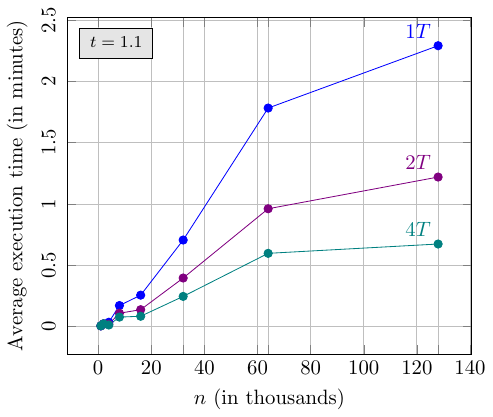}
		\includegraphics[scale=0.85]{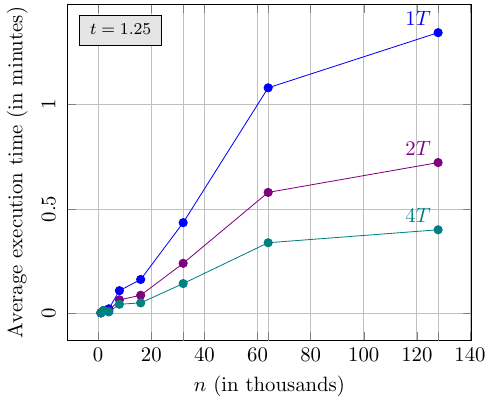}
		\includegraphics[scale=0.85]{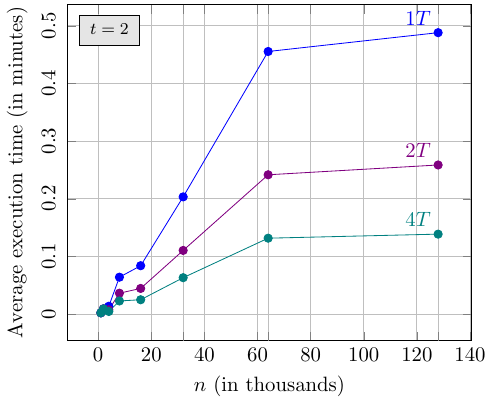}
		\vspace{-10pt}
		\caption{Multithreaded runtimes for the  \texttt{uni-square} distribution; $T$ stands for thread.}
		\label{multithread:us}
	\end{figure}
	
	\begin{figure}[ht]
	\centering
	\includegraphics[scale=0.85]{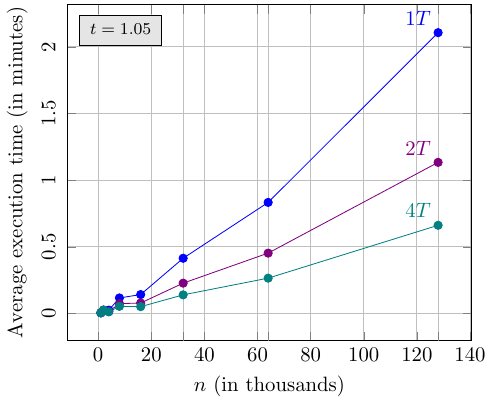}
	\includegraphics[scale=0.85]{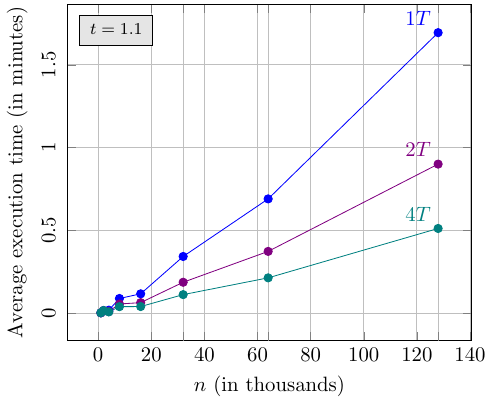}
	\includegraphics[scale=0.85]{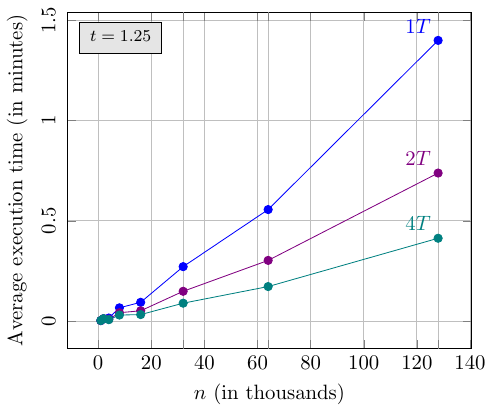}
	\includegraphics[scale=0.85]{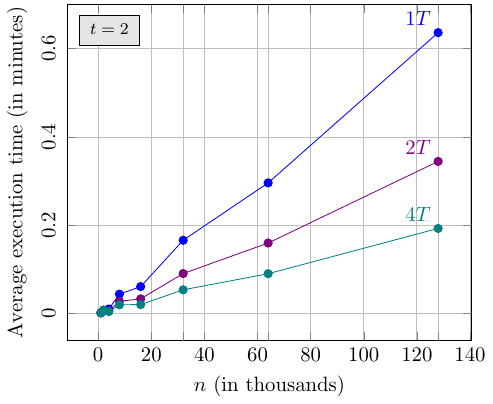}
	\vspace{-10pt}
	\caption{Multithreaded runtimes for the  \texttt{normal-clustered} distribution; $T$ stands for thread.}
	\label{multithread:nc}
\end{figure}

\begin{figure}[ht]
\centering
\includegraphics[scale=0.85]{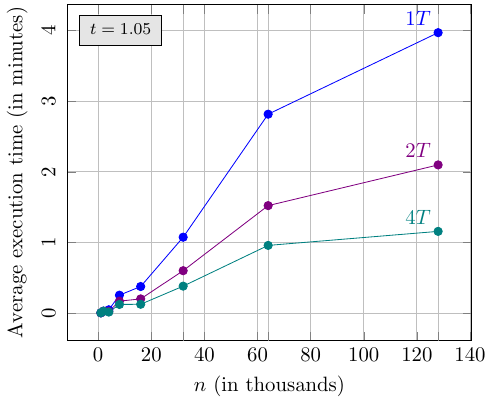}
\includegraphics[scale=0.85]{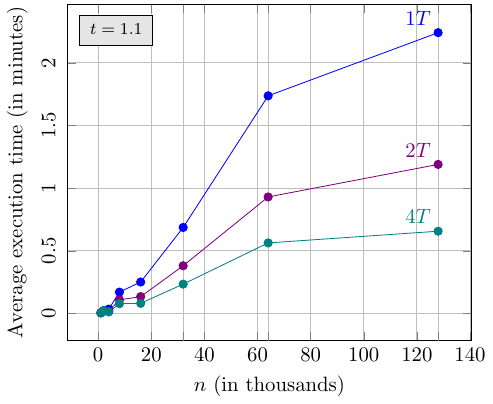}
\includegraphics[scale=0.85]{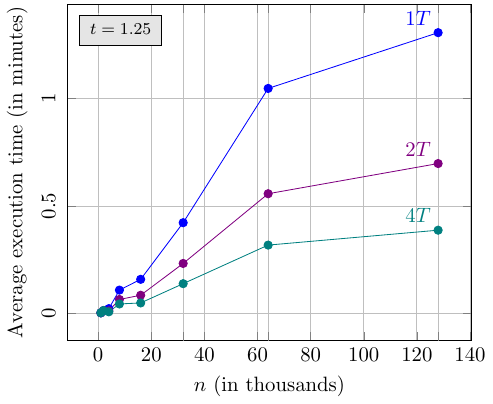}
\includegraphics[scale=0.85]{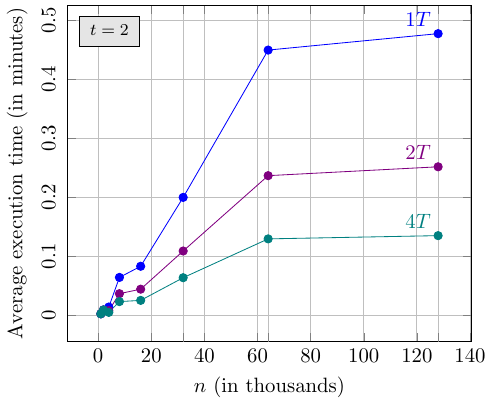}
\vspace{-10pt}
\caption{Multithreaded runtimes for the  \texttt{grid-random} distribution; $T$ stands for thread.}
\label{multithread:gr}
\end{figure}

\begin{figure}[ht]
	\centering
	\includegraphics[scale=0.85]{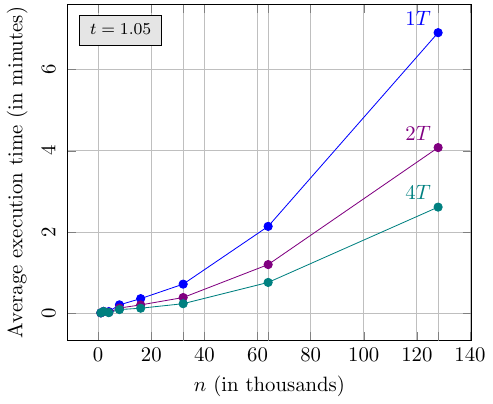}
	\includegraphics[scale=0.85]{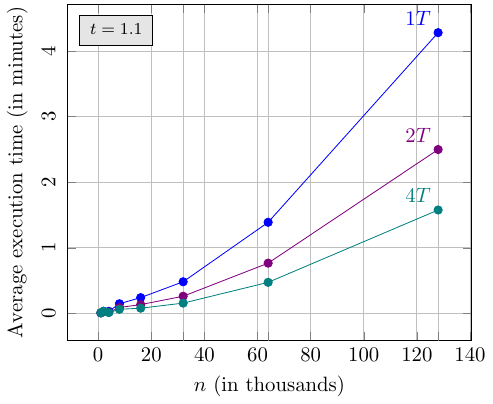}
	\includegraphics[scale=0.85]{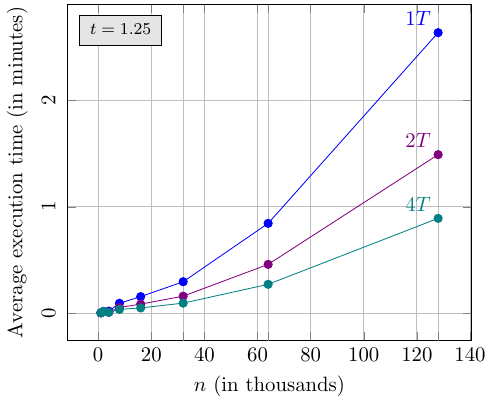}
	\includegraphics[scale=0.85]{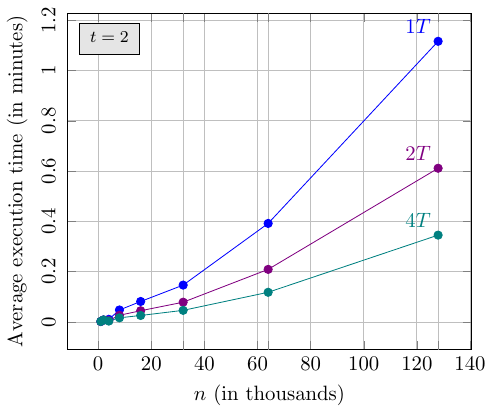}
	\vspace{-10pt}
	\caption{Multithreaded runtimes for the  \texttt{annulus} distribution; $T$ stands for thread.}
	\label{multithread:ann}
\end{figure}

\newpage
\clearpage

\begin{figure}[ht]

	\centering
	\includegraphics[scale=0.85]{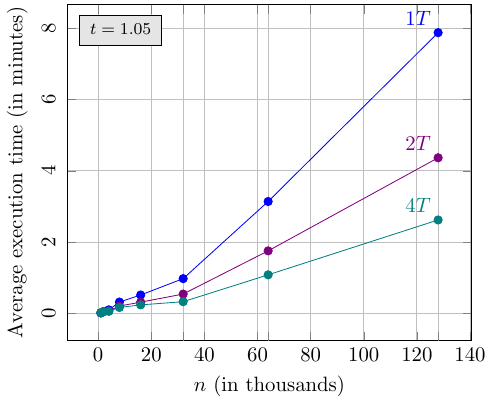}
	\includegraphics[scale=0.85]{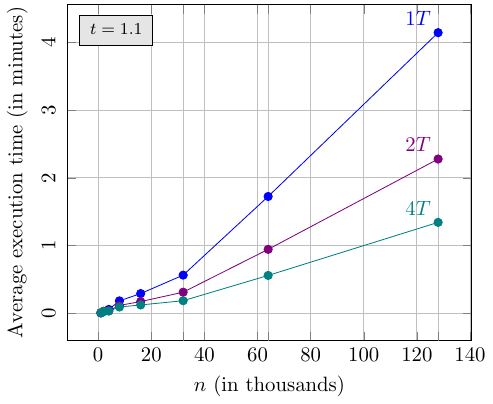}
	\includegraphics[scale=0.85]{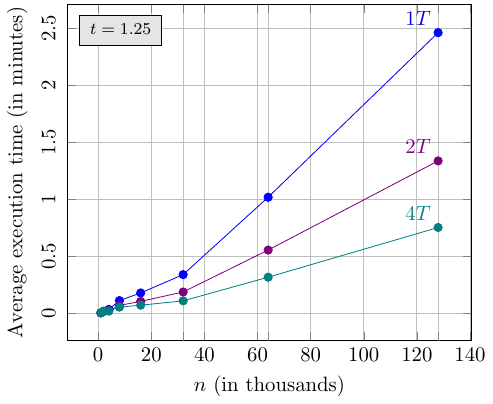}
	\includegraphics[scale=0.85]{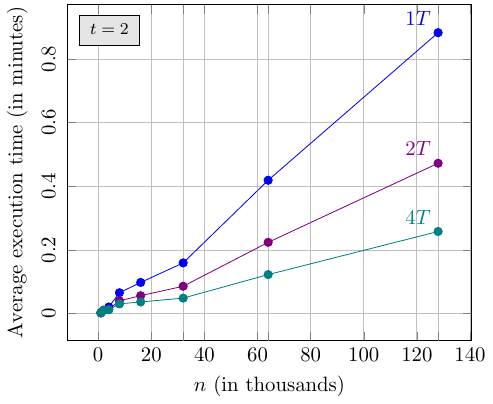}
	\vspace{-10pt}
	\caption{Multithreaded runtimes for the  \texttt{galaxy} distribution; $T$ stands for thread.}
	\label{multithread:ga}
\end{figure}

\begin{figure}[ht]

	\centering
	\includegraphics[scale=0.85]{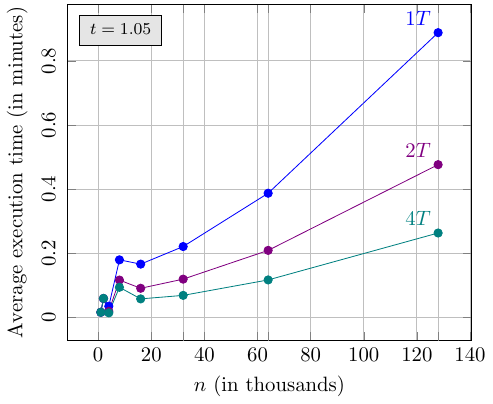}
	\includegraphics[scale=0.85]{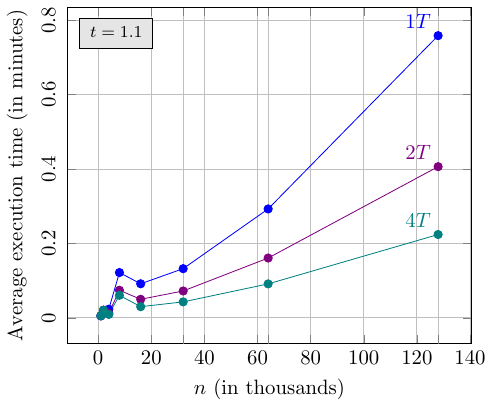}
	\includegraphics[scale=0.85]{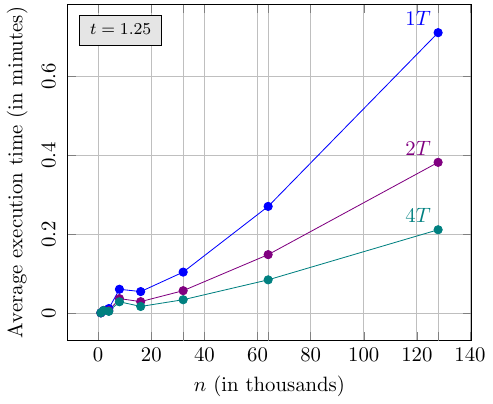}
	\includegraphics[scale=0.85]{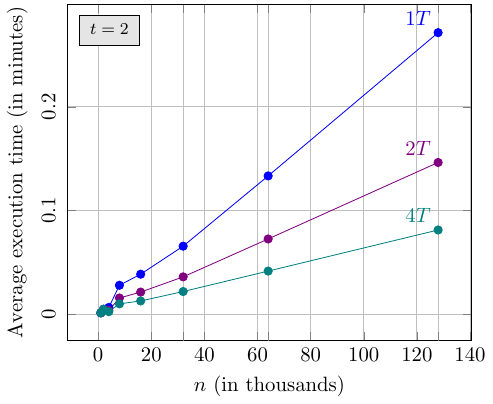}
	\vspace{-10pt}
	\caption{Multithreaded runtimes for the  \texttt{convex} distribution; $T$ stands for thread.}
	\label{multithread:co}
\end{figure}

\begin{figure}[ht]

	\centering
	\includegraphics[scale=0.85]{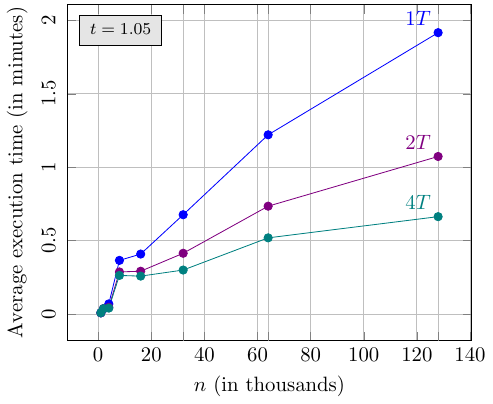}
	\includegraphics[scale=0.85]{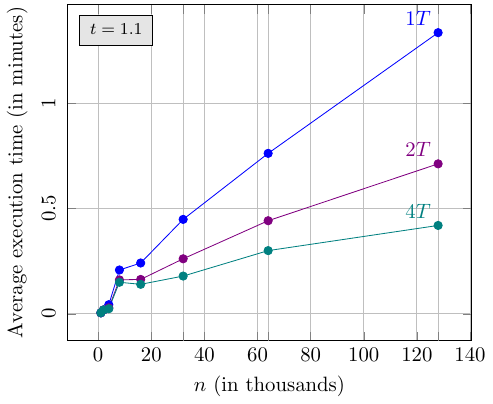}
	\includegraphics[scale=0.85]{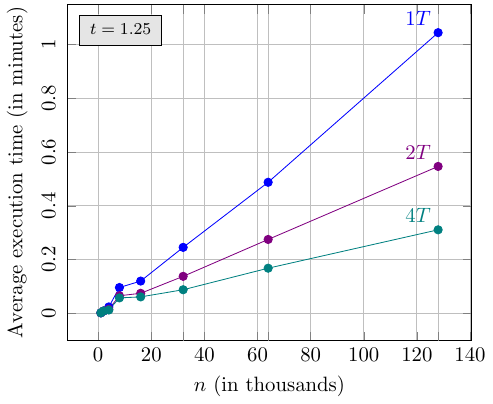}
	\includegraphics[scale=0.85]{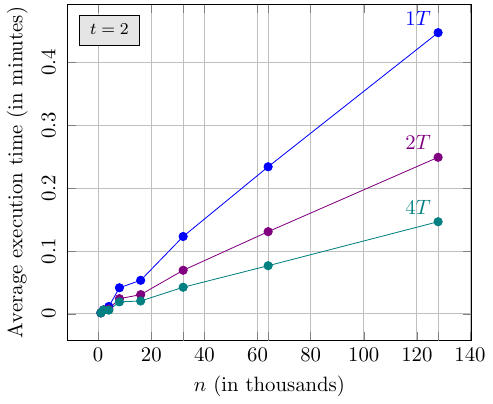}
	\vspace{-10pt}
	\caption{Multithreaded runtimes for the  \texttt{spokes} distribution; $T$ stands for thread.}
	\label{multithread:sp}
\end{figure}


\section{Why is our algorithm fast in practice?}
\label{sec:argue}


Section~\ref{sec:exp} provides evidence that
\textsc{Fast-Sparse-Spanner} behaves like a near-linear-time
algorithm in practice, suggesting that the average-case time
complexity is nearly linear for most distributions.
The running time of \textsc{Fast-Sparse-Spanner} (see
Theorem~\ref{thm:time}) is $O(n\log n + n (d^2+f(n) ))$, where
$O(n\log n)$ is the time taken to construct the WSPD-spanner $W$, $d$
is the depth of the quad tree $T$, and $f(n)$ is the time taken to
compute a $t$-path between two vertices in $H$. In the worst-case
scenario, $f(n) = O(n\log n)$ since $A^*$ is a modified version of
Dijkstra, thus making the overall worst-case time complexity to be
$O(n^2\log n)$ since there are $O(n)$ mergings
(Corollary~\ref{cor:mergings}).
We argue that in the case of uniform distributions (e.g.,
\texttt{uni-square}), $A^*$ is likely to be fast. Our intuition is
based on the observation that neighboring leaves of the quad-tree are
likely to be of almost the same size for the uniform distribution,
making the dual graph of small degree, thus making the number of
vertices visited by $A^*$ within $h$ hops to be small. Thus, on the
average, we expect $f(n) = O(1)$. We conjecture that the average-case
time complexity of \textsc{Fast-Sparse-Spanner} is $O(n\log n)$ with
linear additive terms. 

Consistent with out intuition above, 
in our experiments, we observed that the construction of $W$ takes a negligible fraction of the total runtime. For instance, on a $1M$-element \texttt{uni-square} pointset and $t=1.1$, the construction of $W$ took just $35$ milliseconds, and the whole spanner construction took around $55$ minutes (Fig~\ref{fig:1million}). 
Next, we always found that $d$ was never more than $20$ (quad-trees for \texttt{galaxy} pointsets had the highest depths in our experiments). As a result, the term $d^2n$ tends to be linear in practice. Further, we found that both $A^*$ and \textsc{Greedy-Path} always explored a low number of vertices in $H$ (much less than $n$). For the same pointset, on average, \textsc{Greedy-Path} explored approximately $35$ and $417$ vertices in steps $4$ and $5$, respectively. $A^*$ explored approximately $38$ and $2210$ vertices on average in the two steps, respectively. The early terminations were possible owing to the long WSPD edges placed by the algorithm. Further, we noticed that \textsc{Greedy-Path} was successful $\approx 83.75\%$ of the times in finding $t$-paths. As a result, the number of $A^*$ calls was much less compared to the number of \textsc{Greedy-path} calls. 
It resulted in fast $t$-path computations inside \textsc{Greedy-Merge} and \textsc{Greedy-Merge-Light}. Consequently, $f(n)$ was slow-growing in practice. The above observations account for the linear runtime behavior of \textsc{Fast-Sparse-Spanner}.

\section{Conclusions}

Our experiments show that \textsc{Fast-Sparse-Spanner} is remarkably
faster than \textsc{Bucketing} and uses substantially less memory in
most cases. The spanners produced by  \textsc{Fast-Sparse-Spanner} were found to have near-greedy average-degree, and low diameter. 
Surprisingly, in our rigorous
testing, we found that the produced spanners always had the desired stretch-factors. It would be interesting to extend the presented ideas to higher dimensions for future work. 

\textit{Acknowledgment.} We are grateful to Kevin Buchin, one of the authors of the \textsc{Bucketing} algorithm~\cite{alewijnse2017distribution}, for generously sharing their \textsc{Bucketing} code with us and communicating over email.

	\bibliographystyle{acm}
	\bibliography{main}

\end{document}